\documentclass[conference]{IEEEtran}
\IEEEoverridecommandlockouts
\usepackage[T1]{fontenc}

\usepackage{cite}
\usepackage{amsmath,amssymb,amsfonts}
\usepackage{algorithmic}
\usepackage{graphicx}
\usepackage{textcomp}
\usepackage{xcolor}
\usepackage{hyperref} 
\usepackage{braket}
\usepackage{subcaption}
\usepackage{cite}
\usepackage{amsmath,amssymb,amsfonts}
\usepackage{algorithm}
\usepackage{algorithmic}
\usepackage{textcomp}
\usepackage{mathtools,amsthm}
\def\BibTeX{{\rm B\kern-.05em{\sc i\kern-.025em b}\kern-.08em
    T\kern-.1667em\lower.7ex\hbox{E}\kern-.125emX}}
       \newtheorem{thm}{Theorem}
\def\BibTeX{{\rm B\kern-.05em{\sc i\kern-.025em b}\kern-.08em
    T\kern-.1667em\lower.7ex\hbox{E}\kern-.125emX}}
    
\def\BibTeX{{\rm B\kern-.05em{\sc i\kern-.025em b}\kern-.08em
    T\kern-.1667em\lower.7ex\hbox{E}\kern-.125emX}}
\begin{document}

\title{The Art of Avoiding Constraints: A Penalty-free Approach to 
Constrained Combinatorial Optimization with QAOA
\thanks{D.~Lykov and Y.~Alexeev's current affiliation is: NVIDIA Corporation, Santa Clara, CA, USA. This research was supported in part by a National Sciences and Engineering Research Council (NSERC) of Canada Collaborative Research and Training Experience (CREATE) grant on Quantum Computing, NSERC Alliance Consortium Grant entitled Quantum Software Consortium -- Exploring Distributed Quantum Solutions for Canada (QSC), and NSERC Alliance grant on Quantum Computing for Optimal Mobility. 
\\Corresponding author: pangara@uvic.ca}
}

\author{\IEEEauthorblockN{Prashanti Priya Angara, Ulrike Stege, Hausi M\"uller}
\IEEEauthorblockA{
\textit{University of Victoria}\\
Victoria, BC, Canada} 
\and
\IEEEauthorblockN{Danylo Lykov, Yuri Alexeev}
\IEEEauthorblockA{
\textit{Argonne National Laboratory (ANL)}\\
 Lemont, IL, USA}}

\maketitle

\begin{abstract}

The quantum approximate optimization algorithm (QAOA) is designed to determine optimum and near optimum solutions of quadratic (and higher order) unconstrained binary optimization (QUBO or HUBO) problems, which in turn accurately model \textit{unconstrained} combinatorial optimization problems. 
While the solution space of an unconstrained combinatorial optimization problem consists of all possible combinations of the objective function's decision variables, for a \textit{constrained} combinatorial optimization problem, the solution space consists only of solutions that satisfy the problem constraints. 
While solving such a QUBO problem optimally results in an optimal solution to the underlying constrained problem, setting the right penalty parameters is crucial. Moreover, finding suitable penalties that ensure near-optimal solutions is also challenging. 

In this article, we introduce our innovative profit-relaxation framework to solve constrained combinatorial optimization problems. Our effective hybrid approach transforms a selected constrained combinatorial optimization problem into an unconstrained \textit{profit} problem, such that the solution spaces of the two problems are interrelated with respect to their solution qualities. This allows us to determine optimal and near-optimal solutions for the constrained problem.

We use three NP-hard problems---\textit{\textsc{Minimum Vertex Cover}},\textit{\textsc{Maximum Independent Set}}, and \textit{\textsc{Maximum Clique}}---to demonstrate the feasibility and effectiveness of our approach. We conducted a detailed performance evaluation of our profit-relaxation framework on two different platforms: Xanadu PennyLane and Argonne QTensor Quantum Circuit Simulator. 

\end{abstract}
\begin{IEEEkeywords}
Quantum computing, QAOA, hybrid quantum-classical algorithms, solving constrained optimization problems, penalty-free QUBO, problem transformation, vertex cover, independent set, clique, profit cover, profit problem, quantum simulations, Xanadu PennyLane, Argonne QTensor Quantum Circuit Simulator
\end{IEEEkeywords}
\section{Introduction}
\label{sec:introduction}
A promising path forward in computing involves combining quantum and classical computing approaches. Quantum processing units (QPUs) will play a major role in accelerating algorithms in optimization, simulation, and machine learning applications beyond what classical computers can achieve on their own, similar to how graphical processing units (GPUs) have become integral to general-purpose computing~\cite{NVIDIAcudaq2024}. 

The effective integration of classical and quantum hardware and software architectures will play a crucial role in achieving quantum utility~\cite{qutility23} and quantum advantage~\cite{shaydulin2024evidence}. 
Variational quantum algorithms (VQAs) are the result of one of the most significant developments in hybrid quantum-classical techniques. These techniques can be used to solve practical real-world challenges~\cite{farhi2014quantum, peruzzo2014variational}such as NP-hard problems. The Quantum Approximate Optimization Algorithm (QAOA)~\cite{farhi2014quantum} is a variational quantum algorithm specifically designed to solve \textit{unconstrained} combinatorial optimization problems $P_U$'s. Solving $P_U$'s optimization problems involves finding a best solution from a large (but finite) set of possibilities (e.g., from all possible subsets of elements of the problem input). Note that the solution space of such $P_U$ has only feasible solutions, and can be described as QUBO (Quadratic Unconstrained Binary Optimization) or HUBO (Higher-order Unconstrained Binary Optimization) problems. 

In contrast, \textit{constrained} combinatorial optimization problems $P_C$'s refer to those combinatorial optimization problems that consist of constraints limiting the set of feasible solutions (or the solution space).

Some of the key methods for modeling constraints in a QUBO or HUBO framework include:
\begin{enumerate}
    \item introducing \textit{penalty parameters} to discourage solutions that violate the constraints during the optimization process~\cite{lucas2014ising}, or
    \item designing QAOA circuits that operate within a \textit{feasible subspace}~\cite{hadfield2019quantum} only.
\end{enumerate}

\noindent Both methods come with their specific challenges: method 1.~requires the determination of (input specific) penalty parameters and allows infeasible solutions (thereby making the solution space larger) while method 2.~requires more complex quantum circuitry to restrict the solution space to only feasible solutions.

In this paper, we present our findings exploring effective methods for solving \textit {constrained} combinatorial optimization problems with QAOA by formulating \textit{problem relaxations} of the constrained problems. Due to the unconstrained nature of the problem relaxations, neither penalty parameters nor complex quantum circuitry are required to enforce constraints. We define a problem relaxation as a variant of the constrained problem, which shares underlying structures and characteristics with the original problem while enforcing relationships between the solution landscapes. Our framework using problem relaxations allows to not only effectively support the effective finding of optimal solutions to the constrained problem, but also to capture the near-optimal solutions. There is value in identifying near-optimal solutions to optimization problems, rather than just the optimal solutions, in situations where trade-offs between accuracy, efficiency, and practicality matters~\cite{zaborniak2024discretequadraticmodelqubo, mohseni2023sampling}. Near-optimal solutions are valuable when developing scalable quantum computing optimization methods in general, and in particular when using methods such as QAOA due to its inexact and probabilistic nature. Leveraging the relationships between the solution landscapes of the constrained problem and the problem relaxation, we apply classical post-processing techniques to the solution obtained by performing QAOA on the problem relaxation. This approach enables us to solve the original constrained problem.  
Fig.~\ref{fig:cp-pu-flowchart} shows the workflow of solving a constrained optimization problem, $P_C$. Our framework involves relaxing the constraints of $P_C$ to obtain a problem relaxation $P_U$ of $P_C$; $P_U$ is an unconstrained optimization problem. It allows solving $P_C$ by first solving its $P_U$, and then uses an efficient classical post-processing procedure applied to obtained solutions for $P_U$, to obtain solutions to $P_C$ while maintaining the solution quality.

\begin{figure}
    \centering
    \includegraphics[width=1\linewidth]{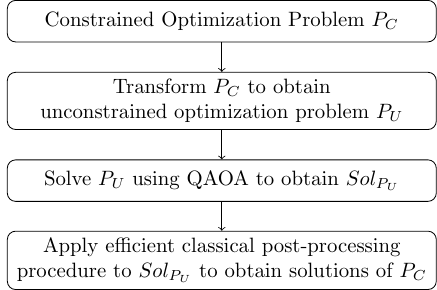}
    \caption{Workflow of solving constrained optimization problems using the our problem relaxation framework}
    \label{fig:cp-pu-flowchart}
\end{figure}

By evaluating QAOA's performance on problem relaxations, we  contribute to a broader understanding of how classical post-processing can be effectively harnessed in combination with hybrid quantum-classical techniques to obtain better solution strategies for both unconstrained and constrained combinatorial optimization problems. 

The main contributions of this paper include: (1) a problem relaxation framework applied to a set of related constrained combinatorial optimization problems from graph theory, all of which are NP-hard~\cite{garey1979computers}: \textit{\textsc{Minimum Vertex Cover}}, \textit{\textsc{Maximum Independent Set}}, and \textit{\textsc{Maximum Clique}}, (2) a performance evaluation of our problem relaxation framework using the full statevector simulators of Xanadu PennyLane,  (3) an assessment of scalability of our framework on large problems using the Argonne QTensor tensor network simulator,

and (4) an extension of the Argonne QTensor framework to include formulations of constrained combinatorial optimization problems.

The paper is organized as follows. Section~\ref{sec:constrained_problems} introduces our use cases that we demonstrate our framework on, i.e., the constrained combinatorial optimization problems {\sc Minimum Vertex Cover}, {\sc Maximum Independent Set} and {\sc Maximum Clique}.  Section~\ref{sec:background} reviews the relevant background and literature for this research. It offers a comprehensive analysis of QAOA techniques, outlines its fundamental principles, and explores recent developments for tackling constrained optimization problems. 
Section~\ref{sec:penalty-study} reviews existing approaches to formulating QUBOs for the aforementioned constrained combinatorial optimization problems. 
Section~\ref{sec:relaxations} introduces the problem relaxations and their QUBOs: {\sc Maximum Profit Cover} of {\sc Minimum Vertex Cover}, {\sc Maximum Profit Independence} of {\sc Maximum Independent Set}, and, {\sc Maximum Profit Clique} of {\sc Maximum Clique}. Section~\ref{sec:methodology} describes the experimental setup, including parameters and graph data sets employed for assessments on the Xanadu PennyLane and Argonne QTensor Simulator platforms. 
Section~\ref{sec:results} describes the evaluation metrics used and presents our results.
Section~\ref{sec:discussion} discusses the findings and results of our investigations. For smaller input graphs, we provide a full state vector simulation using PennyLane. For larger graphs, we showcase the scalability of our approach using the Argonne QTensor Simulator. Section~\ref{sec:future} concludes the paper, and additional background and results are provided in the Appendix (Section~\ref{sec:appendix}).
\section{Constrained Combinatorial Optimization Problems of Study}\label{sec:constrained_problems}

Although there is no expectation of solving NP-hard problems in polynomial time quantumly, intractable combinatorial optimization problems are excellent candidates to demonstrate practical quantum utility and advantage~\cite{brooks2019beyond, farhi2016quantum}. 
We now formally introduce the problems of our study, which are known to be NP-hard~\cite{garey1979computers}. All graphs considered in this article are undirected and simple.
\subsection*{\textit{\textsc{Minimum Vertex Cover}}}
We define the problem \textit{\textsc{Minimum Vertex Cover}} (\textit{\textsc{MinVC}}) for a graph $G = (V,E)$, where $\textit{VC}\subseteq V$, as follows.
\begin{align*}
&\text{Minimize:} \quad |\textit{VC}| \\
&\text{Subject to:} \quad \mbox{for all edges}~uv \in E, \quad u \in \textit{VC} \ \text{or} \ v \in \textit{VC}
\end{align*}

Any $\textit{VC} \subseteq V$ that satisfies this constraint is a vertex cover. Any $\textit{VC}$ that optimally satisfies the objective function is a \textit{ minimum vertex cover}.

\subsection*{\textit{\textsc{Maximum Independent Set}}}
We define the problem \textit{\textsc{Maximum Independent Set}} (\textit{\textsc{MaxIS}}) for a graph $G = (V,E)$, where $\textit{IS}\subseteq V$, as follows.
\begin{align*}
&\text{Maximize:} \quad |\textit{IS}| \\
&\text{Subject to:} \quad  \mbox{for all vertices}~u,v \in \textit{IS}, \quad uv \notin E
\end{align*}
We denote any subset of $V$ that satisfies the constraint an \textit{independent set}. Any largest independent set is also referred to as \textit{maximum independent set}.

\subsection*{\textit{\textsc{Maximum Clique}}}
We define the problem \textit{\textsc{Maximum Clique}} (\textit{\textsc{MaxCl}}) for a graph $G = (V,E)$, where $\textit{Cl}\subseteq V$, as follows.
\begin{align*}
&\text{Maximize:} \quad |\textit{Cl}| \\
&\text{Subject to:} \quad \mbox{for all vertices}~u,v \in \textit{Cl}, \quad uv \in E
\end{align*}
We denote any subset of $V$ that satisfies the constraint a \textit{clique}. Any largest clique is also referred to as \textit{maximum maximum clique}.
\begin{figure*}[!th]
    \centering
    \includegraphics[width=1\linewidth]{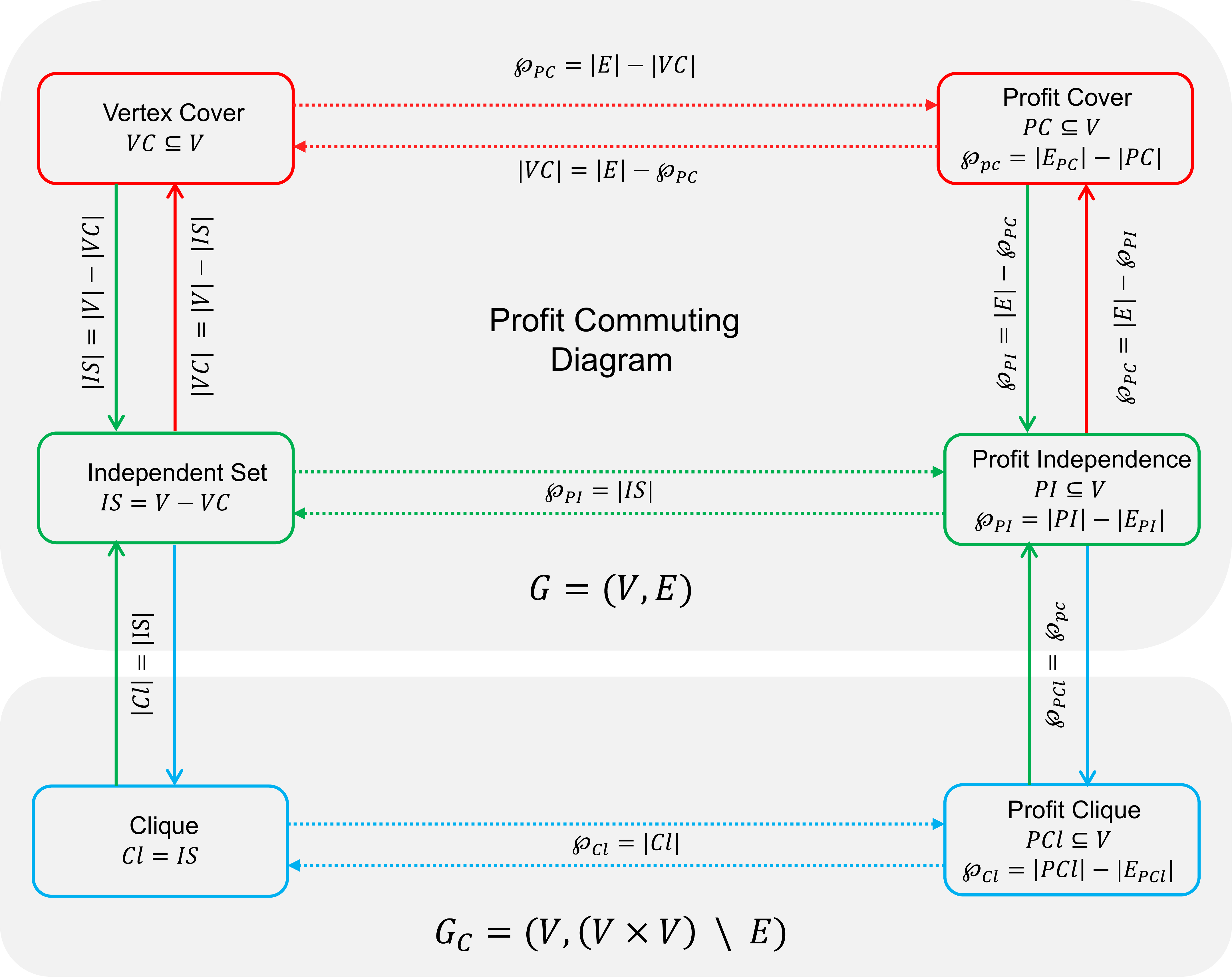}
    \caption{\textbf{Profit Commuting Diagram:} Relationship between constrained and profit (unconstrained) variants of vertex cover, independent set, and clique. Note that $\textit{VC}/\textit{PC}$ and $\textit{IS}/\textit{PI}$ correspond to the vertex/profit cover and independent set/profit independence of graph $G$ whereas $\textit{Cl}/\textit{PCl}$ refer to Clique/Profit Clique for the complement graph $G_c$. }
    \label{fig:relationship}
\end{figure*}
\subsection*{Relationships between \textit{\textsc{MinVC}}, \textit{\textsc{MaxIS}} and \textit{\textsc{MaxCl}}}
\label{sec:relationships}
All three problems are known to be NP-hard. Reviewing the relationship between the three problems shows that it permits an efficient translation of solutions for instances between the problems (cf. Fig.~\ref{fig:relationship}). Therefore, no matter which of \textit{\textsc{MinVC}}, \textit{\textsc{MaxIS}} and \textit{\textsc{MaxCl}} is the problem under consideration, it is of interest to explore algorithmic approaches for the three problems.

\textit{\textsc{MinVC}}, \textit{\textsc{MaxIS}} and \textit{\textsc{MaxCl}} are equivalent in the sense that: for any graph $G = (V,E)$ and its \textit{complement graph} $G_c = (V,E_c)$ where $E_c = (V\times V)\setminus E$, $\textit{VC} \subseteq V$ is a vertex cover for $G$ if and only if $V\setminus \textit{VC}$ is an independent set for $G$ if and only if $V\setminus \textit{VC}$ is a clique for $G_c$~\cite{garey1979computers}.\footnote{The "$\setminus$" symbol denotes the setminus operation.}
\section{Background}\label{sec:background}
This section outlines the background relevant for our paper. A discussion on the relevant classical results can be found in the Appendix (cf. Section~\ref{sec:classical-results}). 

\subsection{Variational Quantum Algorithms}
Significant advancements have been made in creating near-term hybrid algorithms that operate without the need for fault-tolerant machines~\cite{cerezo2021variational}, with VQAs being among the most notable examples~\cite{mcclean2016theory}. 
To understand our framework, we first describe how optimization problems are modeled within the QAOA framework.

In the realm of quantum mechanics, Hamiltonian operators are Hermitian matrices that describe the total energy of a quantum system. Its eigenvalues reveal the energy values that can be observed within the system. Specifically, the lowest eigenvalues correspond to its ground-state energies. In general, determining these ground-state energies for a given Hamiltonian is NP-hard~\cite{barahona1982computational}. 

The variational paradigm is a fundamental concept used to estimate the ground states of quantum systems~\cite{griffithsqmec1995}. A VQA is a computational method that leverages both classical and quantum resources to estimate the solution of an optimization problem. The goal is to find the best solution from a set of possible options using integrated quantum-classical techniques. VQAs use parameterized quantum circuits to estimate the optimum of a function and uses a classical optimizer to adjust the QAOA parameters iteratively. To achieve reliable results with a quantum computer, the depth and design of these parameterized quantum circuits must be tailored to the quantum computing hardware and the software architecture~\cite{kandala2017hardware, ravi2022cafqa}, and to specific problem to be solved~\cite{anselmetti2021local, hadfield2019quantum}.  The most famous VQA techniques are the variational quantum eigensolver (VQE)~\cite{peruzzo2014variational} and the quantum approximate optimization algorithm (QAOA)~\cite{farhi2014quantum}. These techniques are workhorses of quantum computing that integrate classical and quantum resources to solve various problems in quantum chemistry~\cite{o2016scalable, RevModPhys.92.015003}, machine learning~\cite{larose2019variational, havlivcek2019supervised}, and combinatorial optimization~\cite{harrigan2021quantum, lotshaw2022scaling}.
VQE and QAOA are hybrid quantum-classical techniques that can be used to estimate ground-state energies and corresponding ground states of Hamiltonians.
In the case of combinatorial optimization, substantial research has been devoted to exploring the potential of employing quantum adiabatic algorithms (QAA) on quantum annealers to  address challenges when solving NP-hard  optimization problems---with impressive developments and achievements by D-Wave Systems~\cite{johnson2011quantum, king2024computational}. The QAOA objective and structure is similar to the QAA technique---both seek to identify the lowest eigenvalue and its associated eigenvector. Moreover, the utilization of two Hamiltonians, a \textit{simple} Hamiltonian whose ground state is known, and a \textit{problem} or \textit{cost} Hamiltonian whose ground state we want to find, in QAA mirrors that of QAOA. However, QAA may be computationally expensive in the time it takes to evolve and is non-monotonic in its success probability. With optimal parameters, QAOA performance improves with the circuit depth (i.e., with the number of QAOA layers)~\cite{farhi2014quantum}. 

\subsection{Quantum Approximate Optimization Algorithm}
\noindent QAOA, introduced by Farhi et al.~in 2014, is a hybrid quantum-classical algorithm that mimics adiabatic quantum computation on near-term gate-based quantum computers aiming to solve combinatorial optimization problems. Farhi's QAOA, often referred to as \textit{vanilla} QAOA, has inspired numerous variants; selected variants are described in the Appendix (cf. Section~\ref{sec:related-work}).
The key building blocks of a vanilla QAOA setup include:
\begin{enumerate}
    \item An \textbf{initial state} $\vert \psi_0 \rangle$ in an equal superposition of all computational basis states, prepared by applying Hadamard gates to qubits initialized in state $\ket{0}^{\otimes n}$.
    \item \textbf{An ansatz} consisting of:
    \begin{itemize}
            
        \item \textbf{A cost or phase separator unitary} that encodes the cost function $C(\vec{x})$ of the optimization problem being solved by introducing relative phase shifts between the states based on their cost function values. For a minimization problem, the objective is to identify an assignment of variables that yields the smallest possible value of the cost function $C(\vec{x})$. This unitary is generated by the \textit{cost} Hamiltonian ($\hat{H}_C$) and for combinatorial optimization problems, it typically consists of sums of Pauli-$Z$ and $ZZ$ terms. 
        \item A \textbf{mixing unitary} that anti-commutes with the cost unitary and enables transitions between the computational basis states. This unitary is generated by the \textit{mixer} Hamiltonian ($\hat{H}_M$), commonly chosen as a sum of Pauli-$X$ operators.
         \item The number of \textbf{layers $p$}, i.e., the number of times the cost and mixer unitaries are applied alternatively. With increasing alternating layers, QAOA progressively mimics QAA through Trotterization~\cite{farhi2014quantum}.
        \item  \textbf{Parameter vectors}, $\vec{\beta}$ and $\vec{\gamma}$, comprising variational parameters that control the cost and mixer unitaries, respectively, across the $p$ alternating layers. 
        \end{itemize}

    \item A \textbf{classical optimizer} responsible for finding the optimal values of the parameter vectors $\vec{\beta}$ and $\vec{\gamma}$ such that the expectation value of the cost function is minimized (for a minimization problem):
    $$
       \min_{\vec{\beta}, \vec{\gamma}} ~\langle \psi(\vec{\beta}, \vec{\gamma}) | \hat{H}_C | \psi(\vec{\beta}, \vec{\gamma}) \rangle
        $$
    where $|\psi(\vec{\beta}, \vec{\gamma}) \rangle$ is the state prepared by the alternating ansatz with $p$ layers of cost and mixer unitaries.
\end{enumerate}

\noindent The quantum part of QAOA's hybrid routine evaluates the objective function. It alternates between unitaries corresponding to cost Hamiltonian $\hat{H}_C$ and mixer Hamiltonian $\hat{H}_M$, $p$ times.
$$\vert \psi(\vec{\beta}, \vec{\gamma}) \rangle = \underbrace{U({\beta_p}) U({\gamma_p}) \cdots U({\beta_1}) U({\gamma_1})}_{p \; \text{times}} 
\vert \psi_0 \rangle$$

\noindent where $U(\boldsymbol{\beta}) = e^{-i \boldsymbol{\beta} \hat{H}_M}$ and $U(\boldsymbol{\gamma}) = e^{-i \boldsymbol{\gamma} \hat{H}_C}$ characterized by the parameters $(\beta, \gamma)$. The goal of QAOA is to determine the optimal parameters $(\boldsymbol{\beta}_{\text{opt}}, \boldsymbol{\gamma}_{\text{opt}})$ such that the quantum state $|\psi(\boldsymbol{\beta}_{\text{opt}}, \boldsymbol{\gamma}_{\text{opt}})\rangle$ encodes the solution to the problem. 
Classical optimizers that aim to determine optimal parameters include  L-BFGS, Adam, COBYLA, and Gradient Descent~\cite{pellowoptimizers2021}.  Choosing an appropriate classical optimizer is influenced by several factors, including the number of parameters, the problem, and the cost function landscape~\cite{MALAN2013148, bonet2023performance, sung2020using}.

\subsection{Constrained and Unconstrained Optimization Problems}

Quantum utility is the ability of a quantum computer to perform reliable computations at a scale beyond brute force classical computing methods that provide exact solutions to computational problems~\cite{ibm_quantum_utility}. NP-hard combinatorial optimization problems are a class of problems that have the potential to demonstrate quantum utility~\cite{qutility23}. 


A combinatorial optimization problem can be unconstrained or constrained. Unconstrained problems consider all combinations of binary variables as feasible solutions. The \textit{\textsc{MaxCut}} problem is an example of unconstrained optimization, where the objective is to partition a graph's vertices into two disjoint subsets maximizing the sum of weights of edges between them. Every variable assignment corresponds to a cut, making all assignments feasible solutions. The QUBO framework can be used to represent various combinatorial optimization problems in an unconstrained manner. Many \textit{unconstrained} problems can be represented as QUBO problems using quadratic and linear terms involving binary variables.
The Ising model is a mathematical model of ferromagnetism~\cite{peierls1936ising}, whose basic element, the \textit{spin}, can take one of two values, up or down, typically represented as +1 or -1. Spins are arranged on a lattice, which can be in one, two, or more dimensions. The Ising model provides a natural framework for formulating QUBO problems on a quantum computer using the transformation $x_i = (1 - s_i)/2$, where the spins $s_i \in \{1, -1\}$. This maps the QUBO problem to an Ising Hamiltonian minimization problem, which is used as the cost Hamiltonian in QAOA~\cite{lucas2014ising}. 

Many problems of practical interest are constrained optimization problems (i.e., not all assignments of binary variables are feasible). 
There are different approaches to embed such a problem into a QUBO. The most common approach is to use penalty terms to penalize infeasible solutions.
Penalty function methods constitute a simple approach to tackling constrained optimization problems by transforming them into unconstrained optimization problems, allowing the use of algorithms explicitly designed for unconstrained problems. With penalties, infeasible solutions ideally come at a (sufficiently high) cost, discouraging their exploration in the solution landscape. However, there are limitations to this approach, such as the need to choose appropriate penalty parameters and potentially convergence issues~\cite{smith1997penalty}. 

These limitations extend to QAOA when addressing constrained optimization problems. There is a relatively high probability of obtaining infeasible solutions that violate the constraints. Beyond incorporating penalty terms into the objective function, several other considerations arise for QAOA. These include determining the optimal number of alternating cost and mixer layers, selecting an appropriate classical optimizer, defining the optimal number of steps and the optimum step size for the optimizer, and addressing the issue of barren plateaus~\cite{safro2022, huembeli2021characterizing}.

These challenges collectively contribute to the difficulty of solving constrained combinatorial optimization problems using QAOA. The subsequent section outlines selected variants that tackle constrained optimization using QAOA.

\label{sec:penalty-study}
The QUBO problem provides a framework for representing various combinatorial optimization problems in an unconstrained manner. Many \textit{unconstrained} problems can be represented as QUBO problems using quadratic and linear terms involving binary variables. For constrained problems, penalty parameters are introduced to accommodate the constraints. Assuming a minimization routine, they add a cost to solutions that violate the constraints, steering the algorithm towards feasible solutions. Solving constrained problems with QAOA requires fine-tuning of penalty parameters in addition to optimizing the variational parameters, to be able to maneuver the cost function landscape. The choice and strength of penalty parameters can significantly impact QAOA performance.

The cost function for \textit{\textsc{MinVC}} in the form of a QUBO is defined as per the specifications outlined in~\cite{lucas2014ising}. Given $\textit{VC}\subseteq V$, and $v\in V$, let $x_v$ be a binary variable whose value is $1$ if $v$ is included in the vertex cover $\textit{VC}$ ($v\in \textit{VC}$), and $0$ otherwise. The constraint that every edge $uv\in E$ has at least one of its vertices in the subset $\textit{VC}$ is encoded by 
$$C_E^{VC}(\vec{x}) = A \sum_{uv\in E} (1 - x_u)(1-x_v).$$

Given edge $uv\in E$, if both $x_u$ and $x_v$ are assigned to be $0$, then $uv$ is uncovered and therefore $uv$ violates the constraint that $\textit{VC}$ is a vertex cover (and therefore adds a penalty of $A$ to the cost function). A trivial vertex cover arises when all $x_v \in V$ are set to $1$. The objective of minimizing the number of vertices in the subset leads to 
$$C_V^{VC}(\vec{x})= B \sum_{v} x_v.$$

The total cost for the \textit{\textsc{MinVC}} problem is
\begin{equation}
\label{eq:vcqubo}
C_{\text{VC}}(\vec{x}) = C_E^{VC}(\vec{x}) + C_V^{VC}(\vec{x}).
\end{equation}
Note that $A$ and $B$ are penalty parameters, with $0 < B < A$~\cite{lucas2014ising}.

Similarly, the cost function for \textit{\textsc{MaxIS}} is defined as follows. Given $S\subseteq V$, and $v\in V$, let $x_v$ be a binary variable whose value is $1$ if $v$ is included the independent set $S$ ($v\in S$), and $0$ otherwise. 

The objective of maximizing the number of vertices in the subset leads to 
$$C_V^{IS}(\vec{x})= B \sum_{v} x_v.$$

The constraint that there does not exist an edge $(u, v) \in E'$, with both  $u, v \in S$
$$C_E^{IS}(\vec{x}) = -A \sum_{uv\in E} x_ux_v.$$

Given edge $uv\in E$, if both $x_u$ and $x_v$ are assigned to be $1$, then the edge constraint is violated (and therefore adds a penalty of $-A$ to the cost function). A trivial independent set arises when any one single node in $x_v$ is set to $1$.

The total cost for the \textit{\textsc{MaxIS}} problem is
\begin{equation}
\label{eq:isqubo}
C_{\text{IS}}(\vec{x}) = C_E^{IS}(\vec{x}) + C_V^{IS}(\vec{x}).
\end{equation}
Note that this is formulated as a maximization problem. To convert it into a minimization problem, one can change the sign of the objective function, which allows us to maximize the original function indirectly.

Due to the relationship between independent set and clique, the best QUBO formulation for \textit{\textsc{MaxCl}} is derived due to its relationship with \textit{\textsc{MaxIS}}:\footnote{recall:  $E_c$ denotes the set of edges in the complement graph $G_c$ of $G$}
\begin{align*}
    C_{E_c}^{Cl}(\vec{x}) &= -A \sum_{uv\in (V \times V) \setminus E} x_ux_v \\
C_V^{Cl}(\vec{x}) &= B \sum_{v} x_v
\end{align*}
The total cost for  \textit{\textsc{MaxCl}}  is
\begin{equation}
\label{eq:mclqubo}
C_{\text{MCl}}(\vec{x}) = C_{E_c}^{Cl}(\vec{x}) + C_V^{Cl}(\vec{x}).
\end{equation}

For all three problems, penalty parameters $A$ and $B$, $0 < B < A$,  control the enforcement of constraints. We can consider this as one single penalty value $\lambda = A/B$. Having smaller values for $\lambda$ allows for a better exploration of the solution space and a smoother landscape by maintaining a balance between the cost and mixer Hamiltonians. However, it may not effectively enforce constraints, leading to infeasible solutions and slower convergence. Larger values of $\lambda$ will enforce constraints more effectively and lead to faster convergence, however, this may lead to an imbalance of the scales of the cost and mixer Hamiltonian which leads to an ineffective exploration of the solution space~\cite{roch2023effect}. The optimal penalty strength often requires careful tuning and varies depending on the specific problem instance.

\subsubsection*{Near-optimal solutions}
\begin{figure}[!th]
    \centering
    \begin{subfigure}[b]{0.30\textwidth} 
        \centering
        \includegraphics[width=\textwidth]{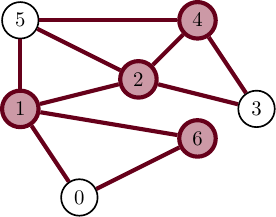}
        \caption{A feasible and optimal vertex cover}
        \label{fig:penalty-example}
    \end{subfigure}
    \hfill
    \begin{subfigure}[b]{0.30\textwidth} 
        \centering
        \includegraphics[width=\textwidth]{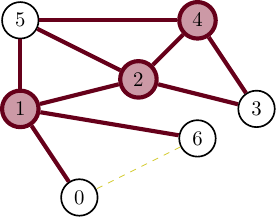}
        \caption{An infeasible vertex cover}
        \label{fig:penalty-infeasible}
    \end{subfigure}
    \caption{a) A vertex cover with a calculated cost of 8 (using $\lambda = 1.5$). b) The second-best solution, which is infeasible. For this particular graph, a higher penalty is needed to ensure vertex cover constraints are met.}
    \label{fig:penalty-demo}
\end{figure}

The penalty strength can affect the number of near-optimal solutions (i.e., solutions that are not optimal but may be of interest).
Consider Eq.~\ref{eq:vcqubo} for penalty parameters to be $A=3$ and $B=2$ (i.e.,~$\lambda=1.5$). The cost of the feasible and optimal vertex cover (in Fig.~\ref{fig:penalty-example}) is $8$. However, for this choice of penalty parameters, the cost of the second best solution is an infeasible solution---depicted in Fig.~\ref{fig:penalty-infeasible}---with a cost of 9. Notably, there exists no feasible vertex cover with a cost between 8 and 9. This is because adding any additional vertex to the optimal cover would increase the cost by at least 2, resulting in a minimum cost of 10 for the next best feasible solution. This shows that there exists an additional tier of challenges when it comes to satisfactorily represent near-optimal solutions using penalties. 
We offer an innovative profit-based approach to transform constrained optimization problems to QUBO without using penalty parameters. 

\section{Our profit-based approach to solving constrained optimization problems with QAOA}\label{sec:relaxations}

For each problem, we  state the constrained and profit versions, as well as their respective cost operators in the form of a QUBO, and their relationships. 
This section introduces our approach to penalty-free problem variants for solving the 
\textit{\textsc{MinVC}}, \textit{\textsc{MaxIS}}, and \textit{\textsc{MaxCl}} problems using QAOA. Our relaxation approach is based on the \textit{\textsc{Maximum Profit Cover}} problem, introduced in~\cite{stege2002}.  


Fig.~\ref{fig:relationship} below illustrates the connections between the constrained and profit (unconstrained) variants of Vertex Cover, Independent Set, and Clique for a given graph $G$. This diagram visualizes how these problems relate to one another and to their profit equivalents. In the subsequent sections, we provide formal and detailed descriptions of these relationships.
\subsection{\textit{\textsc{Maximum Profit Cover}}}
\label{subsec:pc}
Consider the \textit{\textsc{MinVC}} problem. Suppose we relax the requirement that the subset to be determined must be a vertex cover: If we consider the \textit{extent} to which this subset \textit{covers edges} and its \textit{closeness} to being a vertex cover (i.e., the more edge coverage the better), then we obtain the graph problem {\sc Maximum Profit Cover}~\cite{stege2002} or \textit{\textsc{MaxPC}}. For a subset $\textit{PC}\subseteq V$ in $G$, the number of edges covered by vertices from $\textit{PC}$, i.e., the number of edges that have at least one endpoint in $\textit{PC}$, is the \textit{gain}, and the number of vertices spent to cover these edges, i.e., the size of $\textit{PC}$, is considered the \textit{loss}. The profit of $\textit{PC} \subseteq V$ for a graph $G=(V,E)$ is then  defined as  
$\textit{profit} = \text{gain} - \text{loss}$.

We define the problem \textit{\textsc{MaxPC}} for a graph $G=(V, E)$ where $\textit{PC} \subseteq V$.

Maximize: profit $\mathfrak p_{\text{PC}}$, where 
\begin{align*}
{\mathfrak p}_{\text{PC}} = |E_{\text{PC}}(G, \textit{PC})| - |\textit{PC}|
\end{align*}

\noindent Here, $E_{\text{PC}}(G, \textit{PC})$ represents the edges in $G$ covered by $\textit{PC}$:
\begin{align*}
E_{\text{PC}}(G, \textit{PC}) = \{uv \in E: u \in \textit{PC} \vee v \in \textit{PC}\}
\end{align*}

\noindent We call a subset $\textit{PC}$ with maximum profit  also a \textit{maximum profit cover}. Like \textit{\textsc{MinVC}}, \textit{\textsc{MaxPC}}  is  NP-hard~\cite{stege2002}.

\subsubsection*{Cost function for \textit{\textsc{Maximum Profit Cover}}}

The binary variables for the \textit{\textsc{MaxPC}} problem are similar to \textit{\textsc{MinVC}}. Each $x_v$ is a binary variable with value  $1$ if $v$ is included in the profit cover $\textit{PC}$, and $0$ otherwise.
The edge and vertex cost functions for \textit{\textsc{MaxPC}} are given as follows:

    \textbf{Edge cost:}\\
    $$C_E^{PC}(\vec{x}) = \sum_{uv \in E} (x_u + x_v - x_ux_v)$$ 
    
    \textbf{Vertex cost:}\\
    $$C_V^{PC}(\vec{x}) =  \sum_{v} x_v$$

The total cost to be maximized for \textit{\textsc{MaxPC}} is
    \begin{equation}
\label{eq:pcqubo}
 C_{\text{PC}}(\vec{x}) = C_E^{PC}(\vec{x}) - C_V^{PC}(\vec{x})
 \end{equation}

\noindent While the definitions for $C_{\text{VC}}(\vec{x})$ and $C_{\text{PC}}(\vec{x})$ may appear similar, $C_{\text{PC}}(\vec{x})$ has no penalty parameters that need to be set, since every binary variable assignment corresponds to a feasible solution. 

\subsubsection{Finding minimum vertex covers via maximum profit covers}
We next point out an important relationship between the decision versions of both problems, which guarantees that we can derive a vertex cover $\textit{VC}$ for any subset of profit $p$ while guaranteeing the profit of the obtained vertex cover $\textit{VC}$.

\begin{thm}~\cite{stege2002}\label{thm:equiv}
For any graph $G =(V,E)$, $G$ has a vertex cover $\textit{VC}\subseteq V$ of size $k$ if and only if $G$ has a subset $\textit{PC}\subseteq V$ with profit ${\mathfrak p}_{\text{PC}}=|E| - k$. 
\end{thm}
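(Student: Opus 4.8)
My plan is to prove the two directions of the biconditional separately, using the identity $\mathfrak p_{\text{PC}} = |E| - k$ as the dictionary between the vertex-cover size and the profit. The forward direction should be essentially a definitional check, while the backward direction carries the real content through a greedy completion of the selected set into a vertex cover. For the forward direction, I would assume $G$ has a vertex cover $\textit{VC}$ of size $k$ and simply take $\textit{PC} = \textit{VC}$. Since every edge of $G$ has at least one endpoint in any vertex cover, we have $E_{\text{PC}}(G, \textit{VC}) = E$, and unwinding the profit definition gives $\mathfrak p_{\text{PC}} = |E_{\text{PC}}(G, \textit{VC})| - |\textit{VC}| = |E| - k$, exhibiting the required subset with nothing beyond the definitions.

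For the backward direction, suppose $\textit{PC} \subseteq V$ has profit $\mathfrak p_{\text{PC}} = |E| - k$, and write $m = |E_{\text{PC}}(G, \textit{PC})|$ and $s = |\textit{PC}|$, so that $m - s = |E| - k$. The set $\textit{PC}$ need not be a vertex cover, so let $U = E \setminus E_{\text{PC}}(G, \textit{PC})$ be its uncovered edges, a set of exactly $|E| - m$ edges each having \emph{both} endpoints outside $\textit{PC}$. First I would complete $\textit{PC}$ to a genuine vertex cover $\textit{VC}$ by adjoining one endpoint of each edge in $U$, which adds at most $|U| = |E| - m$ new vertices. Then I would bound the size: $|\textit{VC}| \le s + (|E| - m) = (s - m) + |E| = -(|E| - k) + |E| = k$, where the penultimate equality substitutes the profit identity. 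This yields a vertex cover of size at most $k$; since any superset of a vertex cover is again a vertex cover, I would pad it with arbitrary vertices up to size exactly $k$, obtaining the claimed cover.

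The main obstacle — really the only step with content — is the size accounting in the backward direction: one must see that patching the uncovered edges costs at most one vertex per uncovered edge, i.e.\ at most $|E| - m$ additional vertices, and that the profit identity $m - s = |E| - k$ then collapses the bound to exactly $k$. The two minor care points I would verify are that the vertices adjoined to cover $U$ are genuinely new (immediate, as edges of $U$ lie wholly outside $\textit{PC}$, so neither endpoint is already selected) and that the gap between a cover of size $\le k$ and one of size exactly $k$ is closed by the monotonicity of the cover property in the admissible range $k \le |V|$.
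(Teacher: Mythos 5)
Your proof is correct and takes essentially the same route as the paper: the forward direction is the identical definitional check, and the backward direction is the paper's greedy completion (adjoin one endpoint per uncovered edge), with your accounting done as a single count of at most $|E|-m$ added vertices rather than the paper's step-by-step observation that each addition preserves the profit. Your explicit padding to size exactly $k$ (valid when $k \le |V|$) is a small point of care that the paper's sketch omits, since it only produces a cover of size at most $|E| - \mathfrak{p}_{\text{PC}}$.
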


\begin{proof} (Sketch) On the one hand, determining the profit ${\mathfrak p}_{\text{pc}}$ of a vertex cover $\textit{VC} \subseteq V$ for $G$ results in ${\mathfrak p}_{\text{PC}}=|E_{\text{ADJ}}(G,\textit{VC})| - |\textit{VC}|$, where $E_{ADJ}(G, \textit{VC})$ are the edges covered by $\textit{VC}$. Since $\textit{VC}$ is a vertex cover, $E_{\text{ADJ}}(G, \textit{VC}) =  E$ and therefore ${\mathfrak p}_{\text{PC}}= |E| - k$.

On the other hand, if subset $\textit{PC} \subseteq V$ has a profit ${\mathfrak p}_{\text{PC}}$ for $G$, then, since $|E_{\text{ADJ}}(G, \textit{PC})| - |\textit{PC}|$, $G$ as a vertex cover of size $k = |E| - {\mathfrak p}_{\text{PC}}$ for the case that $E_{\text{ADJ}}(G, \textit{PC}) = E$. What if $E_{\text{ADJ}}(G, \textit{PC}) \not= E?$ In this case, $E_{\text{ADJ}}(G, \textit{PC}) \subseteq E$, and not all edges in $G$ are covered by the vertices in $\textit{PC}$. In this case we can obtain a vertex cover of size at most $|E| - {\mathfrak p}_{\text{PC}}$ by covering the remaining edges in $E\setminus E_{\text{ADJ}}(G, \textit{PC})$ as follows. 

(*) Pick an edge $uv \in E\setminus E_{\text{ADJ}}(G, \textit{PC})$. To cover  edge $uv$, add one of the vertices, say $u$, to $\textit{PC}$. 

Note that the profit of the update set $\textit{PC}$ remains ${\mathfrak p}_{\text{PC}}$ since one more edge, $uv$, is covered with one additional vertex, $u$. 

We repeat (*) as long as $E\setminus E_{\text{ADJ}}(G, \textit{PC})\not=\emptyset$.
\end{proof}

\noindent Theorem~\ref{thm:equiv} implies that, given a maximum profit cover $\textit{PC}$ for a graph $G$, we can obtain a minimum vertex cover for $G$ using the procedure given in the proof.  
To convert profit cover results to vertex cover, we apply Algorithm~\ref{alg:add_vertices} based on Theorem~\ref{thm:equiv}. 
\begin{algorithm}
\caption{Converting Profit Cover to Vertex Cover}
\label{alg:add_vertices}
\begin{algorithmic}

\REQUIRE Graph $G$, Solution $\textit{PC}$
\STATE Let $E_{\text{unc}}$ be the set of uncovered edges in $G$
\FOR{each edge $uv \in E_{\text{unc}}$}
    \IF{$u \notin \textit{PC}$ and $v \notin \textit{PC}$}
        \STATE Add $u$ to $\textit{PC}$
    \ENDIF
\ENDFOR
\end{algorithmic}
\end{algorithm}

\begin{figure}[!htbp]
    \centering
    \begin{subfigure}{0.2\textwidth}
        \centering 
      
        \includegraphics[width=\textwidth]{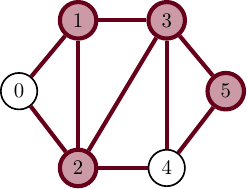}
        \caption{$\textit{PC} = \{1, 2, 3, 5\}$, \\ $\text{profit} = 5$}
        \label{subfig:pc-a}
    \end{subfigure}
    \hspace{0.05\textwidth}
    \begin{subfigure}{0.2\textwidth}
        \centering
        \includegraphics[width=\textwidth]{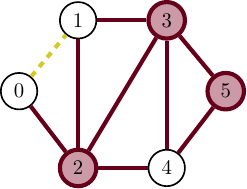}
        \caption{$\textit{PC} = \{2, 3, 5\}$, \\$\text{profit} = 5$}
        \label{subfig:pc-b}
    \end{subfigure}
    \hspace{0.05\textwidth}
    \begin{subfigure}{0.2\textwidth}
        \centering
          \includegraphics[width=\textwidth]{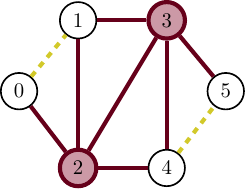}
        \caption{$\textit{PC} = \{2, 3\}$,\\$\text{profit} = 5$}
        \label{subfig:pc-c}
    \end{subfigure}
    \hspace{0.05\textwidth}
    \begin{subfigure}{0.2\textwidth}
        \centering
          \includegraphics[width=\textwidth]{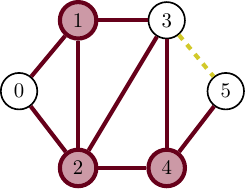}
        \caption{$\textit{PC} = \{1, 2, 4\}$,\\$\text{profit} = 5$}
        \label{subfig:pc-d}
    \end{subfigure}
    \caption{An illustration of different profit covers with optimal profit.  The graph shown in Fig.~\ref{subfig:pc-a} is also a minimum vertex cover. Graphs in Fig.~\ref{subfig:pc-b}, ~\ref{subfig:pc-c}, and ~\ref{subfig:pc-d} can be converted into minimum vertex covers using classical post-processing (cf.  Sec.~\ref{subsec:classical-post}). }
    \label{fig:vc-pc}
\end{figure}

\begin{figure}[!htbp]
    \centering
    \begin{subfigure}{0.2\textwidth}
        \centering
        \includegraphics[width=\textwidth]{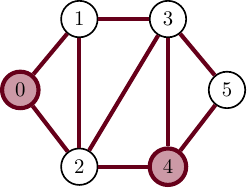}
        \caption{$\textit{PI} = \{0, 4\}$,\\$\text{profit} = 2$}
        \label{subfig:pi-a}
    \end{subfigure}
    \hspace{0.05\textwidth}
    \begin{subfigure}{0.2\textwidth}
        \centering
        \includegraphics[width=\textwidth]{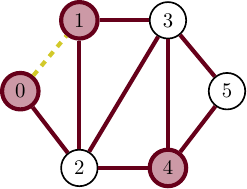}
        \caption{$\textit{PI} = \{0, 1, 4\}$,\\$\text{profit} = 2$}
        \label{subfig:pi-b}
    \end{subfigure}
    \hspace{0.05\textwidth}
    \begin{subfigure}{0.2\textwidth}
        \centering
          \includegraphics[width=\textwidth]{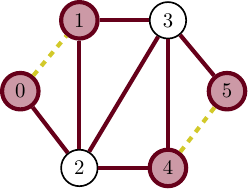}
        \caption{$\textit{PI} = \{0, 1, 4, 5\}$,\\$\text{profit} = 2$}
        \label{subfig:pi-c}
    \end{subfigure}
    \hspace{0.05\textwidth}
    \begin{subfigure}{0.2\textwidth}
        \centering
          \includegraphics[width=\textwidth]{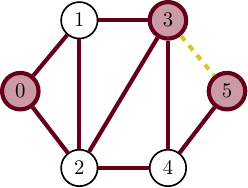}
        \caption{$\textit{PI} = \{0, 3, 5\}$,\\$\text{profit} = 2$}
        \label{subfig:pi-d}
    \end{subfigure}
    \caption{An illustration of different profit independent sets with optimal profit. The graph shown in Fig.~\ref{subfig:pi-a} is also a maximum independent set. Graphs in Fig.~\ref{subfig:pi-b}, ~\ref{subfig:pi-c}, and ~\ref{subfig:pi-d} can be converted into a maximum independent set using classical post-processing (cf.  Sec.~\ref{subsec:classical-post}).}
    \label{fig:is-pi}
\end{figure}

\subsubsection{An Example}
Fig.~\ref{fig:vc-pc} shows a graph with different choices of subsets of vertices as profit covers (in red or shaded).  The selected vertices in Fig.~\ref{subfig:pc-a} form a minimum vertex cover (of size 4) as well as a maximum profit cover (with $\mathfrak p_{\text{PC}}= 5$). Fig.~\ref{subfig:pc-b},~\ref{subfig:pc-c}, and~\ref{subfig:pc-d} are not feasible vertex covers but are feasible profit covers with maximum profit $\mathfrak p_{\text{PC}}= 5$.   
The maximum profit covers in Fig.~\ref{subfig:pc-b},~\ref{subfig:pc-c}, and~\ref{subfig:pc-d} can be converted into a minimum vertex cover without changing the profit by adding vertices associated with uncovered edges using classical post processing based on Algorithm~\ref{alg:add_vertices}.

\subsection{\textit{\textsc{Maximum Profit Independence}}}
\label{subsec:pi}
We now consider  \textit{\textsc{MaxIS}} and relax the requirement that the subset must be an independent set. That is, we consider the closeness of a subset to being an independent set  (i.e., the more independent vertices the better) and arrive at the problem {\sc Maximum Profit Independence}~\cite{van2008tractable}, or \textit{\textsc{MaxPI}}. For a subset $\textit{PI}\subseteq V$ in $G$, the number of vertices in $\textit{PI}$ is the \textit{gain}, and the number of vertices violating the independent vertex constraint is considered the \textit{loss}. The profit of $\textit{PI} \subseteq V$ for a graph $G=(V,E)$ is then  defined as  $\textit{profit} = \text{gain} - \text{loss}$.

\noindent We define the problem \textit{\textsc{MaxPC}} (\textit{\textsc{MaxPI}}) for a graph $G=(V, E)$ where $\textit{PI} \subseteq V$, as follows.

Maximize: $\mathfrak p$, where 
\begin{align*}
{\mathfrak p}_{\text{PI}} =  |\textit{PI}| - |E_{\text{PI}}(G, \textit{PI})|
\end{align*}

\noindent Here, $E_{\text{PI}}(G, \textit{PI})$ represents are the edges with both endpoints in $\textit{PI}$:
\begin{align*}
E_{\text{PI}}(G, \textit{PI}) = \{uv \in E: u, v \in \textit{PI}\}
\end{align*}

\noindent We call a $\textit{PI}$ with maximum profit also a \textit{maximum profit independence}.  \textit{\textsc{MaxPI}} is NP-hard~\cite{stege2002}.

\subsubsection*{Cost function for \textit{\textsc{Maximum Profit Independence}}}

The binary variables for the \textit{\textsc{MaxPI}} problem are similar to \textit{\textsc{MaxIS}}: each $x_v$ is a binary variable with value $1$ if $v$ is included in the Profit Independence set $\textit{PI}$, and $0$ otherwise.
The edge and vertex cost functions for \textit{\textsc{MaxPI}} are given as follows:

    \textbf{Edge cost:}\\
    $$C_E^{PI}(\vec{x}) = \sum_{uv \in E} (x_ux_v)$$ 
    
    \textbf{Vertex cost:}\\
    $$C_V^{PI}(\vec{x}) =  \sum_{v} x_v$$

The total cost that is maximized for \textit{\textsc{MaxPI}} is
    \begin{equation}
\label{eq:piqubo}
 C_{\text{PI}}(\vec{x}) =  C_V^{PI}(\vec{x}) -  C_E^{PI}(\vec{x})
 \end{equation}

\noindent While $C_{\text{IS}}(\vec{x})$ and $C_{\text{PI}}(\vec{x})$ may appear similar, $C_{\text{PI}}(\vec{x})$ has no penalty parameters that need to be set, since every binary variable assignment corresponds to a feasible solution. 
\subsubsection{Finding maximum independent sets via maximum profit independent sets}
Similar to the connection between the decision versions of \textit{\textsc{MinVC}} and \textit{\textsc{MaxPC}}, maximum independent sets can also be identified through profit independent sets. Fig.~\ref{subfig:pi-a} shows a maximum independent set and a maximum profit cover (with profit $\mathfrak p_{\text{PI}}=2$). Fig.~\ref{subfig:pi-b},~\ref{subfig:pi-c}, and~\ref{subfig:pi-d} show feasible (and maximum) profit independent sets that are infeasible independent sets. The following theorem shows the relationship between profit independence and independent sets which can then be subsequently used to convert profit independent sets to independent sets. 

\begin{thm}~\cite{stege2002}\label{thm:equiv-pi}
For any graph $G =(V,E)$, $G$ has an independent set $\textit{IS}\subseteq V$ of size $k$ if a only if $G$ has a subset $\textit{PI}\subseteq V$ with profit ${\mathfrak p}_{\text{PI}}= k$. 
\end{thm}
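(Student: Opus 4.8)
The plan is to mirror the structure of the proof of Theorem~\ref{thm:equiv}, establishing the two directions of the biconditional separately. The forward direction is immediate: given an independent set $\textit{IS}\subseteq V$ of size $k$, I would observe that no edge of $G$ has both endpoints in $\textit{IS}$, so $E_{\text{PI}}(G,\textit{IS})=\emptyset$. Treating $\textit{IS}$ as a candidate profit-independence set then yields ${\mathfrak p}_{\text{PI}}=|\textit{IS}|-|E_{\text{PI}}(G,\textit{IS})|=k-0=k$, which supplies the required subset.

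For the reverse direction, suppose $G$ has a subset $\textit{PI}\subseteq V$ with ${\mathfrak p}_{\text{PI}}=|\textit{PI}|-|E_{\text{PI}}(G,\textit{PI})|=k$. If $\textit{PI}$ is already independent, then $E_{\text{PI}}(G,\textit{PI})=\emptyset$ and $\textit{PI}$ itself is an independent set of size $k$. Otherwise there exists an edge $uv\in E$ with $u,v\in\textit{PI}$. The key cleanup step I would use is to delete from $\textit{PI}$ one vertex, say $u$, that is incident to at least one internal edge. If $u$ has internal degree $d\ge 1$ (the number of edges of $E_{\text{PI}}(G,\textit{PI})$ incident to $u$), then removing $u$ decreases $|\textit{PI}|$ by $1$ and decreases $|E_{\text{PI}}|$ by exactly $d$, so the profit changes by $-1+d=d-1\ge 0$. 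Thus the profit never decreases under this operation, while the number of internal edges strictly drops, guaranteeing termination after at most $|E_{\text{PI}}(G,\textit{PI})|$ removals.

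Iterating this step until no internal edge remains produces an independent set $\textit{IS}'$ with $E_{\text{PI}}(G,\textit{IS}')=\emptyset$, whose profit equals its cardinality; since the profit never fell below $k$, we have $|\textit{IS}'|\ge k$. Finally, because every subset of an independent set is itself independent, I would discard $|\textit{IS}'|-k$ arbitrary vertices to obtain an independent set of size exactly $k$, which closes the reverse direction.

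The hard part is not any single calculation but identifying the correct invariant. Unlike the vertex-cover case of Theorem~\ref{thm:equiv}, the cleanup here can strictly increase the profit whenever a removed vertex has internal degree larger than one, so the construction does not preserve profit exactly and need not land on size exactly $k$. The resolution---and the step I would flag as the crux---is to track the monotone statement that ${\mathfrak p}_{\text{PI}}$ is non-decreasing, rather than an exact equality: this yields an independent set of size at least $k$, after which downward closure of independence lets me trim to the precise target.
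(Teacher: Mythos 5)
Your proof is correct, and it follows the same removal-based cleanup that the paper encodes in Algorithm~\ref{alg:remove_vertices}; note the paper itself gives no explicit proof of Theorem~\ref{thm:equiv-pi}, deferring to the cited reference, so your argument fills that gap in the same spirit as the sketch of Theorem~\ref{thm:equiv}. One point in your write-up is a genuine improvement in care over the paper's analogous reasoning: in the sketch of Theorem~\ref{thm:equiv} the paper asserts that the cleanup preserves profit \emph{exactly}, which is not quite right (there, adding a vertex can cover several uncovered edges at once; here, removing a vertex of internal degree $d\ge 2$ strictly increases profit). Your formulation of the invariant as ``profit is non-decreasing,'' combined with termination via the strictly shrinking set of internal edges and a final trimming step using downward closure of independence, is exactly what is needed to land on size precisely $k$, and it closes a loose end that the paper's presentation leaves open.
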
 

\noindent Theorem~\ref{thm:equiv-pi} implies that, given a maximum profit independence $\textit{PI}$ for a graph $G$, we can obtain a minimum vertex cover for $G$. 
To convert profit independence results to independent sets, we apply Algorithm~\ref{alg:remove_vertices} based on Theorem~\ref{thm:equiv-pi}. 
\begin{algorithm}
\caption{Converting Profit Independent Set to Independent Set}
\label{alg:remove_vertices}
\begin{algorithmic}

\REQUIRE Graph $G$, Solution $\textit{PI}$
\STATE Let $E_{\text{conf}}$ be the set of edges with both endpoints in $\textit{PI}$
\FOR{each edge $uv \in E_{\text{conf}}$}
    \IF{$u \in \textit{PI}$ and $v \in \textit{PI}$}
        \STATE Remove $u$ from $\textit{PI}$
    \ENDIF
\ENDFOR
\end{algorithmic}
\end{algorithm}

\subsection{\textit{\textsc{Maximum Profit Clique}}}
\label{subsec:pcli}
The problem \textit{\textsc{Maximum Profit Clique}}~\cite{scott2004classical} involves the relaxation of the definition of the concept clique to include less-than-complete sub-graphs. Cliques and independent sets are related: an independent set ($\textit{IS}$) in a graph's complement ($G_c$) is equivalent to a clique in the original graph ($G$). This applies to Profit Independence and Profit Clique as well (see Fig.~\ref{fig:relationship}). 

We define the problem \textit{\textsc{Maximum Profit Clique}} \textit{\textsc{MaxPCl}} for a given graph $G=(V, E)$ where $\textit{PCl} \subseteq V$, as follows.

Maximize: $\mathfrak p_{\text{Cl}}$, where 
\begin{align*}
{\mathfrak p}_{\text{Cl}} &=  |\textit{PCl}| - |E_{\text{PCl}}(G, \textit{PCl})|\\
&= |\textit{PI}| - |E_{\text{PI}}(G_c, \textit{PI})|
\end{align*}

\noindent Here, $E_{\text{PCl}}(G, \textit{PCl})$ represents are the edges in the complement graph $G_c$ with both endpoints in $\textit{PCl}$:
\begin{align*}
E_{\text{PCl}}(G, \textit{PCl}) = \{uv \in (V \times V) \setminus E: u, v \in \textit{PCl}\}
\end{align*}

\noindent We call a subset $\textit{PCl}$ with maximum profit also a \textit{maximum profit clique}. \textit{\textsc{MaxPCl}} is  NP-hard~\cite{scott2004classical}.

\subsubsection*{Cost function for \textit{\textsc{Maximum Profit Clique}}}

The binary variables for the \textit{\textsc{MaxPCl}} problem are similar to \textit{\textsc{MaxPI}}. $x_v$ is a binary variable whose value is $1$ if $v$ is included in the profit clique set $\textit{PCl}$, and $0$ otherwise.
The cost function for  \textit{\textsc{MaxPCl}} is given as follows:

    \textbf{Edge cost:}\\
    $$C_{E_c}^{PCl}(\vec{x}) = \sum_{uv \in E_c} (x_ux_v)$$ 
    
    \textbf{Vertex cost:}\\
    $$C_V^{PCl}(\vec{x}) =  \sum_{v} x_v$$

The total cost that needs to be maximized for \textit{\textsc{MaxPCl}} is
    \begin{equation}
\label{eq:pclqubo}
 C_{\text{PCl}}(\vec{x}) =  C_V^{PCl}(\vec{x}) -  C_{E_c}^{PCl}(\vec{x})
 \end{equation}

\subsection{Relationships among all six problems}
We summarize the relationships among all six problems, see also Fig.~\ref{fig:relationship}. Let $G = (V,E)$. Then $G$ has a vertex cover of size $k$ if and only if $G$ has an independent set of size $|V| - k$ if and only if $G$'s complement $G_c$ has a clique of size $|V| - k$  
if and only if $G$ has a profit cover of profit $|E| - k$ if and only if $G$ has a profit independent set of profit $|V| - k$ if and only if $G_c$ has a profit clique of profit $|V| - k$.  

\section{EXPERIMENTAL SETUP}\label{sec:methodology}

Our primary goal is to evaluate and contrast the quality of solutions obtained for profit formulations and penalty-term constrained formulations using QAOA. 

\subsection{Cost Hamiltonians}

We formulate all cost Hamiltonians as minimization problems, where vertices in the subset evaluated are mapped to $-1$ and the variables of vertices not chosen are mapped to $1$. Based on the QUBO in Eqn.~\ref{eq:vcqubo}, we apply the transformation $x_i \rightarrow \frac{1 -Z_i}{2}$ to derive the  cost Hamiltonian for \textit{\textsc{MinVC}}:
\begin{equation}
\label{eq:vch}
    \hat{H}_{\text{VC}} = \frac{A}{4}\sum_{uv\in E}  (Z_uZ_v + Z_u + Z_v) - \frac{B}{2}\sum_{v\in V}Z_v + \Delta_{\text{VC}}.
\end{equation}
Here,  $Z_u$ and $Z_v$ are Pauli-$Z$ operators acting on qubits $u$ and $v$. 
Furthermore, $\Delta_{\text{VC}} = \frac{1}{4}A|E| + \frac{1}{2}B|V|$, and penalties $A$ and $B$ in $\hat{H}_{\text{VC}}$ are set to $3$ and $2$, respectively, unless stated otherwise. Note that  $\frac{\hat{H}_{\text{VC}}}{B}$ corresponds to the size of the solution state for (only) the cases that correspond to a feasible vertex cover solution.

Our choice of penalties is based on PennyLane's implementation~\cite{pennylane_qaoa_cost} of \textit{\textsc{MinVC}}. It is difficult to set desirable penalties~\cite{boros2008max}, as they depend on the particular input graphs. Small penalties increases the  likelihood of infeasible solutions being returned and values that are too large may lead to slower convergence.

The corresponding Hamiltonian $\hat{H}_{\textit{PC}}$  for \textit{\textsc{MaxPC}}, is based on its QUBO as in Eqn.~\ref{eq:pcqubo}:
\begin{equation}
\label{eq:pch}
\hat{H}_{\text{PC}} = \frac{1}{4}\sum_{uv\in E} (Z_uZ_v + Z_u + Z_v) - \frac{1}{2}\sum_{v\in V}Z_v + \Delta_{\text{PC}}
\end{equation}
where $\Delta_{\text{PC}} = \frac{|V|}{2} - \frac{3|E|}{4}$. $\hat{H}_{\text{PC}}$ can be obtained by setting $A=B=1$ in Eq.~\ref{eq:vch}, but it is important to highlight that this does not imply that all constrained problems can be transformed into profit problems by setting penalties to $1$. For instance, setting $A=1$ in the Dominating Set QUBO~\cite{dinneen2017formulating} does not transform it to the QUBO for Profit Domination~\cite{van2008tractable}.
Any solution obtained using the maximum profit Hamiltonian is either  a vertex cover or can be converted into a vertex cover by a classical polynomial-time post-processing step, and without loss of profit (cf. Theorem~\ref{thm:equiv}).

The following cost Hamiltonians are used for \textit{\textsc{MaxIS}} and \textit{\textsc{MaxPI}}:

\begin{equation}
\hat{H}_{\text{IS}} = \frac{A}{4}\sum_{uv\in E}  (Z_uZ_v - Z_u - Z_v)  + \frac{B}{2}\sum_{v\in V}Z_v + \Delta_{\text{IS}}    
\end{equation}

where $\Delta_{\text{IS}}  = \frac{A}{4}|E| - \frac{B}{2}|V|$.

\begin{equation}
\hat{H}_{\text{PI}} = \frac{1}{4}\sum_{uv\in E}  (Z_uZ_v - Z_u - Z_v)  + \frac{1}{2}\sum_{v\in V}Z_v + \Delta_{\text{PI}}    
\end{equation}

where $\Delta_{\text{PI}}  = \frac{1}{4}|E| - \frac{1}{2}|V|$.

\noindent Similar Cost Hamiltonians are used for  \textit{\textsc{MaxCl}} and \textit{\textsc{MaxPCl}}.

\begin{equation}
\hat{H}_{\text{Cl}} = \frac{A}{4}\sum_{uv\in (V \times V) \setminus E}  (Z_uZ_v - Z_u - Z_v)  + \frac{B}{2}\sum_{v\in V}Z_v + \Delta_{\text{Cl}}    
\end{equation}

where $\Delta_{\text{Cl}}  = \frac{A}{4}|(V \times V) \setminus E| - \frac{B}{2}|V|$.
\begin{equation}
\hat{H}_{\text{PCl}} = \frac{1}{4}\sum_{uv \in (V \times V) \setminus E}  (Z_uZ_v - Z_u - Z_v)  + \frac{1}{2}\sum_{v\in V}Z_v + \Delta_{\text{PCl}}    
\end{equation}

where $\Delta_{\text{PCl}}  = \frac{1}{4}|(V \times V) \setminus E| - \frac{1}{2}|V|$.

\noindent Note that all Cost Hamiltonians are formulated as minimization problems.




\subsection{Mixer Hamiltonian}
We use a basic Pauli-X mixer Hamiltonian as follows:
$$
\hat{H}_M = \sum_i X_i
$$

\subsection{Evaluation on the Xanadu PennyLane and Argonne QTensor Simulator platforms}

To evaluate the performance of \textit{\textsc{MaxPC}}, we use two quantum computing software frameworks, PennyLane and QTensor. 

\begin{enumerate}
    \item \textbf{Xanadu PennyLane}: PennyLane is a cross-platform Python library for differentiable quantum programming, making it a powerful tool for hybrid quantum-classical techniques that involve variational circuits~\cite{bergholm2018pennylane}. PennyLane enables the integration of various libraries such as TensorFlow, PyTorch, and Autograd due to the quantum node abstraction, thereby fitting seamlessly into the existing automatic differentiation methods. In our work, we utilized PennyLane not only for its robust support for variational algorithms but also for its ease of use and the availability of plugins for different quantum simulators and hardware. 

\item \textbf{Argonne QTensor Quantum Circuit Simulator}: Argonne QTensor Simulator~\cite{lykov2021performance, lykov2022} is a highly efficient quantum circuit simulator library which is founded upon the tensor network contraction technique~\cite{markov2008simulating},  offering a significant performance speedup compared to existing simulators (e.g., those that run a full amplitude-vector evolution). Quantum circuit simulators are essential in understanding how quantum computers work and in experimenting with  quantum algorithms including benchmarking and verification. Variational algorithms that involve iteratively updating the parameters based on the results obtained from quantum circuit executions benefit from fast simulators to obtain parameters. Furthermore, quantum circuit simulators can be used to study the behaviour of such hybrid algorithms under varying parameters. A core feature of Argonne QTensor is its highly parallelizable evaluation of observables based on the \textit{lightcone} or the \textit{reverse causal cone} technique~\cite{farhi2004quantum, streif2020training}.  This means that if some observable acts only on a small subset of the qubits, then most of the gates in the quantum circuit commute through and cancel out when evaluating the expectation value. The QAOA cost operator is a sum of $m$ independent terms, each of which can be computed separately (i.e.,  can be computed in parallel). The lightcone (or subgraph) of the computation depends on the number of layers $p$. For example, when $p=1$, the simulator only needs to evaluate the circuit on the independent cost term and its immediate neighbourhood. This method works well for sparse or large fixed-degree graphs. However, for dense graphs, since every evaluation requires consideration of numerous nodes, the speedup provided by Argonne QTensor may be negligible.
\end{enumerate}
For smaller problem instances involving graphs with fewer than 14 nodes, we use PennyLane to perform full state vector simulation. For larger problem instances (up to 70 nodes), we use tensor network expectation value simulation with QTensor due to its memory efficiency and scalability. PennyLane code\footnote{\href{https://github.com/RigiResearch/pf-opt}{{https://github.com/RigiResearch/pf-opt}}} and QTensor code used in this study will be made available on GitHub.\footnote{\href{https://github.com/danlkv/QTensor}{https://github.com/danlkv/QTensor}} To validate our proposed approach, we employ the following methodology: 
\begin{enumerate}
    \item A set of random connected graphs of varying size and density are generated. 
    \item The given problem is solved using QAOA on the set of random graphs using our profit relaxation approach, as well as the penalty-term formulation of \textit{\textsc{MinVC}}. 
    \item Results from our approach and results from the penalty-term QAOA are compared against classical exact solutions for varying graph sizes and densities. 
\end{enumerate}

  \begin{figure}[!bth]
\centering
  \begin{tabular}{@{}cccc@{}}
    \includegraphics[width=1\columnwidth]{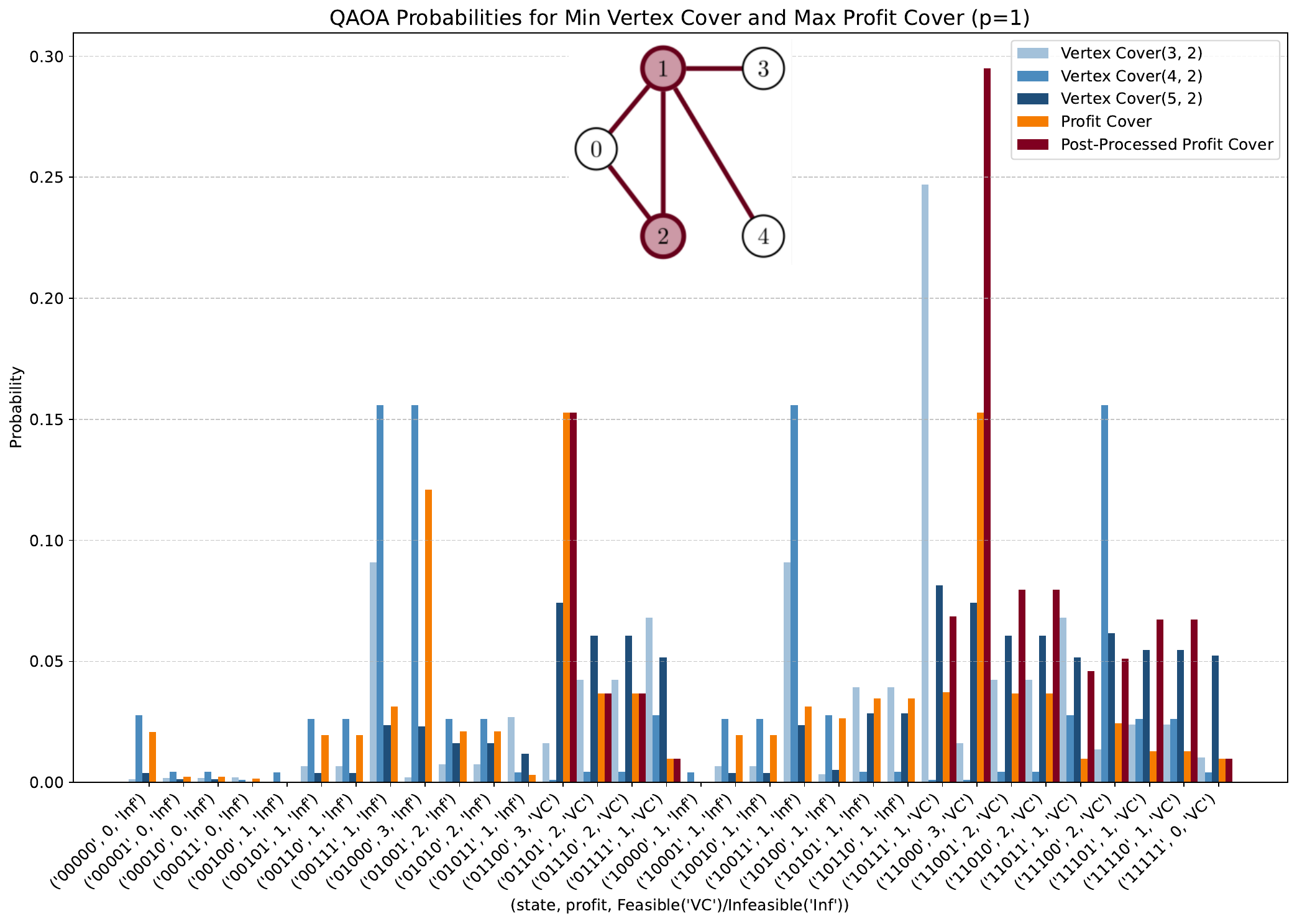}\\
    \includegraphics[width=1\columnwidth]{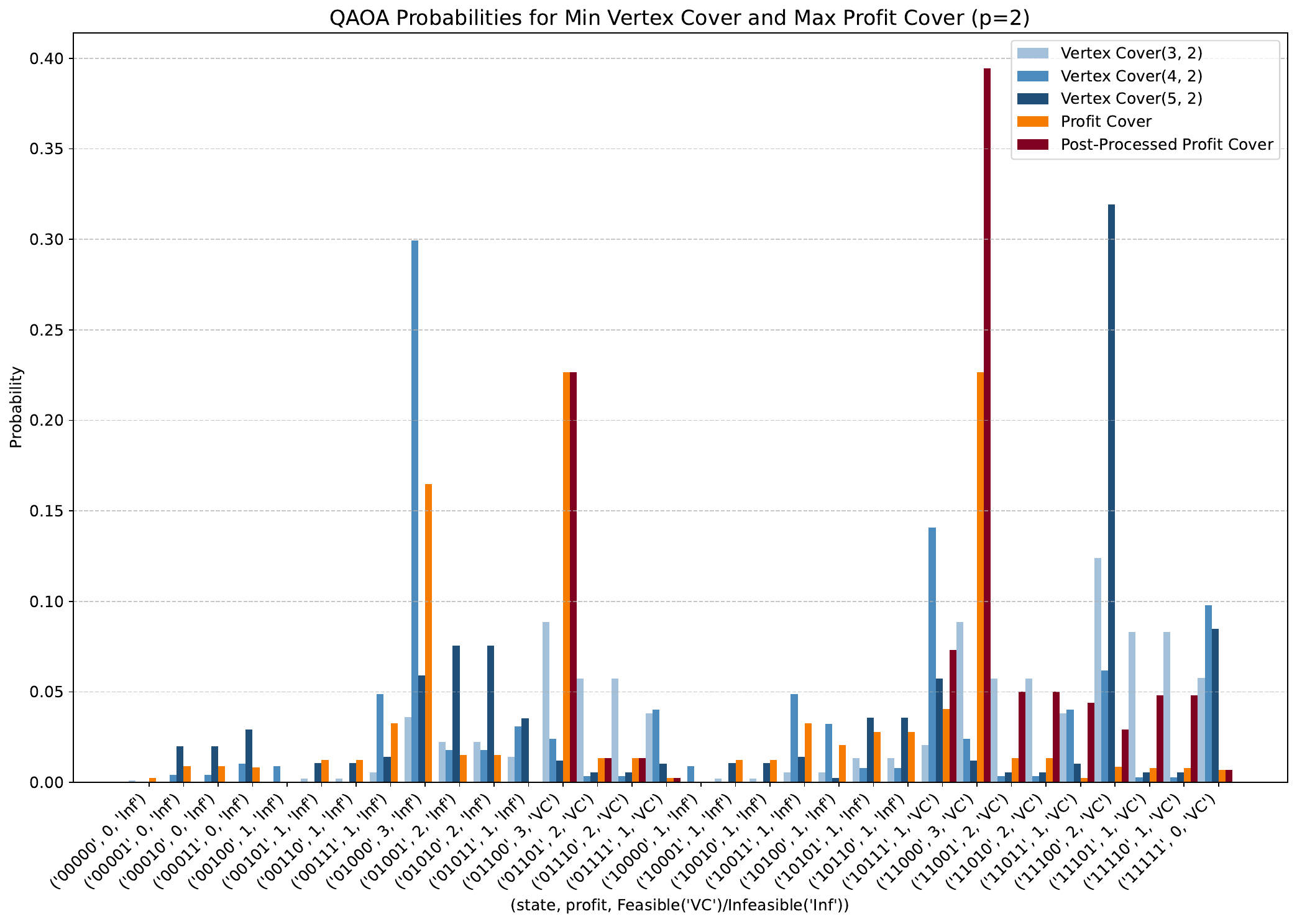}\\
    \includegraphics[width=1\columnwidth]{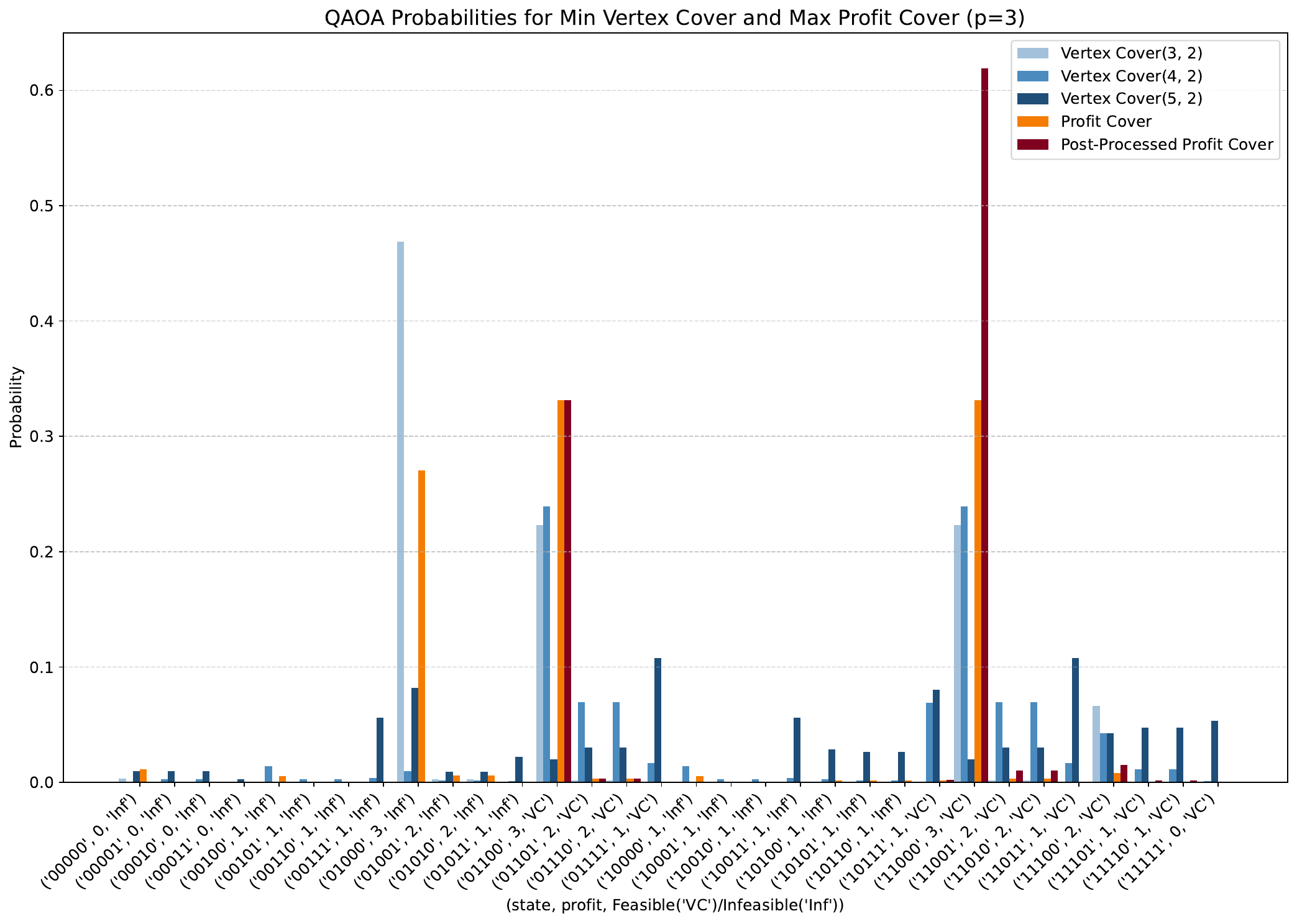}

  \end{tabular}
  \caption{Probabilities for a sample graph (shown in the inset of topmost figure or graph) containing 5 vertices with $p\in \{1,2,3\}$ layers. \textit{\textsc{MinVC}} is run for three different penalty parameters and is shown in blue. For example, Vertex Cover (3, 2) refers to penalties $A=3, B=2$. Results of \textit{\textsc{MaxPC}} are shown in orange and the post-processed results of \textit{\textsc{MaxPC}} are shown in burgundy.}
    \label{fig:vc-both}
\end{figure}

\subsubsection{Choice and density of graphs} For PennyLane, we used random connected Erd{\H o}s-R{\'e}nyi~\cite{erdHos1960evolution} graphs of varying densities (with edge probabilities of $0.1, 0.3, 0.5, 0.8$ and 3-regular graphs).  For QTensor, we chose to work with sparse graphs (edge probability=0.1), and 3-regular graphs due to the large size of the graphs being simulated.

\subsubsection{Depth of circuits} For graphs with fewer that $n=14$ nodes, we performed experiments with up to eight layers of QAOA. For larger graphs on QTensor we performed simulations with up to five layers. 

\subsubsection{Classical optimizer} We have used both Gradient Descent and a Gradient Descent based optimizer RMSProp (Root Mean Square Propagation) on PennyLane and QTensor. Both are similar optimizers based on gradients, with a fixed learning rate on Gradient Descent, and an adaptive learning rate on RMSProp. The results presented here are obtained using RMSProp.

\subsection{Contributions to QTensor Simulator}
QTensor enables efficient classical simulation of quantum circuits using tensor network contractions and is well studied for the \textit{\textsc{MaxCut}} problem with QAOA~\cite{lykov2021performance}. The design principles of QTensor allow for ease of extensibility of the framework with minimal changes to the underlying QTensor architecture. The primary components of QTensor include \textit{Composers} and \textit{Simulators}. Composers allow for the creation of quantum circuits in a way that is backend-agnostic. These circuits can then be run using Simulators of popular libraries such as Qiskit and Cirq, as well as the QTensor tensor network simulator. We extend this framework to include the vertex cover and profit cover problems and their expectation value calculation on QTensor. Because our Hamiltonians contain four distinct terms, calculating an expectation value requires  four calls to the simulator. To alleviate this, we consolidate the node contributions, for example:

\begin{align}
\hat{H}_{\text{PC}} 
&=\frac{1}{4}\sum_{(i, j) \in E} \Big[ Z_iZ_j\Big]  +  \frac{1}{4}\sum_{i \in V}(\text{deg}(i) - 2)\cdot Z_i
\end{align}

\begin{align}
\hat{H}_{\text{PI}} 
&=\frac{1}{4}\sum_{(i, j) \in E} \Big[ Z_iZ_j\Big]  -  \frac{1}{4}\sum_{i \in V}(\text{deg}(i)-2)\cdot Z_i
\end{align}

\subsection{Classical Post-processing}
\label{subsec:classical-post}
To convert the results for the unconstrained profit problem (i.e., solutions for $P_U$) to results of the constrained problem ($P_C$), we apply polynomial-time classical post-processing that runs in $\mathcal{O}(|E|)$ time. We apply Algorithm~\ref{alg:add_vertices}  above based on Theorem~\ref{thm:equiv} on solutions to \textit{\textsc{MaxPC}} to obtain the corresponding solutions to \textit{\textsc{MinVC}}. Similarly, Algorithm~\ref{alg:remove_vertices} above based on Theorem~\ref{thm:equiv-pi} is used to obtain solutions to \textit{\textsc{MaxIS}}. To find solutions for \textit{\textsc{MaxCl}}, we use the same Algorithm~\ref{alg:remove_vertices} on the graph $G_C = (V, (V \times V) \setminus E)$ (the complement graph).

These polynomial-time algorithms guarantee that the profit of the post-processed result is at least as high as the profit of the input to the classical post-processing and ensure that the post-processed result is a feasible solution to the constrained optimization problem.
\subsection{Computing exact reference solutions classically}
To compare the results of our approach to optimum solutions for \textit{\textsc{MinVC}} inputs, we used the  implementation of an exact classical algorithm, as described below. The solutions obtained from this algorithm serve as benchmarks to evaluate the solution quality of our approach. 

To obtain exact solutions for the \textit{\textsc{MinVC}} problem, we employ a linear programming based problem kernel approach and use an implementation by Abu-Khzam of this reduction rule~\cite{Fomin2019,abu2005fast}, combined  with a basic bounded search-tree fixed-parameter algorithmic approach~\cite{DFS99,downeyFellows2013}. Our (classical) algorithm is listed below as Algorithm~\ref{alg:FPT}, and additional background on fixed-parameter tractability and classical algorithms is available in the Appendix (Section~\ref{sec:background}).
Before computing the bounded search tree, we  determine an upper bound of the size of the vertex cover using a greedy heuristic method, by repeatedly selecting vertices with the highest degree until a vertex cover is complete. Reference solutions for \textit{\textsc{MaxIS}} and \textit{\textsc{MaxCl}} are also obtained by running Algorithm~\ref{alg:FPT} with additional steps guided by the relationships among all three problems (cf. Section~\ref{sec:relationships}).


\begin{algorithm}[!htbp]
\caption{\textsc{MinVC($G$)}: \\Exact classical algorithm}
\begin{algorithmic}\label{alg:FPT}

\STATE $V_1, G' \gets \text{LPKernelization}(G)$ 
\STATE $V_2 \gets \text{GreedyVertexCover}(G')$ 
\STATE $k \gets |V_2|$
\STATE $V_{\text{best}} \gets V_1 \cup V_2$
\STATE $k_\text{best} = k -1$
\WHILE{$k_\text{best} \geq 0$}
    \STATE $V_k \gets \text{BoundedSearchTree}(G', k_\text{best})$
    \IF{$V_k = \emptyset$}
        \STATE \textbf{return} $V_{\text{best}}$
    \ELSE
        \STATE $k_\text{best} \gets k_\text{best} - 1$
        \STATE $V_{\text{best}} = V_1 \cup V_k$
    \ENDIF
\ENDWHILE
\STATE \textbf{return} $V_{\text{best}}$

\end{algorithmic}
\end{algorithm}

To find maximum independent sets and maximum cliques, we use Algorithm~\ref{alg:FPT} with some post-processing:
\begin{enumerate}
    \item \textit{\textsc{MaxIS}}: $\textit{IS} = V \setminus\text{MinVC($G$)}$
    \item \textit{\textsc{MaxCl}}: $MCl = V \setminus\text{MinVC($G_C$)}$
\end{enumerate}

\subsection{Evaluation metrics}
To evaluate our hybrid approach to find small vertex covers via large profit covers, we compare the results for \textit{\textsc{MaxPC}} and \textit{\textsc{MinVC}} on both PennyLane (less than 15 nodes) and QTensor (20-50 nodes). For smaller graphs, we examine the summed probability of optimal solutions for varying edge probabilities over eight layers.
For the larger graphs, we compare the expectation values obtained for both \textit{\textsc{MaxPC}} and \textit{\textsc{MinVC}} by considering the cost value evaluation over 100 iterations for $p\in\{1, 2, 3\}$ layers. Note that this is called \textit{loss} in Fig.~\ref{fig:cost-qtensor} but should not be confused with loss as part of the profit definition defined in Sec.~\ref{subsec:pc}.  We also show the approximation ratio obtained for \textit{\textsc{MaxPC}} in Fig.~\ref{fig:approx-ratios} and Fig.~\ref{fig:approx-ratios-2}. 
Both problems solved on PennyLane as well as QTensor use the same initial parameters and classical optimizer with 100 or more iterations.
\section{Results}
\label{sec:results}
This section presents our results which are based on the following metrics to evaluate our performance. For smaller problem instances, we use a combination of probabilities, summed optimal probabilities, summed near-optimal probabilities, and mean approximation ratio to evaluate performance. For larger graphs, a full statevector simulation is not possible due to an exponential number of states to evaluate. In this case, we perform an expectation value analysis using the Argonne QTensor Tensor network simulator.

\subsection{Probabilities}
In our analysis, we examine the individual probabilities of obtaining feasible solutions, comparing the profit  and constrained formulations using the standard (i.e., vanilla) QAOA implementation. Fig.~\ref{fig:vc-both} shows a probability distribution for \textit{\textsc{MaxPC}} and \textit{\textsc{MinVC}}  with $p\in \{1, 2, 3\}$ layers for the five node graph shown in the inset of Fig.~\ref{fig:vc-both}, run on PennyLane's standard analytical simulator (\texttt{default.qubit}). The classical optimizer used is the Root Mean Squared Propagation (RMSProp). The quantum-classical loop is run 200 times to obtain the results. The bar in the burgundy color in Fig.~\ref{fig:vc-both} shows the result of post-processing profit cover results to vertex cover using Alg.~\ref{alg:add_vertices}. As the number of layers increases, we can see a high probability (greater than 0.9) of obtaining the optimal result. 

\subsection{Summed Probabilities}
Summed probabilities (or the maximum success probability) refers to the summation of probabilities associated with obtaining optimal solutions in a given problem space. 

\begin{equation}
\label{eq:spopt}
    SP_{\text{OPT}} = \sum_{k \in sol_{\text{opt}}} P (X=k)
\end{equation}

We adopt the summed probability metric from ~\cite{Saleem2020}. They introduced it to capture optimal and near-optimal solutions to a constrained optimization problem better than, say, an approximation ratio.
In Equation~\ref{eq:spopt}, $sol_{\text{opt}}$ is the set of all optimal solutions, and $P(X=i)$ is the probability of obtaining solution $i$.  
The vertex cover, independent set, and clique problems have multiple optimal solutions in both their constrained and profit formulations. Notably, the profit variants of these problems tend to yield a greater number of optimal solutions compared to their constrained counterparts.  The increased density of optimal solutions for profit problems, coupled with the probabilistic nature of quantum computing outputs enhances the likelihood of the QAOA converging towards an optimal outcome.
Fig.~\ref{fig:vc-is-cl-summed-probs} shows the summed probabilities over eight layers for constrained and profit versions of vertex cover,  independent set and clique for varying edge probabilities averaged over ten graphs. 

\begin{figure*}[!bth]
\centering
  \begin{tabular}{@{}cccc@{}}
    \includegraphics[width=0.33\textwidth]{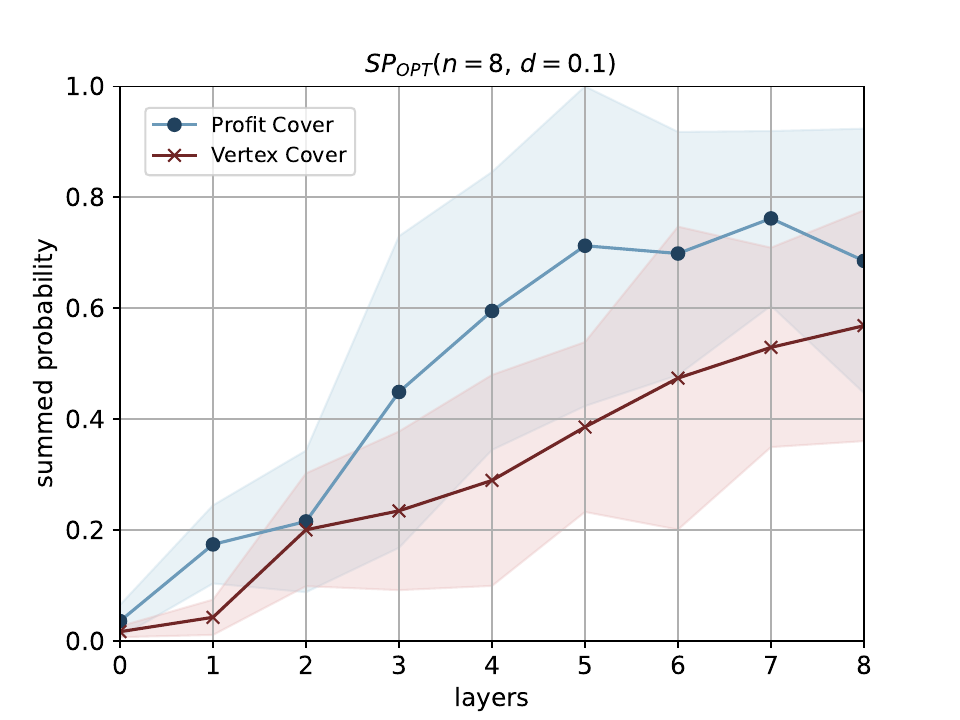} &
    \includegraphics[width=.33\textwidth]{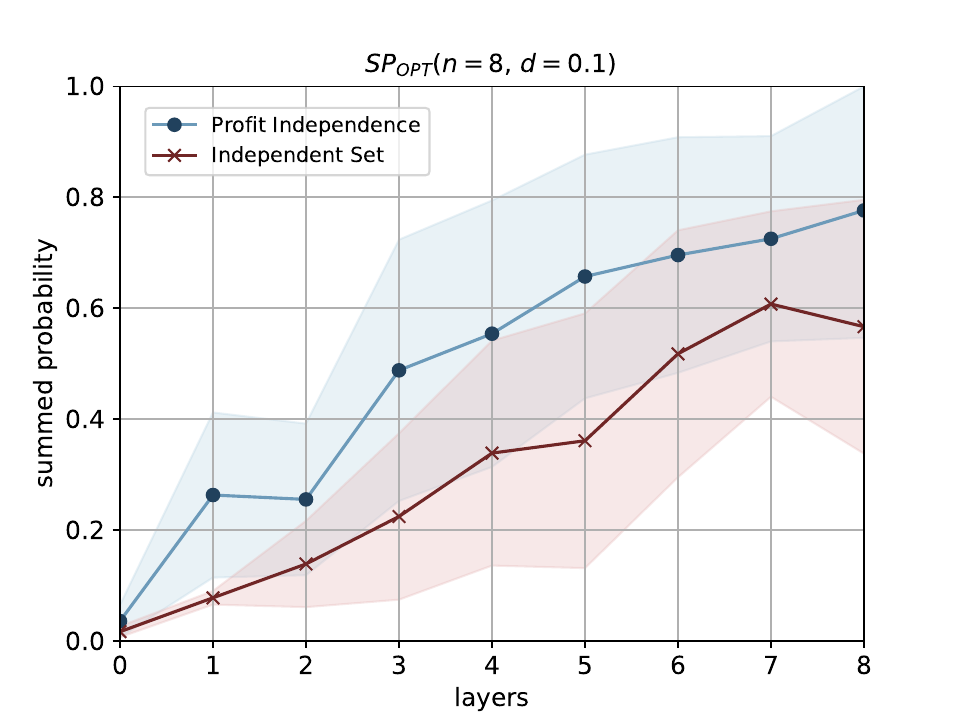} &
    \includegraphics[width=.33\textwidth]{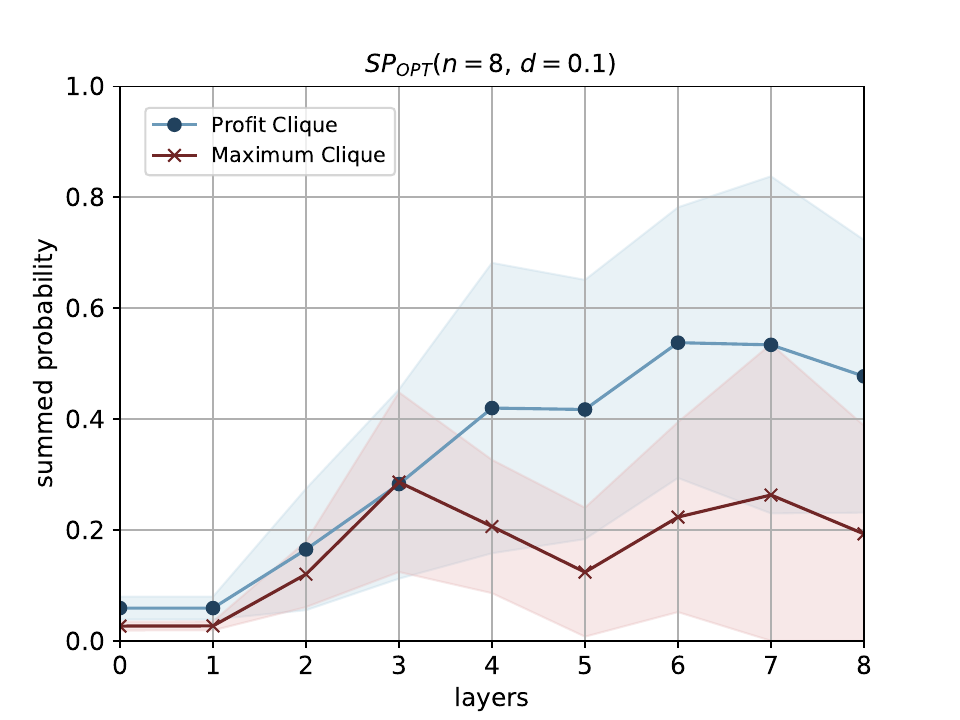} 
    \\
    \includegraphics[width=0.33\textwidth]{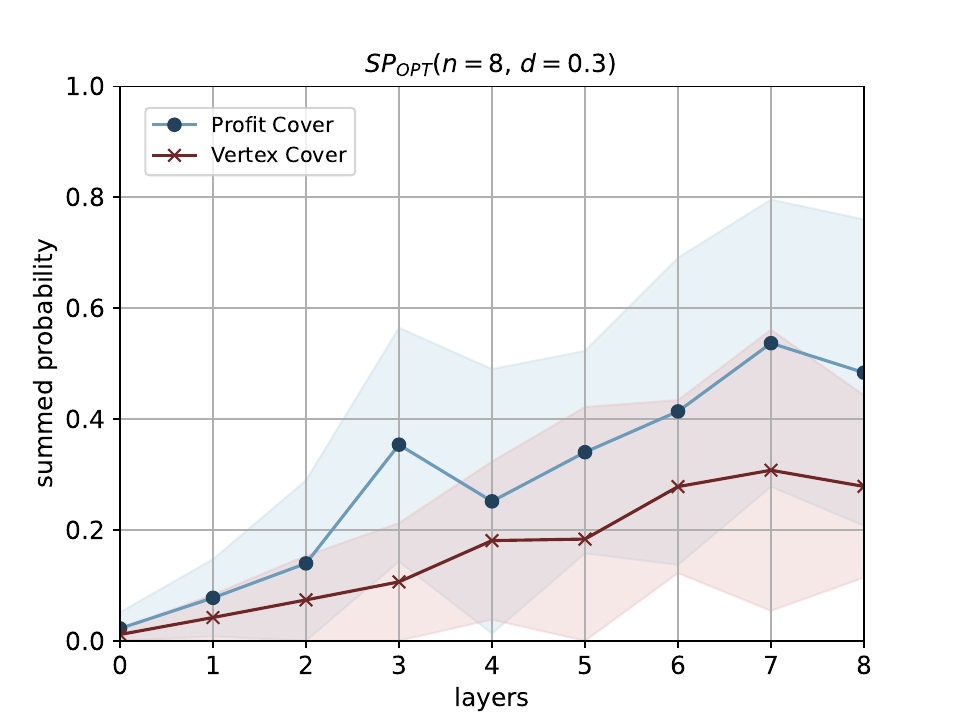} &
    \includegraphics[width=.33\textwidth]{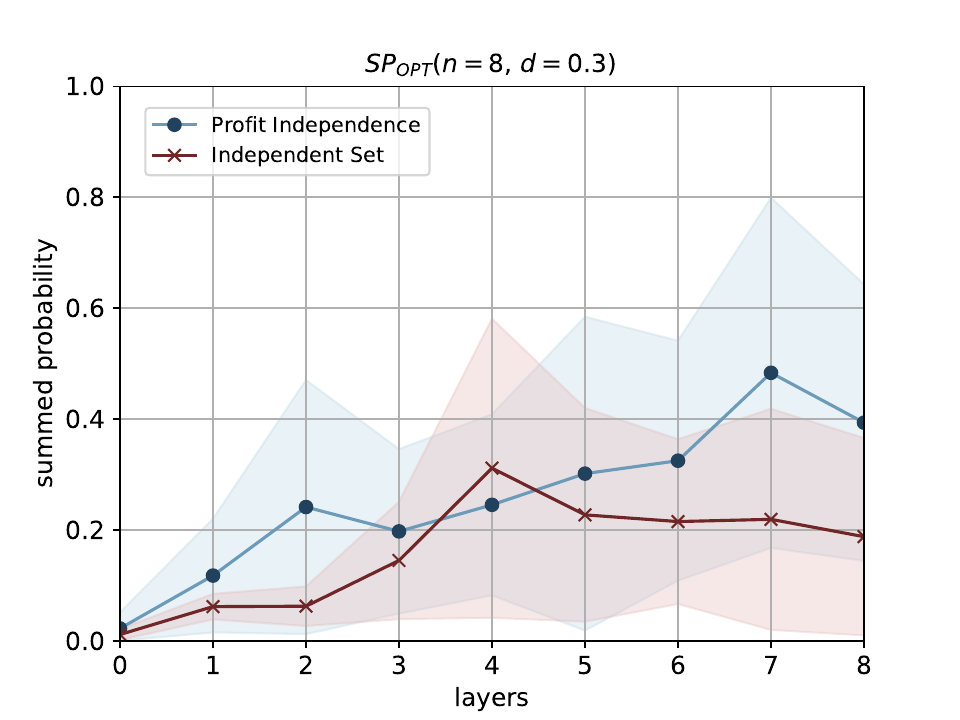} &
    \includegraphics[width=.33\textwidth]{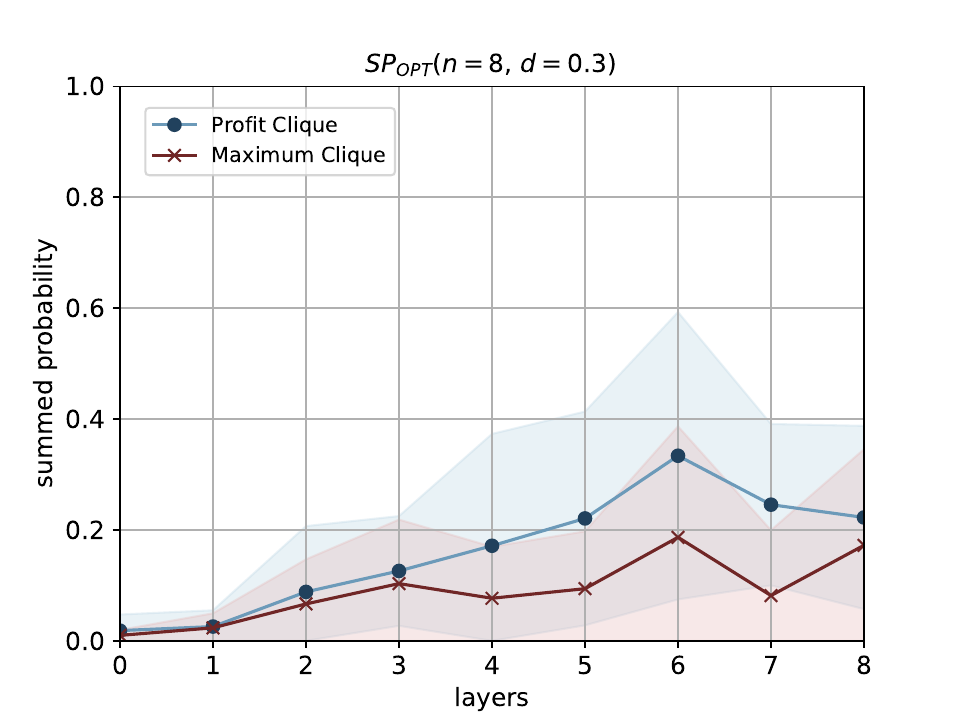} 
    \\
    \includegraphics[width=0.33\textwidth]{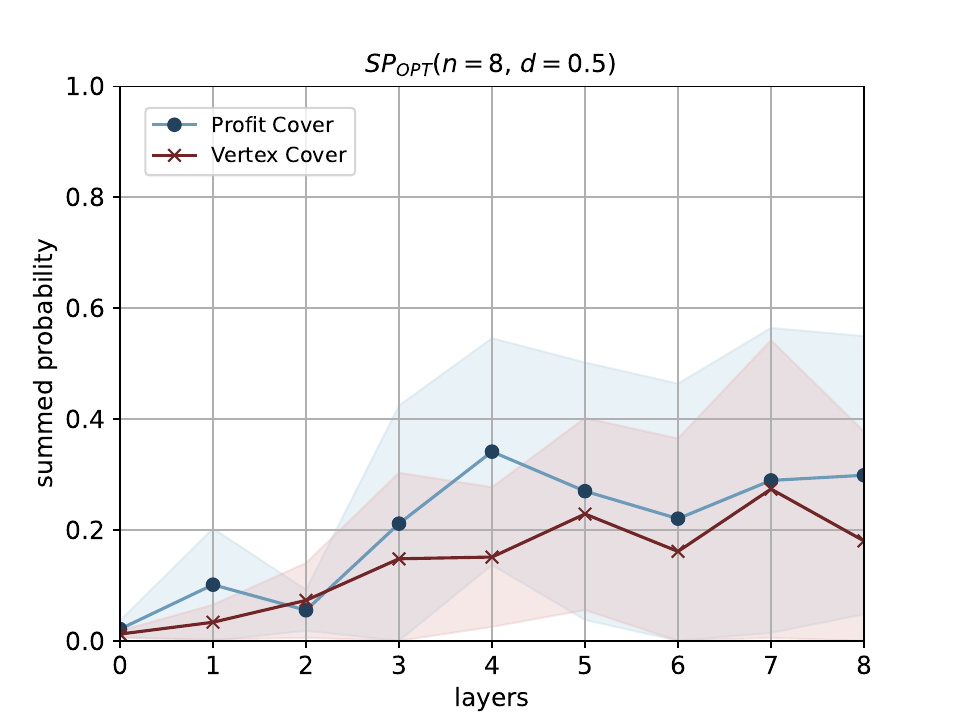} &
    \includegraphics[width=.33\textwidth]{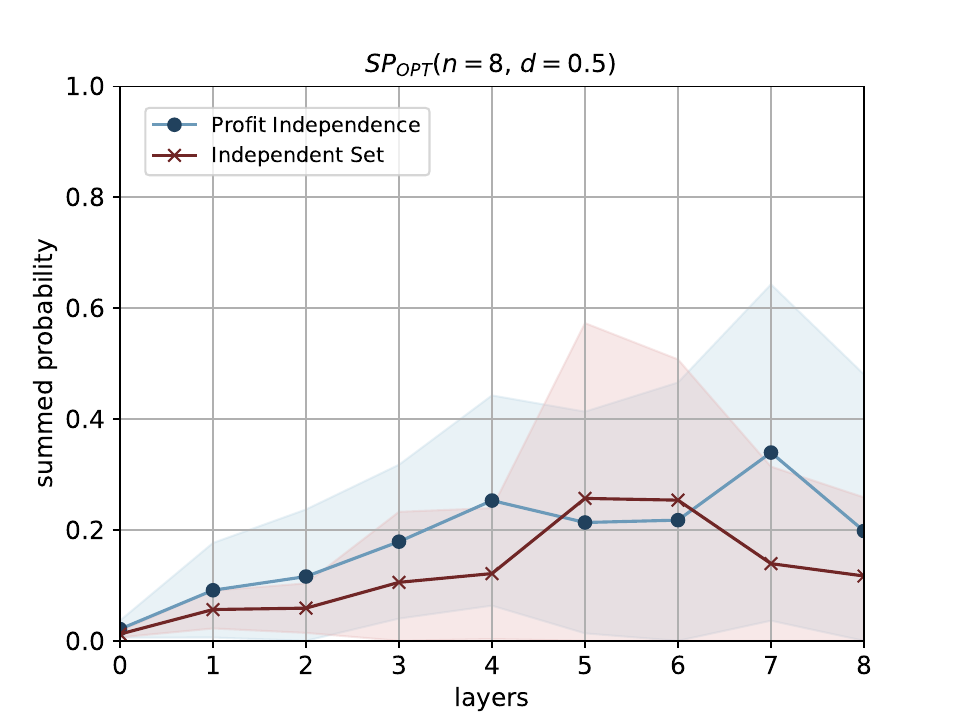} &
    \includegraphics[width=.33\textwidth]{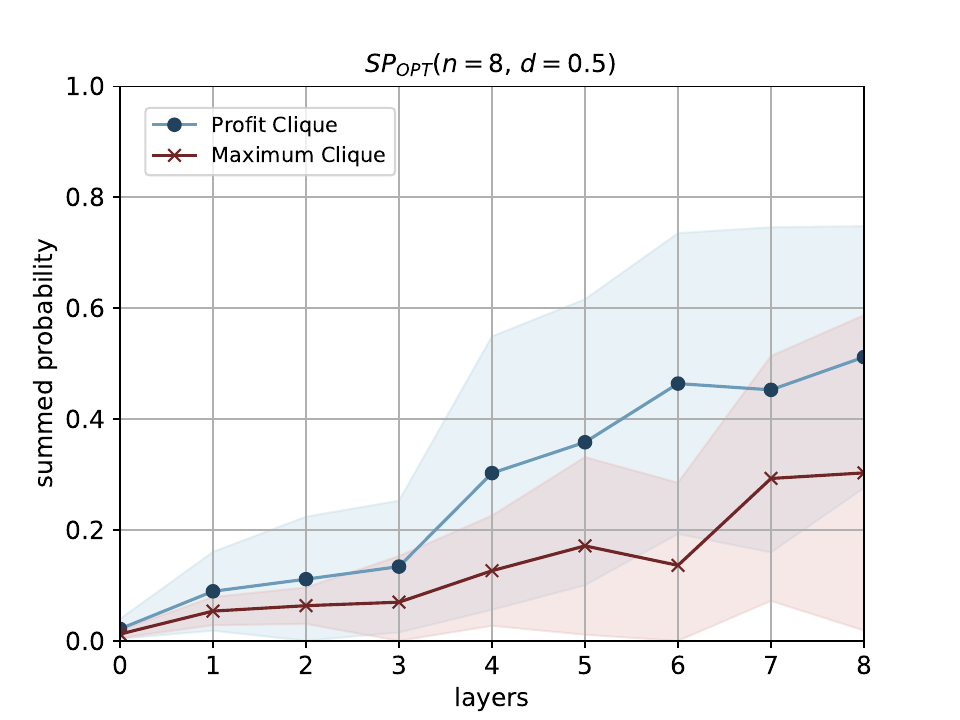} 
    \\
    \includegraphics[width=0.33\textwidth]{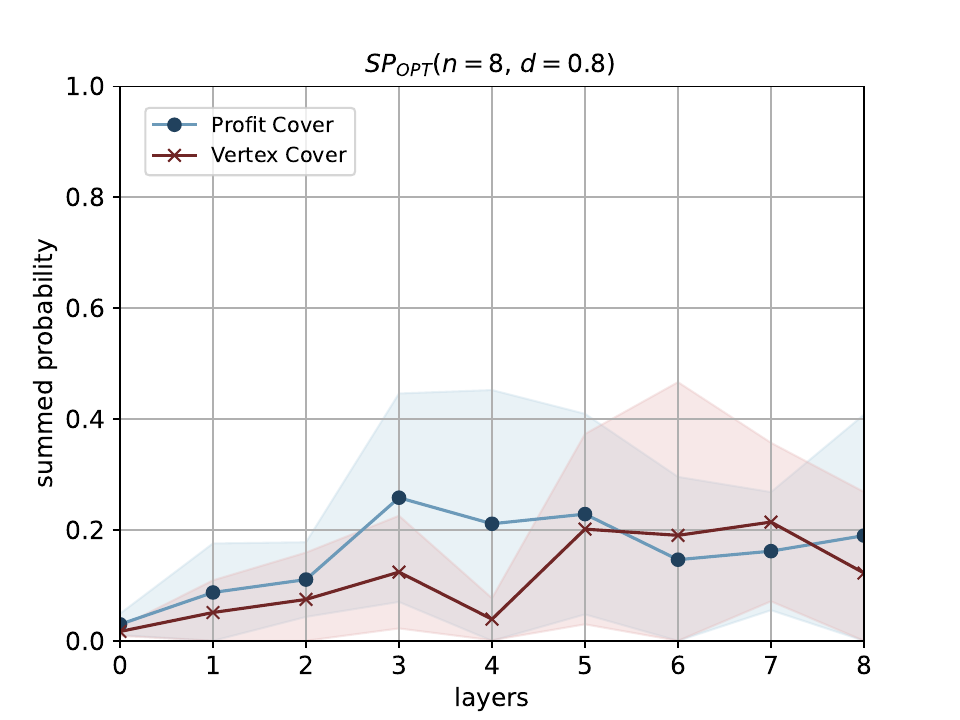} &
    \includegraphics[width=.33\textwidth]{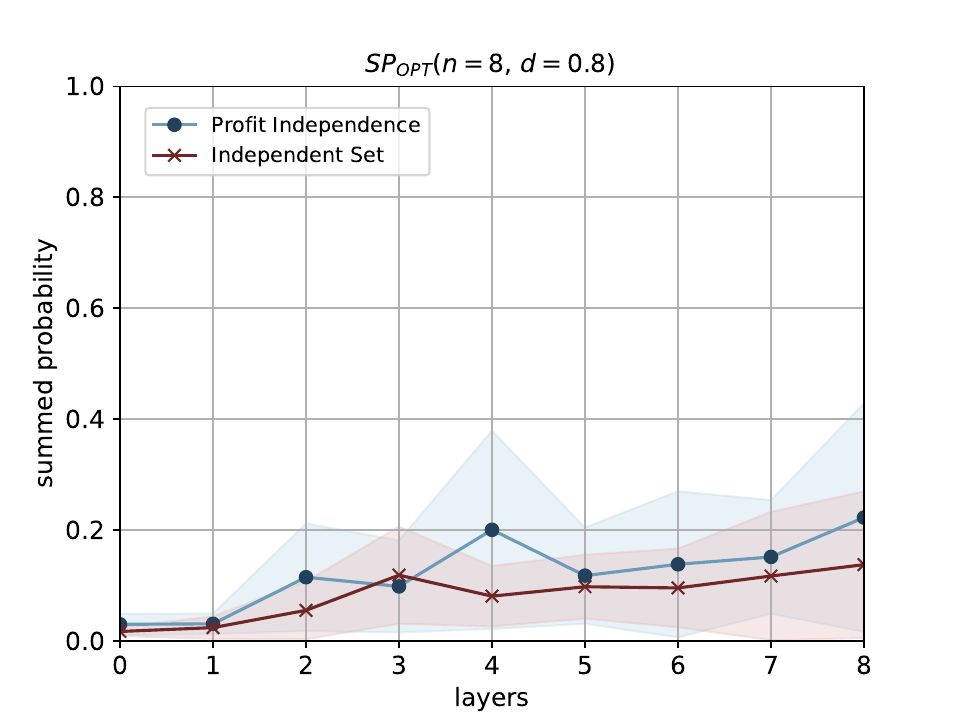} &
    \includegraphics[width=.33\textwidth]{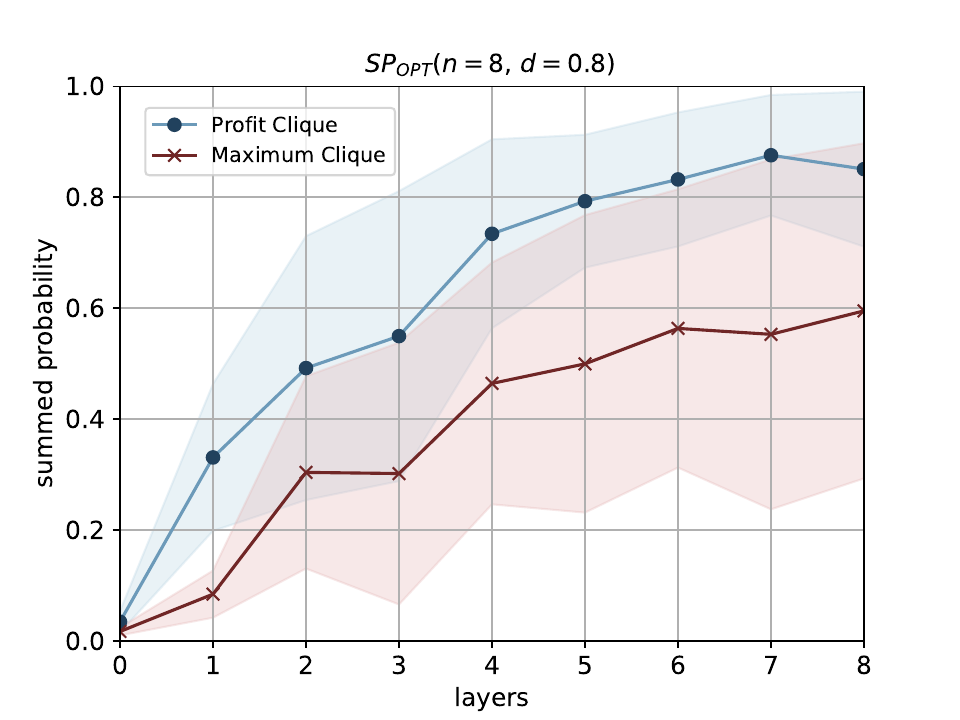} 
  \end{tabular}
  \caption{\textbf{Average summed probabilities of optimal solutions over ten graphs obtained over eight layers.} The graphs or figures in the leftmost column compare \textit{\textsc{MinVC}} and \textit{\textsc{MaxPC}}, the graphs or figures in the middle column show \textit{\textsc{MaxIS}} and  \textit{\textsc{MaxPI}}, the graphs or figures in the rightmost column compare \textit{\textsc{MaxCl}} and \textit{\textsc{MaxPCl}}. The rows indicate different edge densities of $d\in \{0.1, 0.3, 0.5, 0.8\}$. Each graph represents the comparison between the profit version (blue) and the constrained version (red). Data points correspond to the average value for each layer, with shaded regions indicating one standard deviation away from the mean.}
    \label{fig:vc-is-cl-summed-probs}
\end{figure*}

\subsection{Near-optimal analysis}
\label{subsec:near-opt}
Our analysis extends beyond focusing on the optimal solutions. In many practical scenarios, a range of near-optimal or sub-optimal solutions is desirable. For example, in financial applications, such as portfolio optimization, a solution with slightly higher risk might be preferred because it also comes with the potential for greater returns.  Therefore, we examine the probability distribution across this wider spectrum of solutions. 

Equations~\ref{eq:spopt1} and \ref{eq:spopt2} represent summed probabilities of optimal solutions,  second-,  third- best near-optimal solutions. 
\begin{equation}
\label{eq:spopt1}
    SP_{\text{OPT}-1} = SP_{\text{OPT}} + \sum_{k \in sol_{\text{opt}-1}} P (X=k)
\end{equation}

\begin{equation}
\label{eq:spopt2}
    SP_{\text{OPT}-2} = SP_{\text{OPT}} + SP_{\text{OPT}-1} + \sum_{k \in sol_{\text{opt}-2}} P (X=k)
\end{equation}

Here $sol_{\text{opt}-1}$ and $sol_{\text{opt}-2}$ refer to the sets of second and the third best solutions, respectively. 
Fig.~\ref{fig:vc-opt-summed-probs} shows summed probabilities of optimal and near-optimal solutions for \textit{\textsc{MinVC}} and \textit{\textsc{MaxPC}} for varying edge densities averaged over ten graphs. The leftmost column shows the summed probability of obtaining optimal solutions, the middle column shows the summed probability of obtaining optimal solutions and second best solutions. The rightmost column shows the summed probability of obtaining optimal solutions, second best solutions, and third best solutions.
In the case of profit problems, near-optimal solutions are those with the second and the third best profit. For example, in the case of \textit{\textsc{MaxPC}} the graph in Fig.~\ref{subfig:pc-a} has an optimal profit of 5. A second-best solution for this graph would be $PC = \{1, 3, 5\}$ with a profit of 4. A third best solution for this graph would be $PC=\{3, 5\}$ with a profit of 3. 

For constrained problems, such as the \textit{\textsc{MinVC}}, we choose the second-best solutions from the set of feasible solutions with a vertex cover size of $|VC|_{\text{OPT}} + 1$. Consider the same graph from Fig.~\ref{subfig:pc-a}, this has an optimal vertex cover of size 4. A second-best solution for this graph would be $\textit{VC} = \{1, 2, 3, 4, 5\}$ with a size of 5. A third best solution for this graph would be $PC=\{0, 1, 2, 3, 4, 5\}$ with a size of 6.  Note there are also examples of the second and the third best profit covers. Near-optimal solutions for vertex cover can only be obtained by adding vertices to the minimum vertex cover, however, near-optimal solutions to profit cover can be obtained by adding or removing vertices to the maximum profit cover. 

Given that constrained problems have a penalty-term formulation, it is worth noting that near-optimal solutions that have the second lowest or the third lowest cost (as calculated using Eq.~\ref{eq:vch}) may not correspond to feasible solutions depending on the penalties chosen (cf. Section \ref{sec:penalty-study}). Therefore, near-optimal solutions are chosen by summing the probabilities for all feasible solutions with size $|VC|_{\text{OPT}} + 1$ and $|VC|_{\text{OPT}} + 2$. 

\begin{figure*}[!bth]
\centering
  \begin{tabular}{@{}cccc@{}}

    \includegraphics[width=0.33\textwidth]{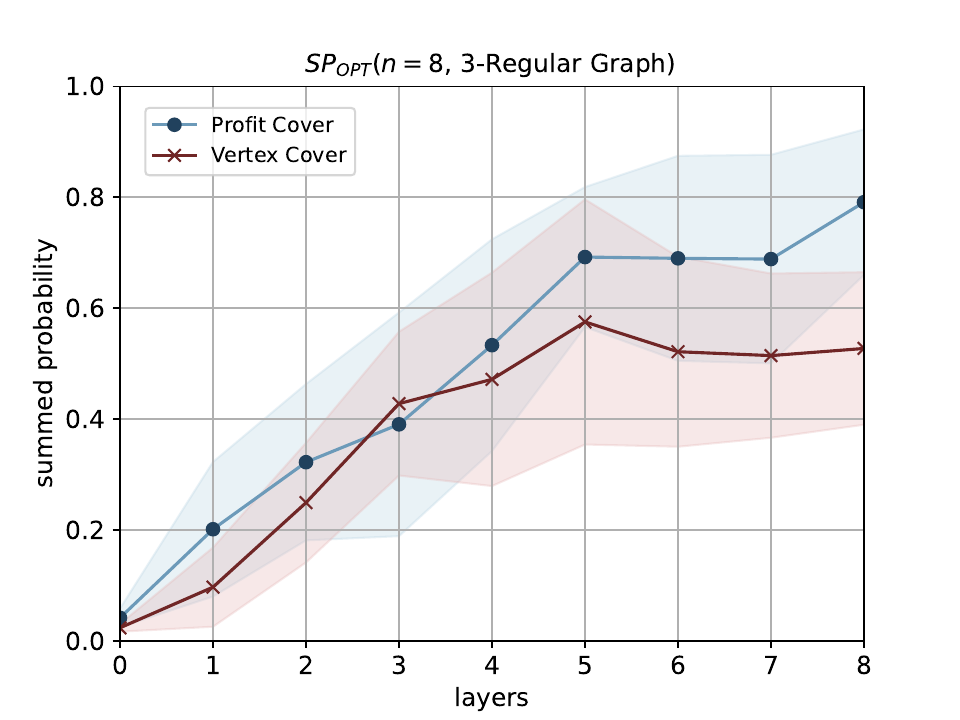} &
    \includegraphics[width=.33\textwidth]{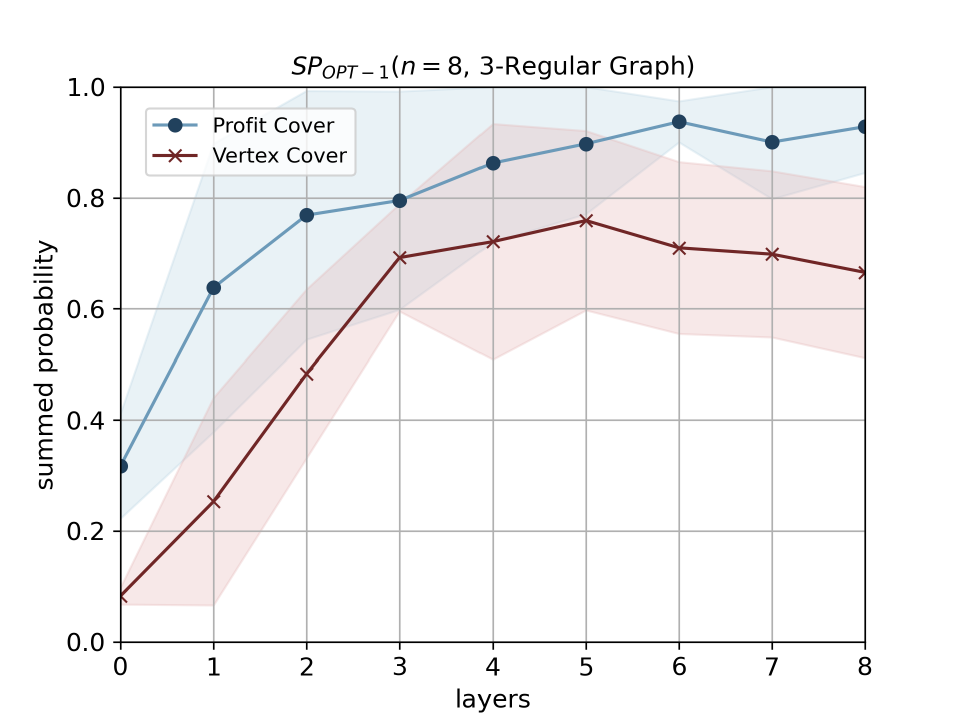} &
    \includegraphics[width=.33\textwidth]{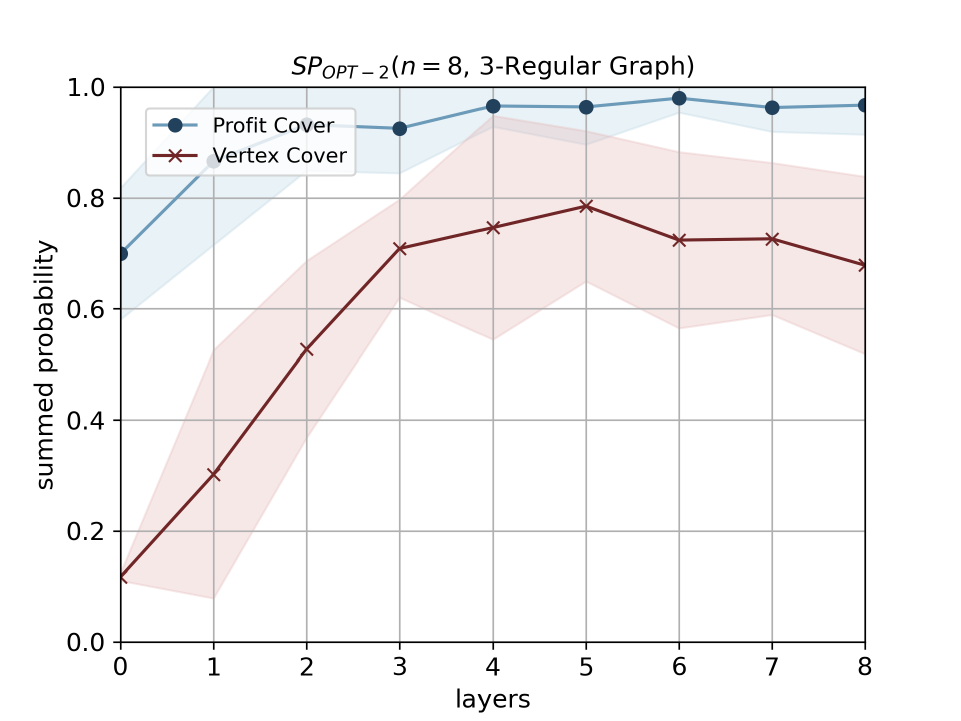} 
    \\
  \includegraphics[width=0.33\textwidth]{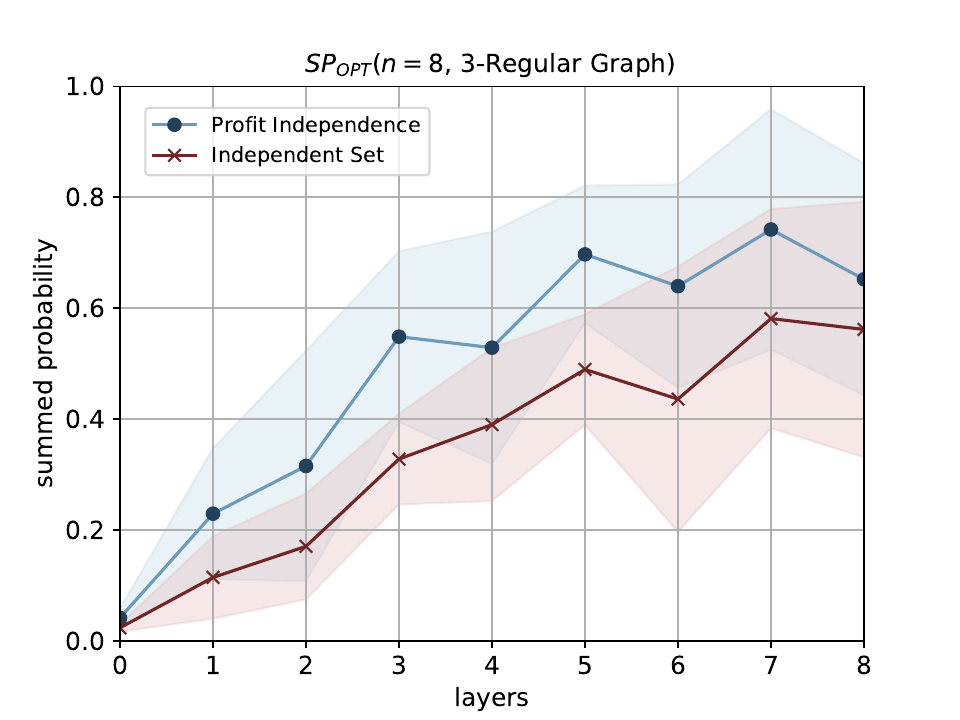} &
    \includegraphics[width=.33\textwidth]{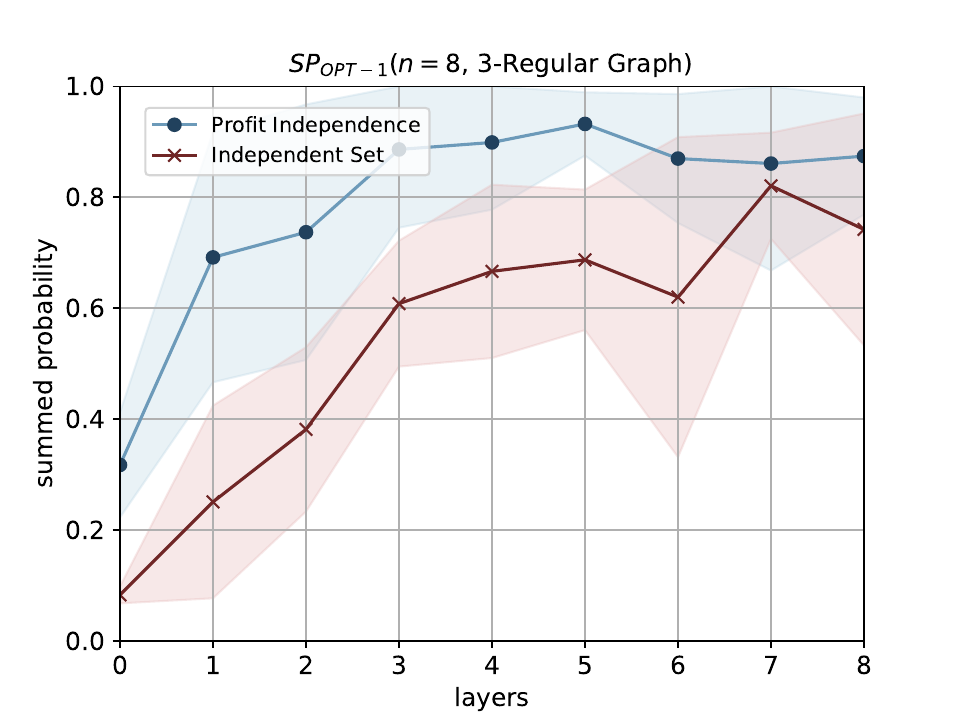} &
    \includegraphics[width=.33\textwidth]{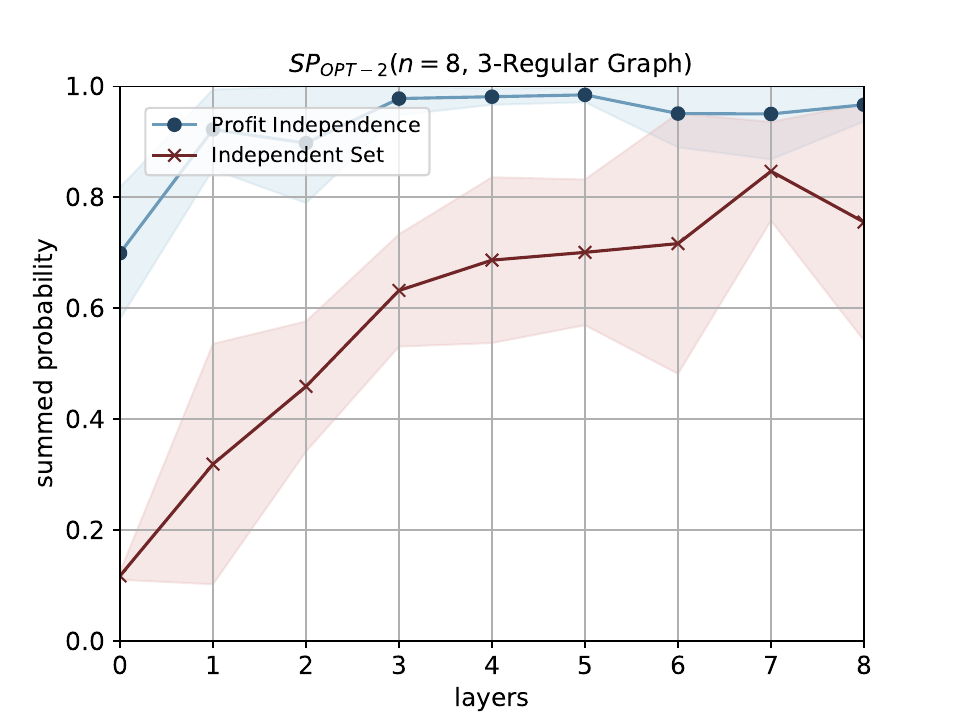} 
    \\
    \includegraphics[width=0.33\textwidth]{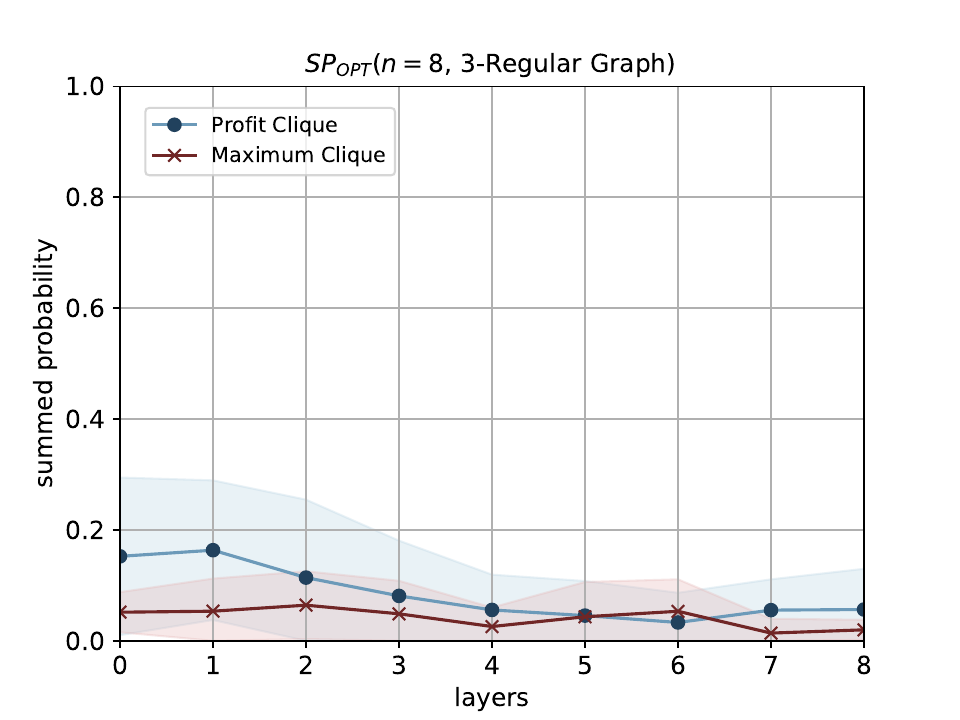} &
    \includegraphics[width=.33\textwidth]{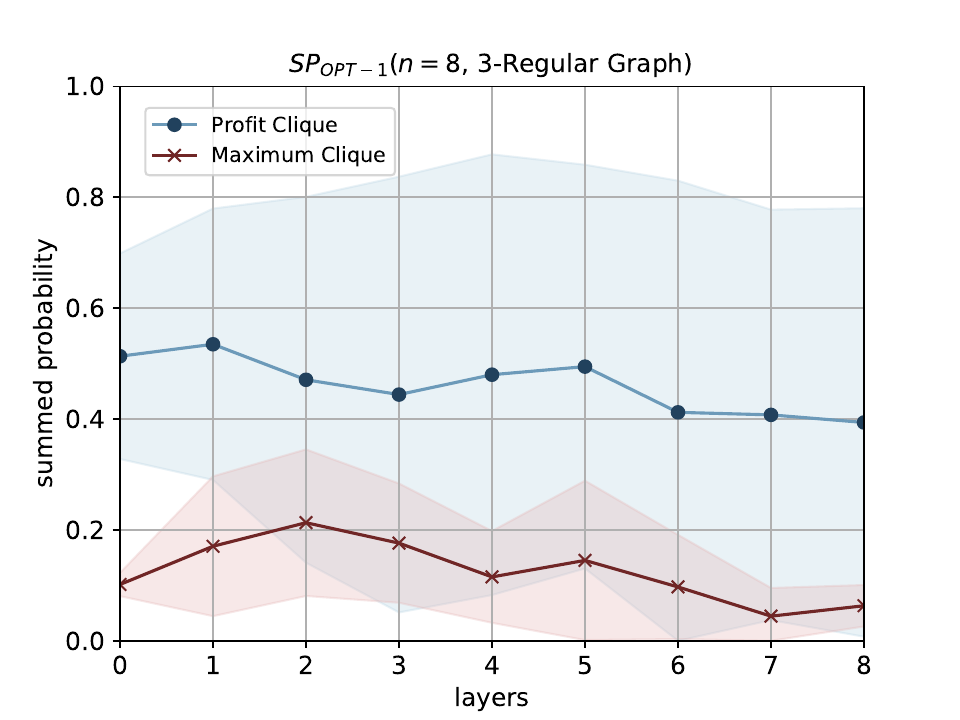} &
    \includegraphics[width=.33\textwidth]{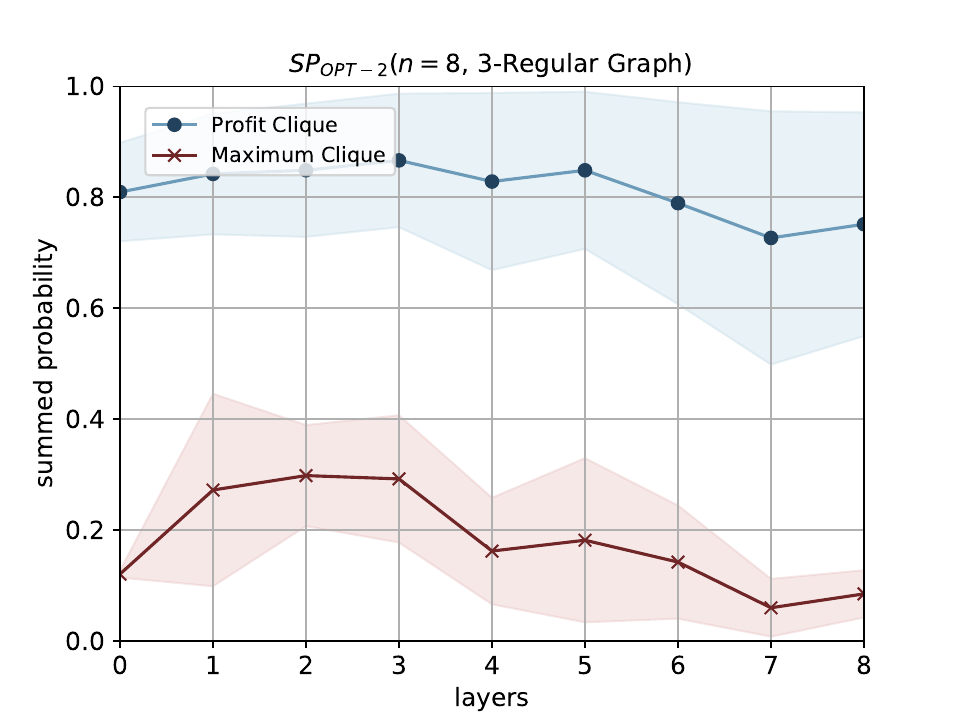} 
  \end{tabular}
  \caption{\textbf{Average summed probabilities of optimal solutions over ten 3-regular graphs obtained over eight layers.} The graphs or figures in the topmost row compare \textit{\textsc{MinVC}} and \textit{\textsc{MaxPC}}, the graphs or figures in the middle row compare \textit{\textsc{MaxIS}} and  \textit{\textsc{MaxPC}}, the  graphs or figures in the bottom row compare \textit{\textsc{MaxCl}} and \textit{\textsc{MaxPCl}}. The rows indicate the extent of near-optimality. Each graph represents the comparison between the profit version (blue) and the constrained version (red). Data points correspond to the average value for each layer, with shaded regions indicating one standard deviation away from the mean.}
    \label{fig:vc-is-cl-3-reg-summed-probs}
\end{figure*}
Fig.~\ref{fig:vc-is-cl-3-reg-summed-probs} shows the near-optimality analysis for 3-regular graphs for all six problems; \textit{\textsc{MinVC}} and \textit{\textsc{MaxPC}} in the first row, \textit{\textsc{MaxIS}} and \textit{\textsc{MaxPI}} in the middle row, and, \textit{\textsc{MaxCl}} and \textit{\textsc{PCl}} in the bottom row. 
\begin{figure*}[!htbp]
    \centering
    \includegraphics[width=0.95\linewidth]{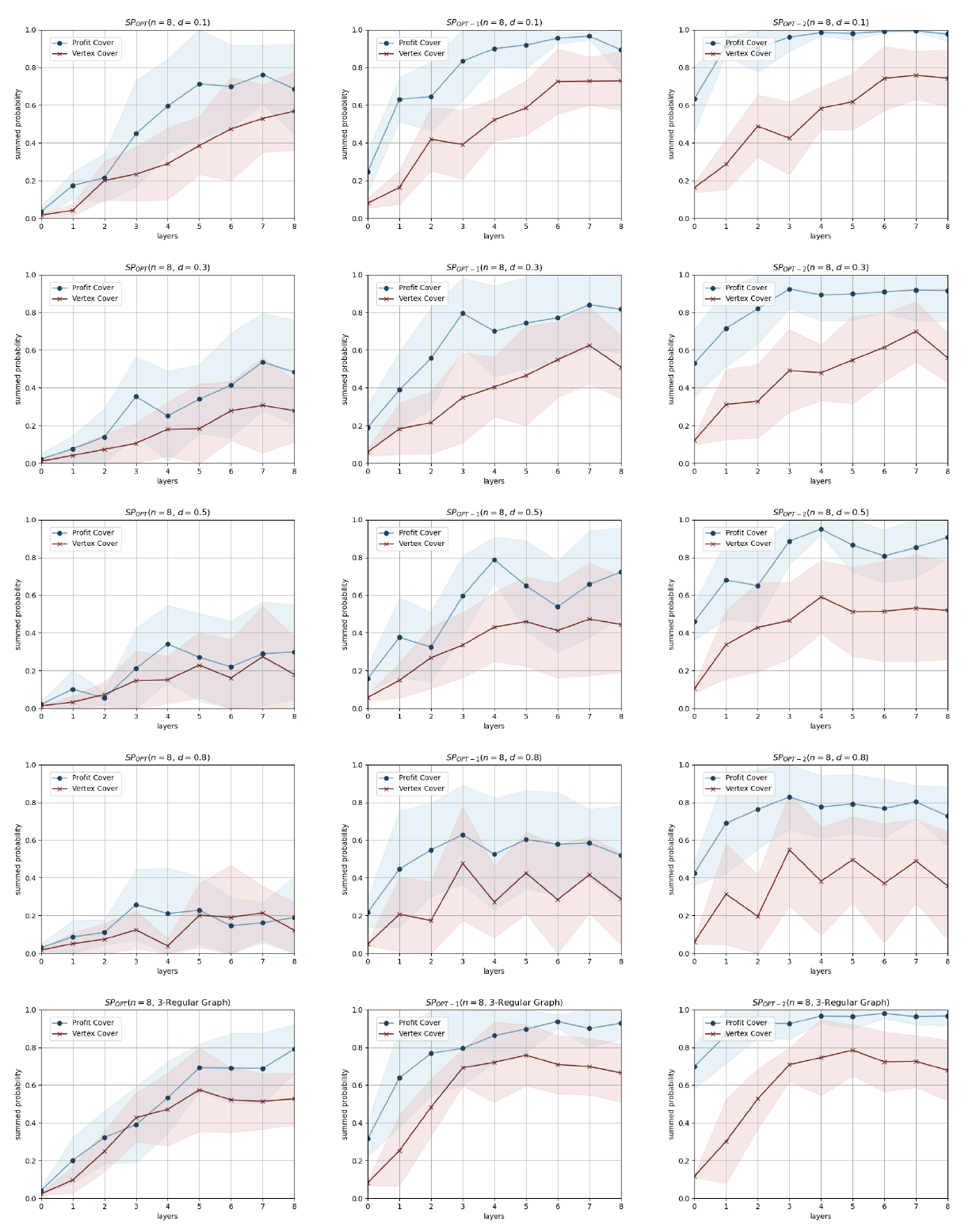}
    \caption{Probabilities of optimal and near-optimal solutions obtained over eight layers for \textit{\textsc{MinVC} and \textit{\textsc{MaxPC}}}. The figures or graphs in the first column show the summed probability of all the optimal solutions averaged over ten graphs with $n = 8$. The figures or graphs in the  second column depict the summed probability of obtaining the optimal solution and the second best solution. The figures or graphs in the  third column indicates the summed probability of the optimal, second best and the third best solutions. Each row indicates a different edge density with the last row showing the results for 3-regular graphs.}
    \label{fig:vc-opt-summed-probs}
\end{figure*}


\begin{figure*}[!th]
\centering
  \begin{tabular}{@{}cccc@{}}
    \includegraphics[width=0.33\textwidth]{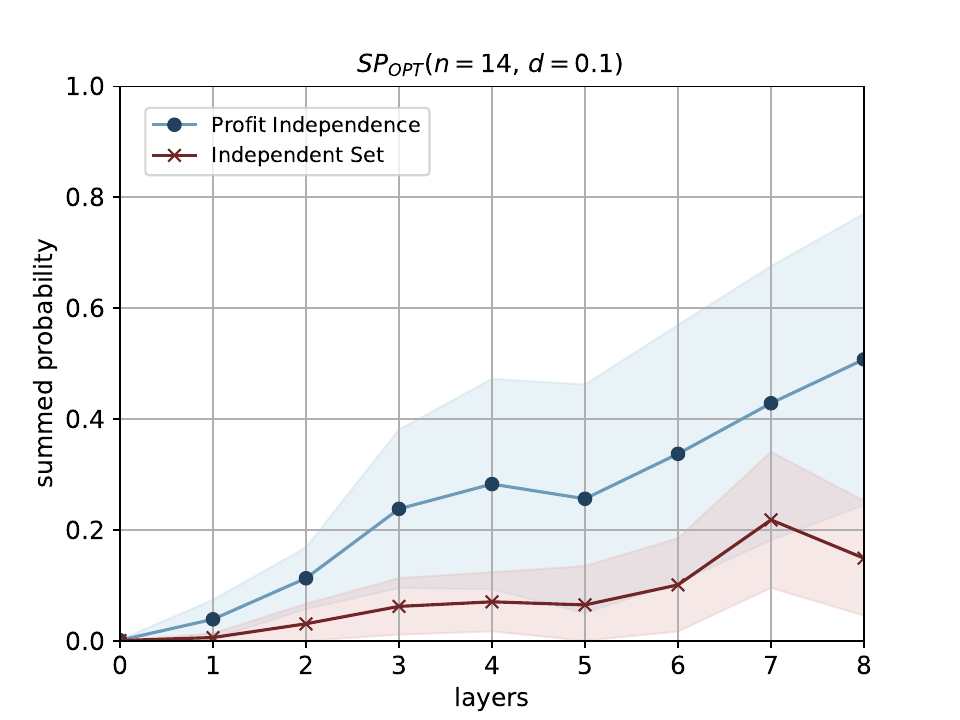} &
    \includegraphics[width=.33\textwidth]{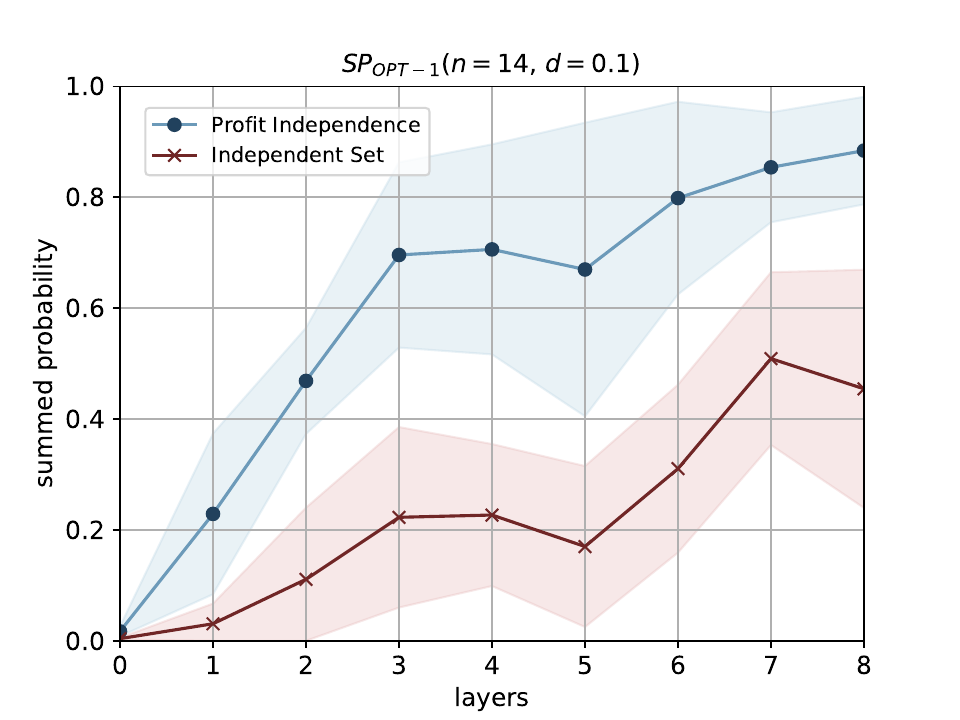} &
    \includegraphics[width=.33\textwidth]{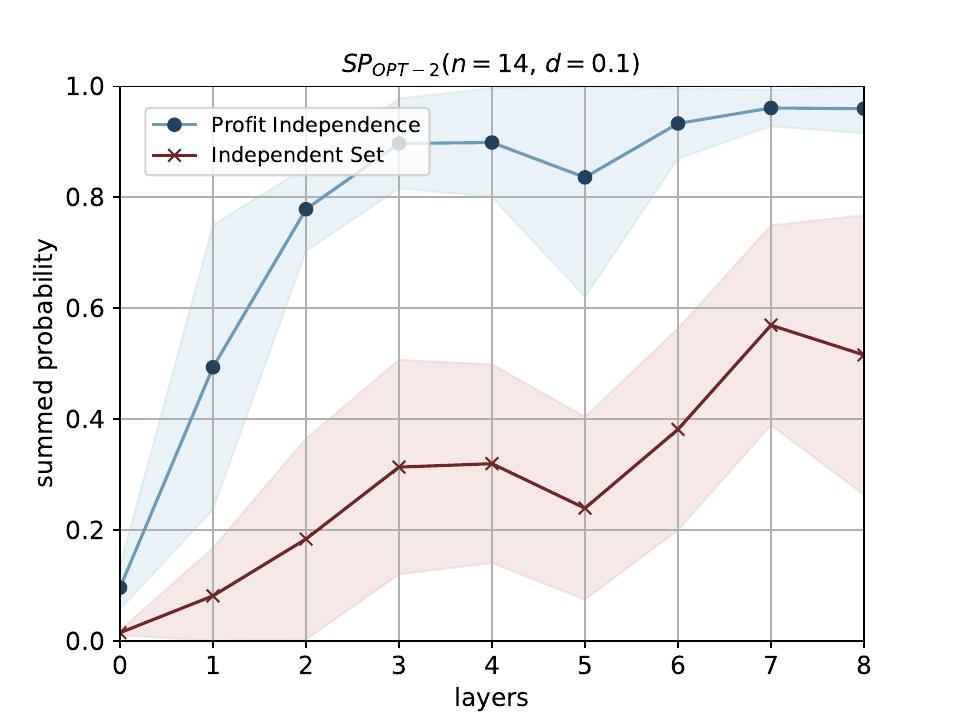} 
    \\
    \includegraphics[width=0.33\textwidth]{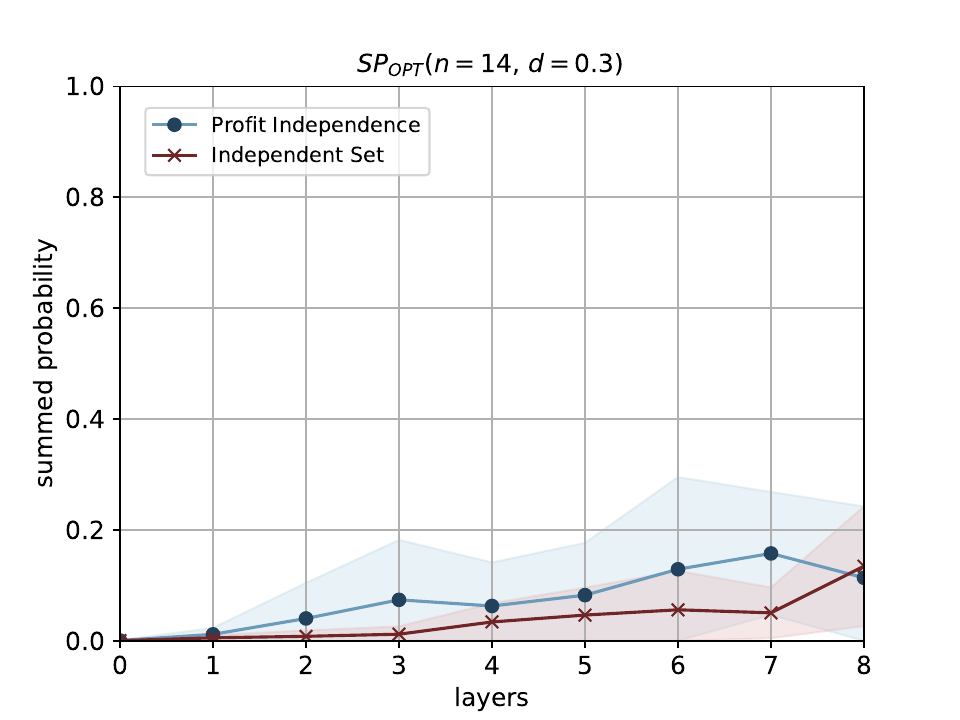} &
    \includegraphics[width=.33\textwidth]{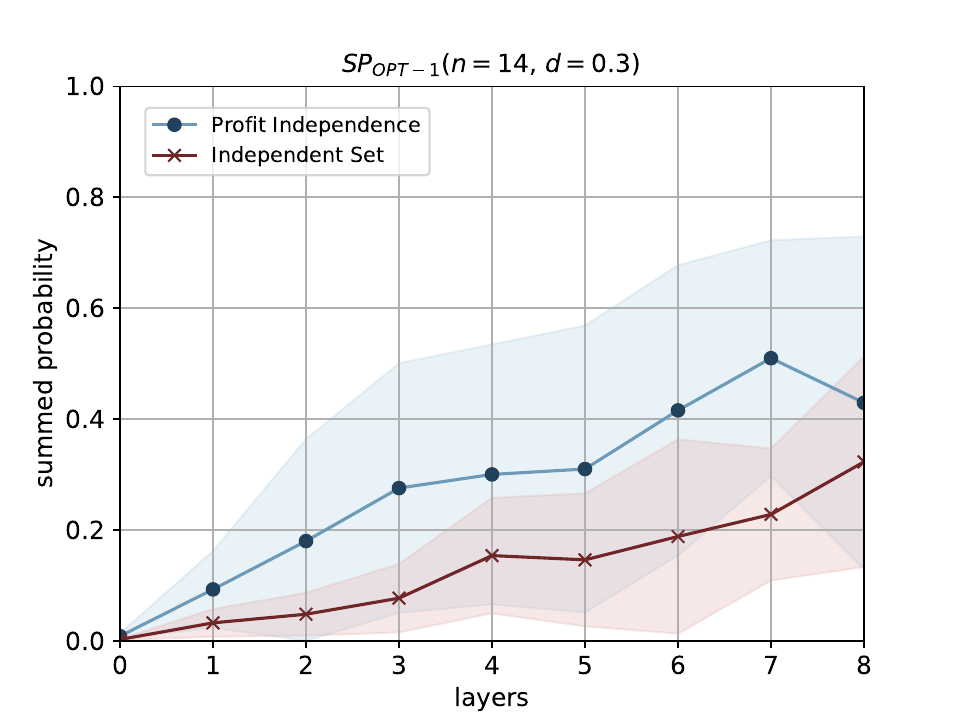} &
    \includegraphics[width=.33\textwidth]{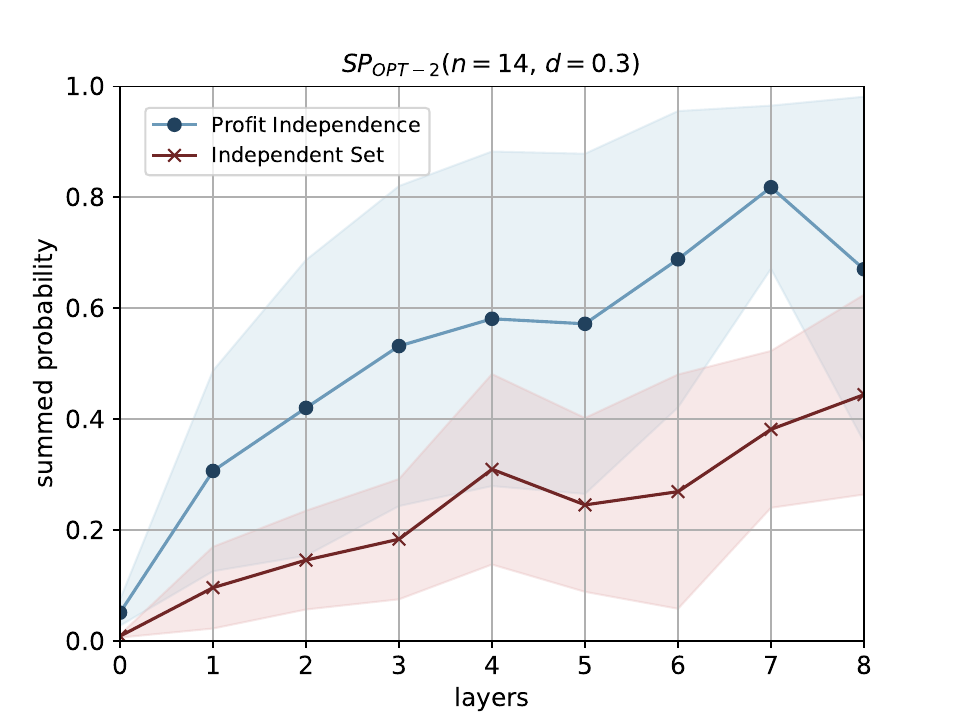} 
    \\
    \includegraphics[width=0.33\textwidth]{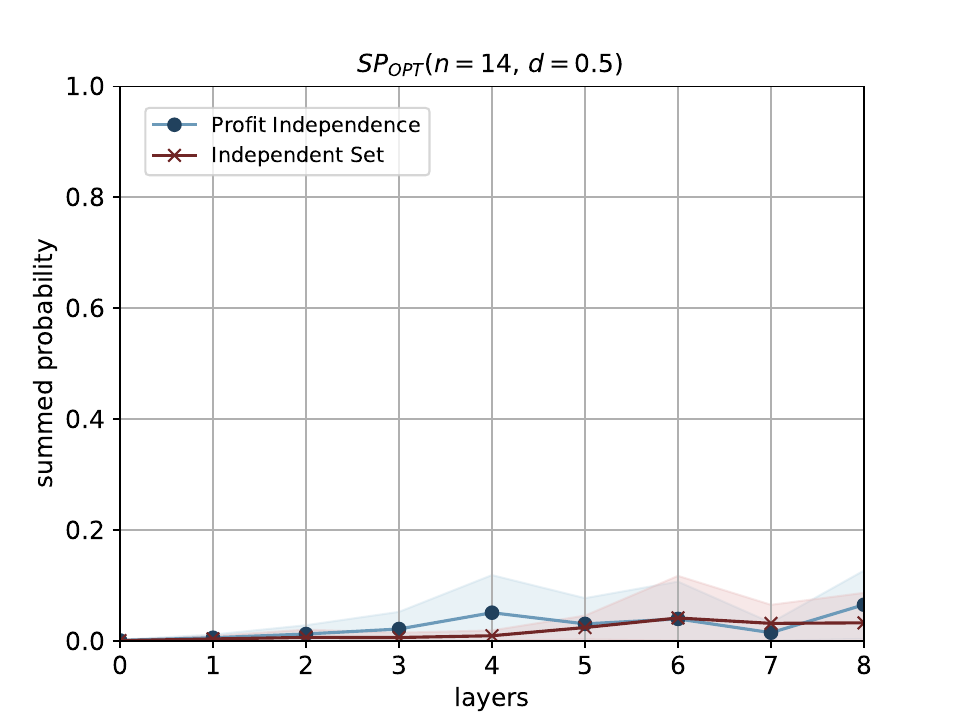} &
    \includegraphics[width=.33\textwidth]{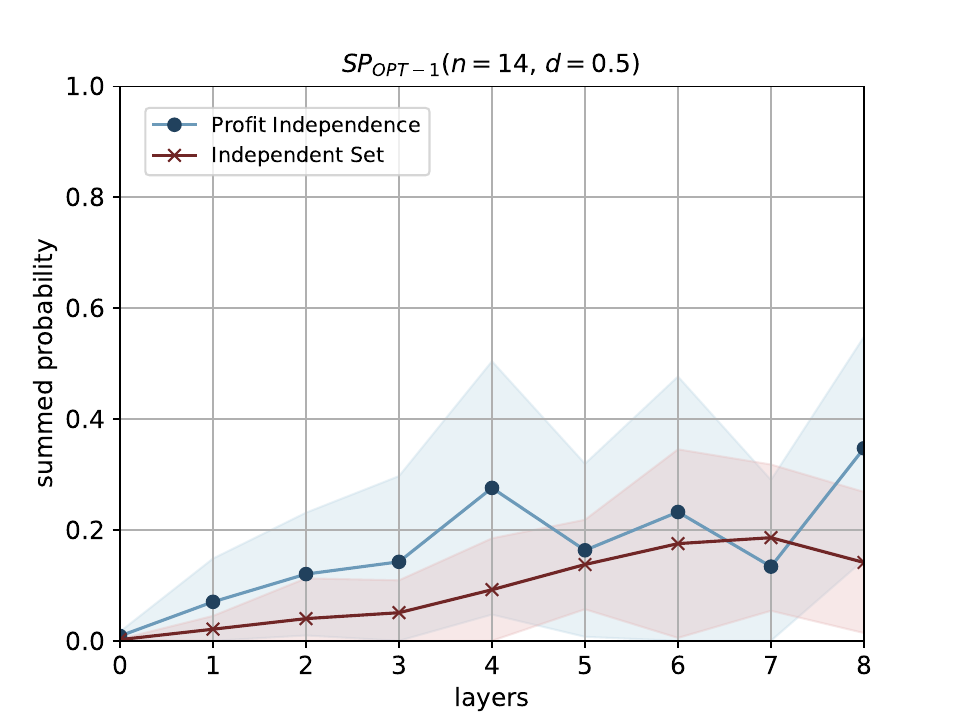} &
    \includegraphics[width=.33\textwidth]{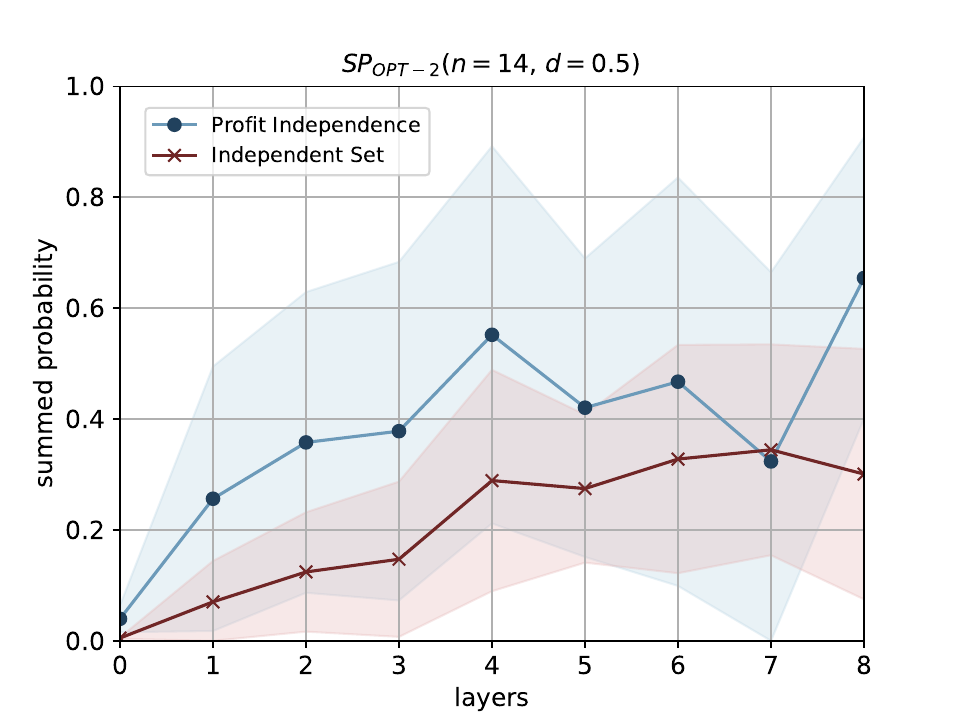} 
    \\
    \includegraphics[width=0.33\textwidth]{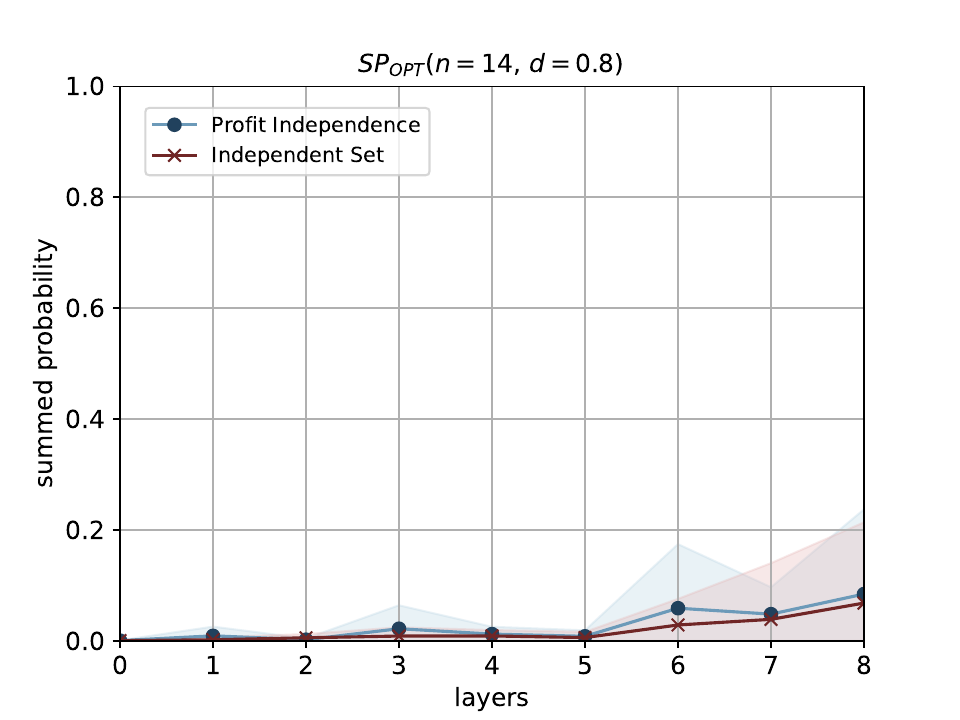} &
    \includegraphics[width=.33\textwidth]{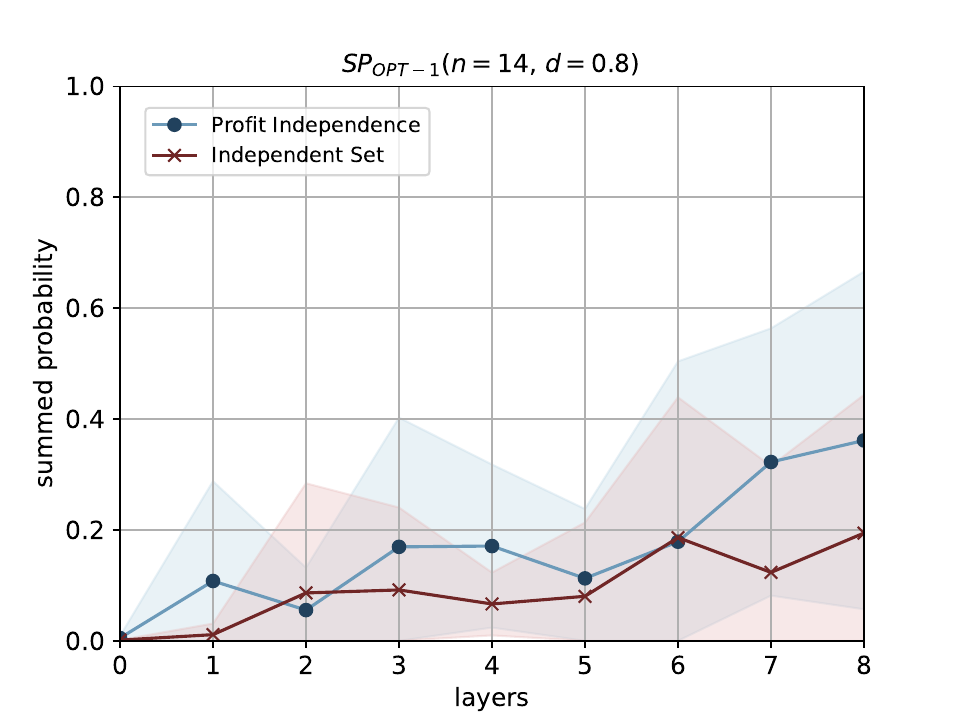} &
    \includegraphics[width=.33\textwidth]{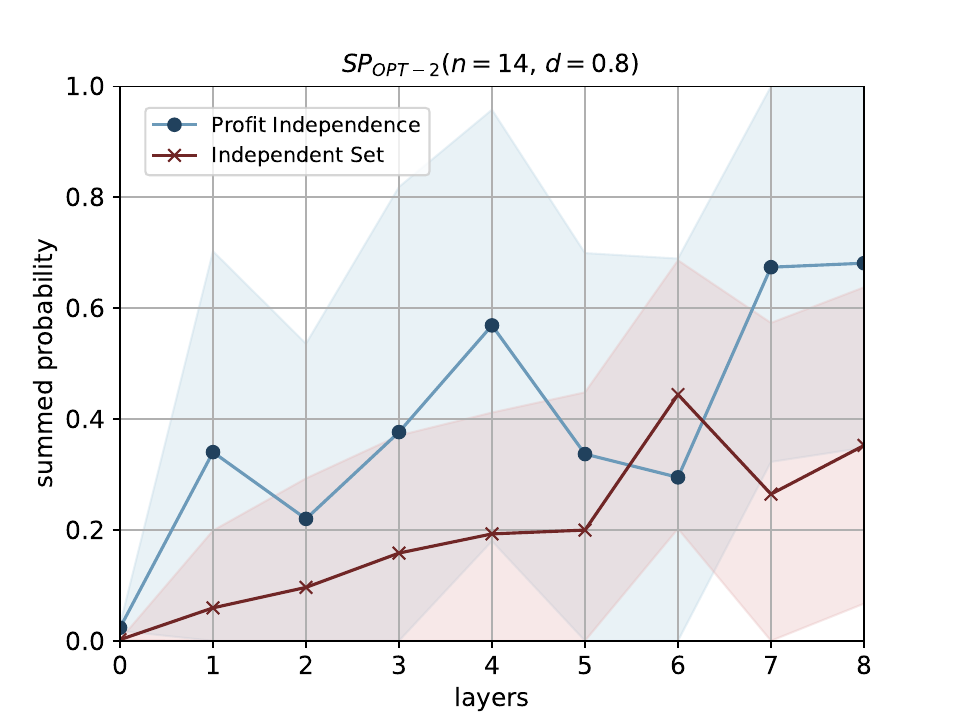}
  \end{tabular}
  \caption{\textbf{Summed optimal and near-optimal probabilities obtained over eight layers averaged over ten graphs for \textit{\textsc{MaxPI}} and \textit{\textsc{MaxIS}} for $n=14$.}The graphs or figures presented here can be interpreted in a manner similar to Figure~\ref{fig:vc-opt-summed-probs}. Each row represents a different edge density of the graphs. The figures or graphs in the leftmost column display the summed optimal probabilities, while those in the middle and right columns show the near-optimal probabilities.}
  \label{fig:is-pi-14}
\end{figure*}
\subsection{Approximation Ratio}

Fig.~\ref{fig:approx-ratios} shows the approximation ratios for solving \textit{\textsc{MaxPC}} for $n=7$ and $n=8$ over eight layers and ten graphs calculated using PennyLane. The approximation ratio is calculated using:
\begin{equation}
\label{eq:ar}
   r = \frac{|\text{Expectation value obtained} +\Delta_{\text{PC}}|}{|\text{Optimal Profit}|} 
\end{equation}

We show the results starting with layer $0$, which has no cost or mixer unitaries, and shows the expectation value of an equal superposition state added to the offset $\Delta_{\text{PC}}$ (note the addition of $\Delta_{\text{PC}}$ in Equation~\ref{eq:ar}). Adding the offset provides the exact expectation value of the profit obtained.

Figure~\ref{fig:approx-ratios-2} shows the average \textit{\textsc{MaxPC}} performance of 10 3-regular graphs each for $n=\{8, 10, 12, 14\}$ on PennyLane and for $n=\{20, 30, 40, 70\}$ calculated on QTensor. 

\subsection{Interpretation on large problems with QTensor}
Fig.~\ref{fig:cost-qtensor} shows the cost function evaluations for a random 3-regular graph for $n \in \{30, 40, 50, 70\}$. While expectation values indicate the closeness to the optimal solution, with profit cover, we have the advantage that the expectation value corresponds to the calculated profit, with the minimum expectation value equal to the maximum profit. For example, consider the  results for layer $3$  for {\sc MaxPC}. The cost value is close to $-15$, suggesting a high likelihood of a profit cover with a profit of $15$, and therefore, a high likelihood of a vertex cover of size at most $|E|-15$. 
\begin{figure*}[!th]
\centering
  \begin{tabular}{@{}cccc@{}}
    \includegraphics[width=0.5\textwidth]{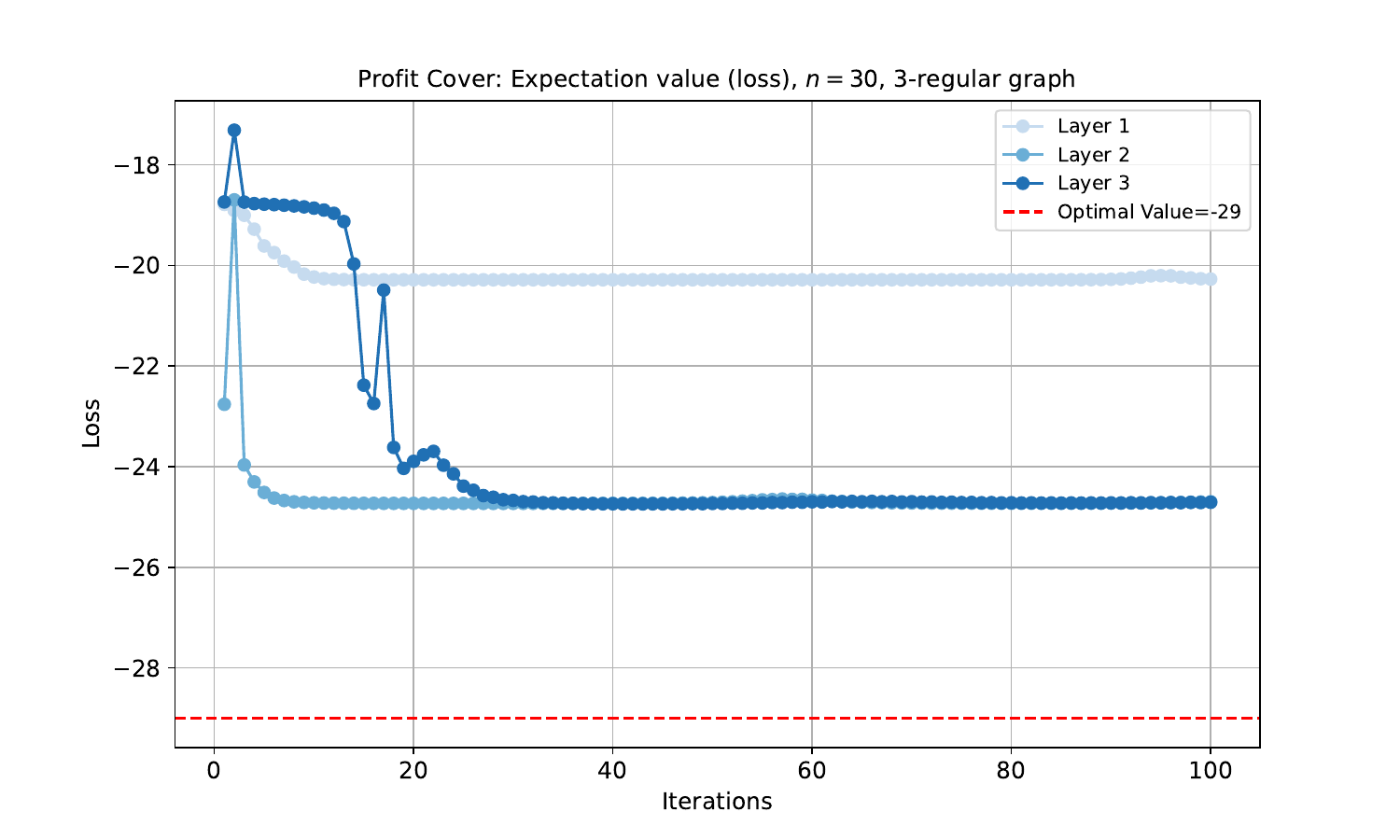} &
    \includegraphics[width=.5\textwidth]{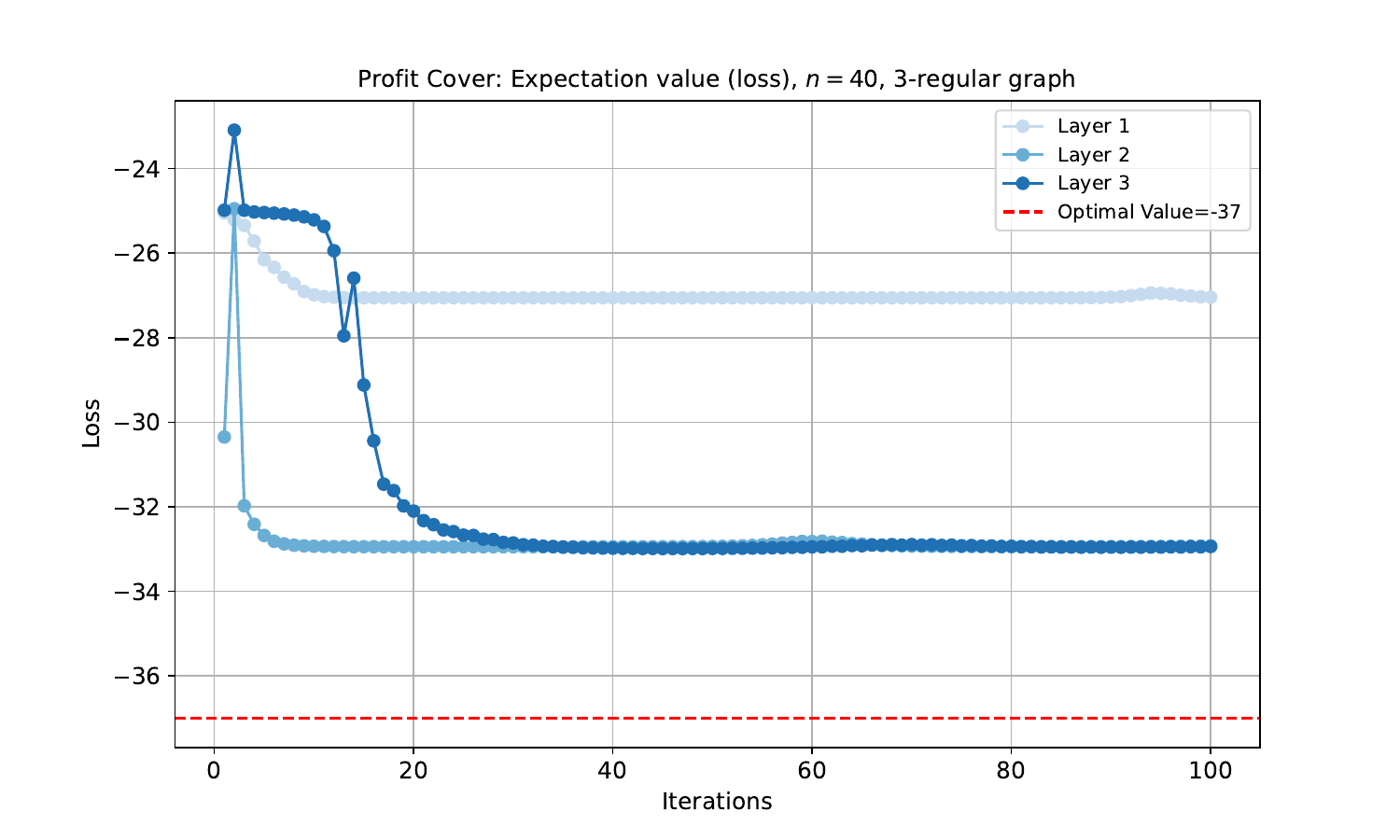} \\
    \includegraphics[width=.5\textwidth]{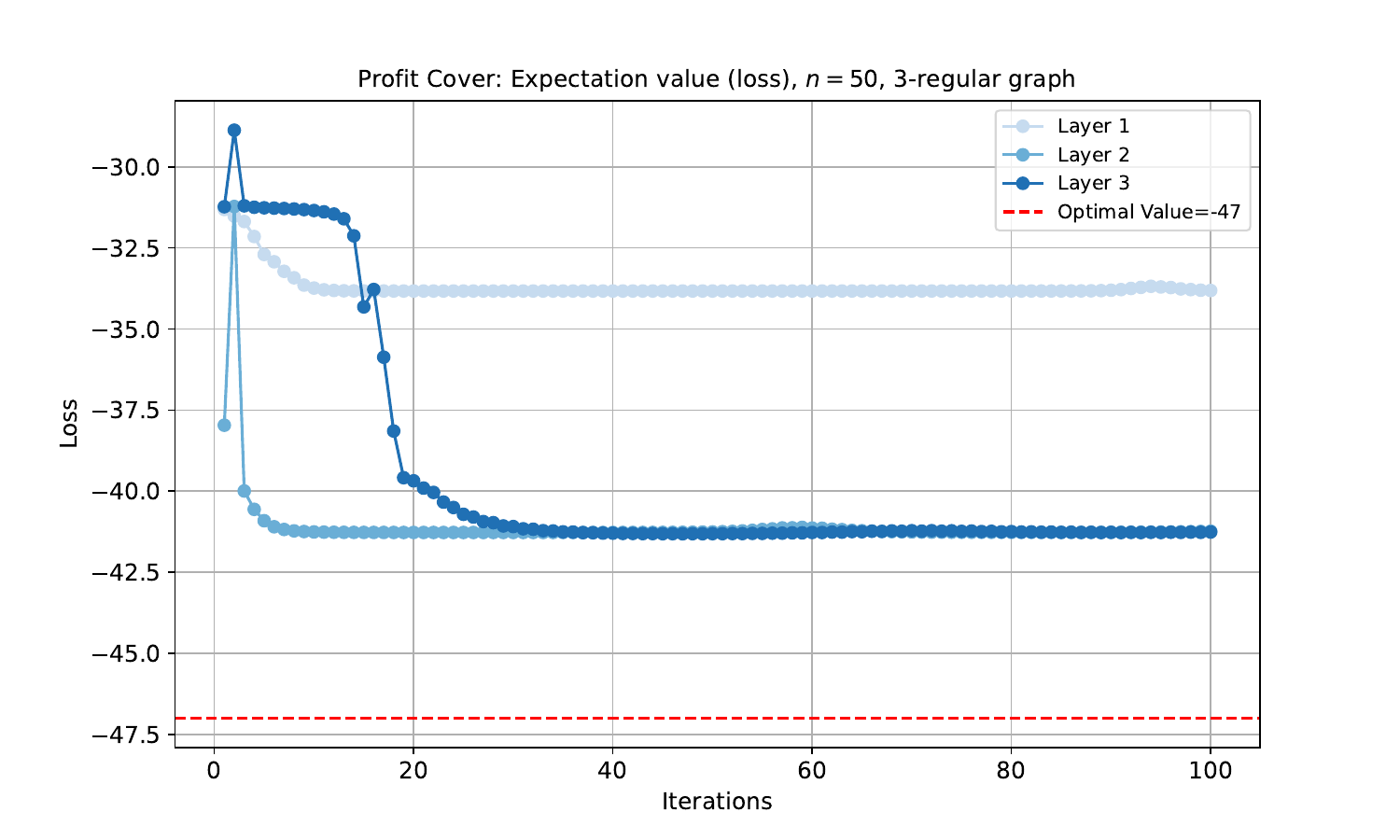} &
    \includegraphics[width=.5\textwidth]{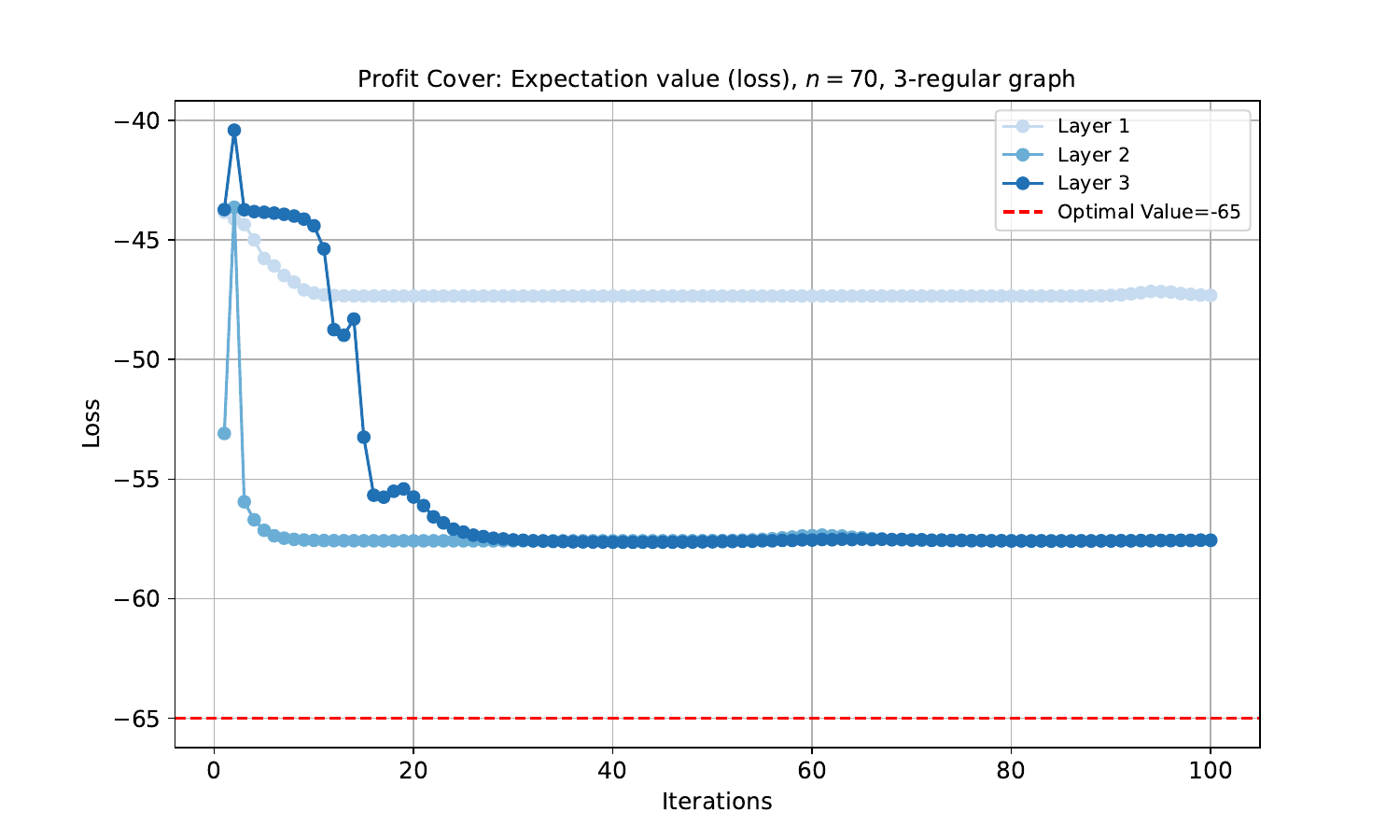} \\
    \includegraphics[width=.5\textwidth]{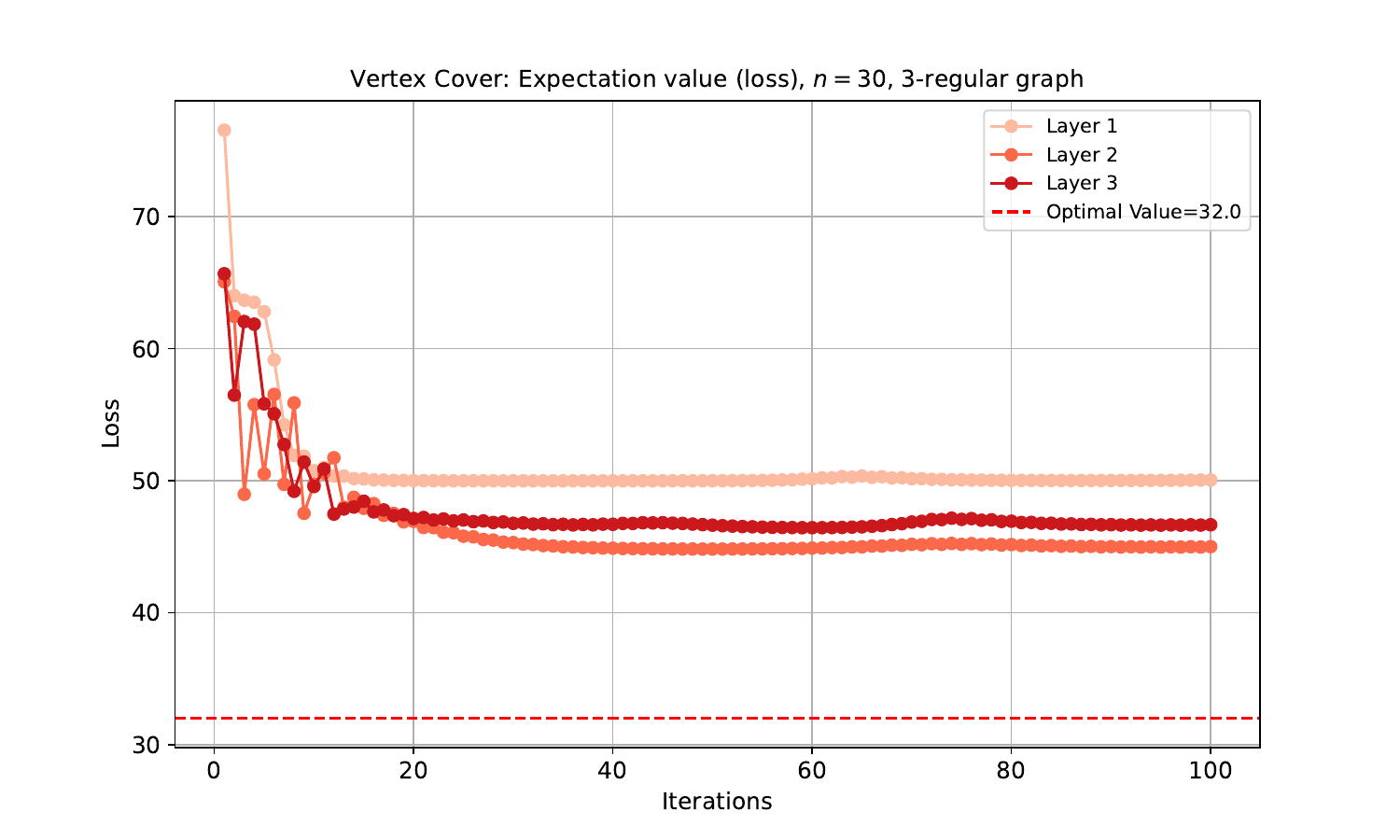} &
    \includegraphics[width=.5\textwidth]{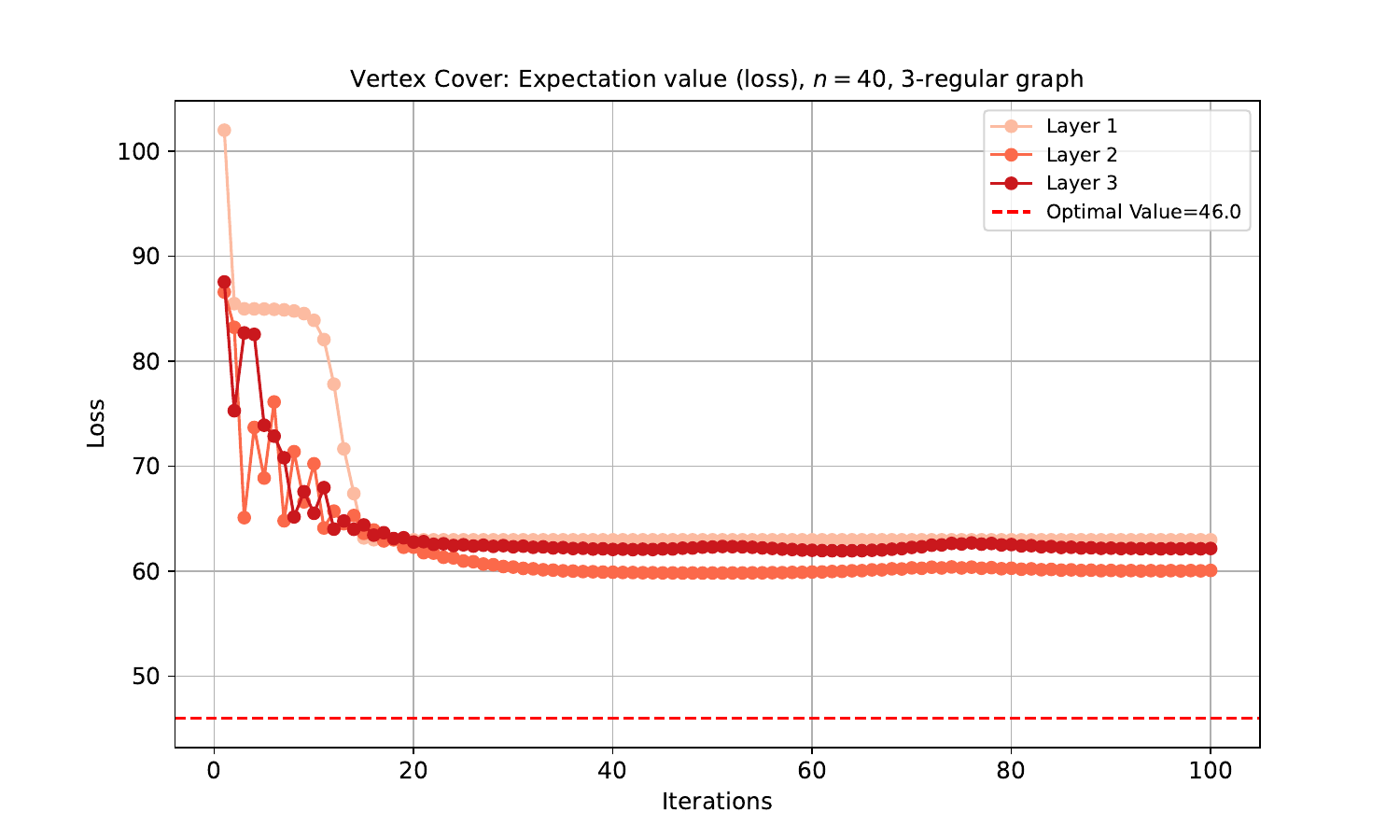} \\
    \includegraphics[width=.5\textwidth]{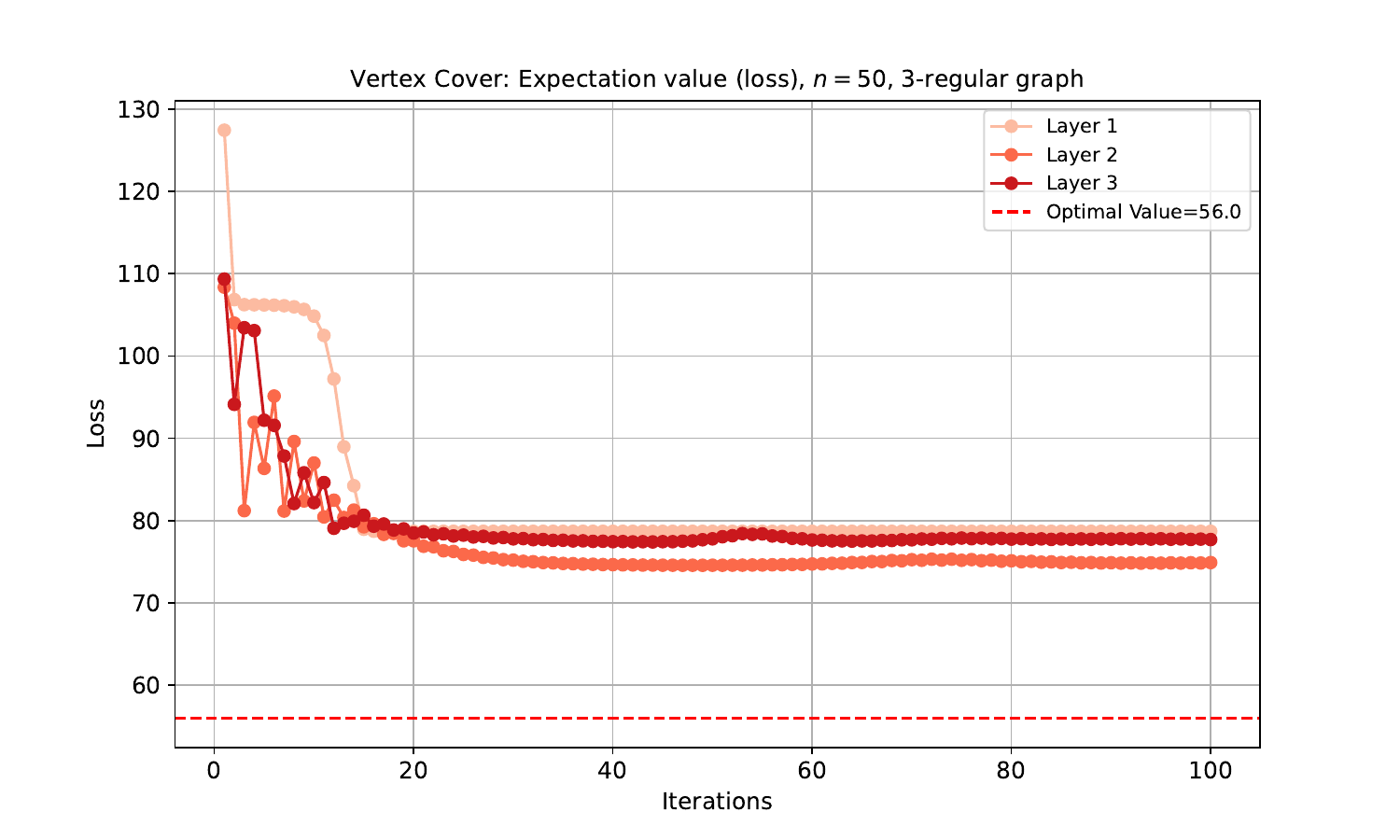} &
    \includegraphics[width=.5\textwidth]{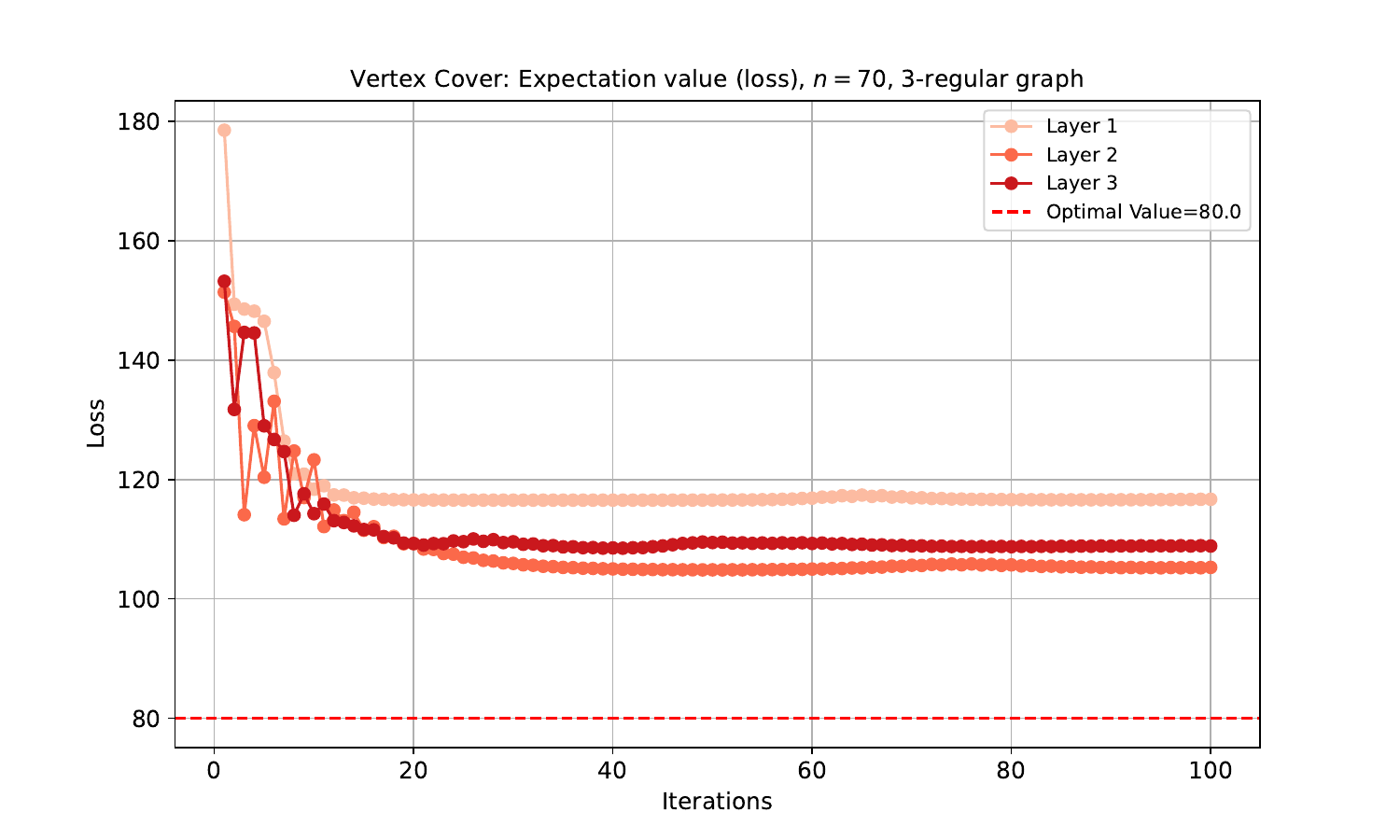} 
  \end{tabular}
  \caption{Cost optimization over $n \in \{30, 40, 50, 70\}$, for $3$-regular graphs on QTensor for \textit{\textsc{MaxPC}} and \textit{\textsc{MinVC}}.}
  \label{fig:cost-qtensor}
\end{figure*}

\begin{figure*}[!th]
    \centering
    \begin{tabular}{@{}cccc@{}}
    \includegraphics[width=0.5\textwidth]{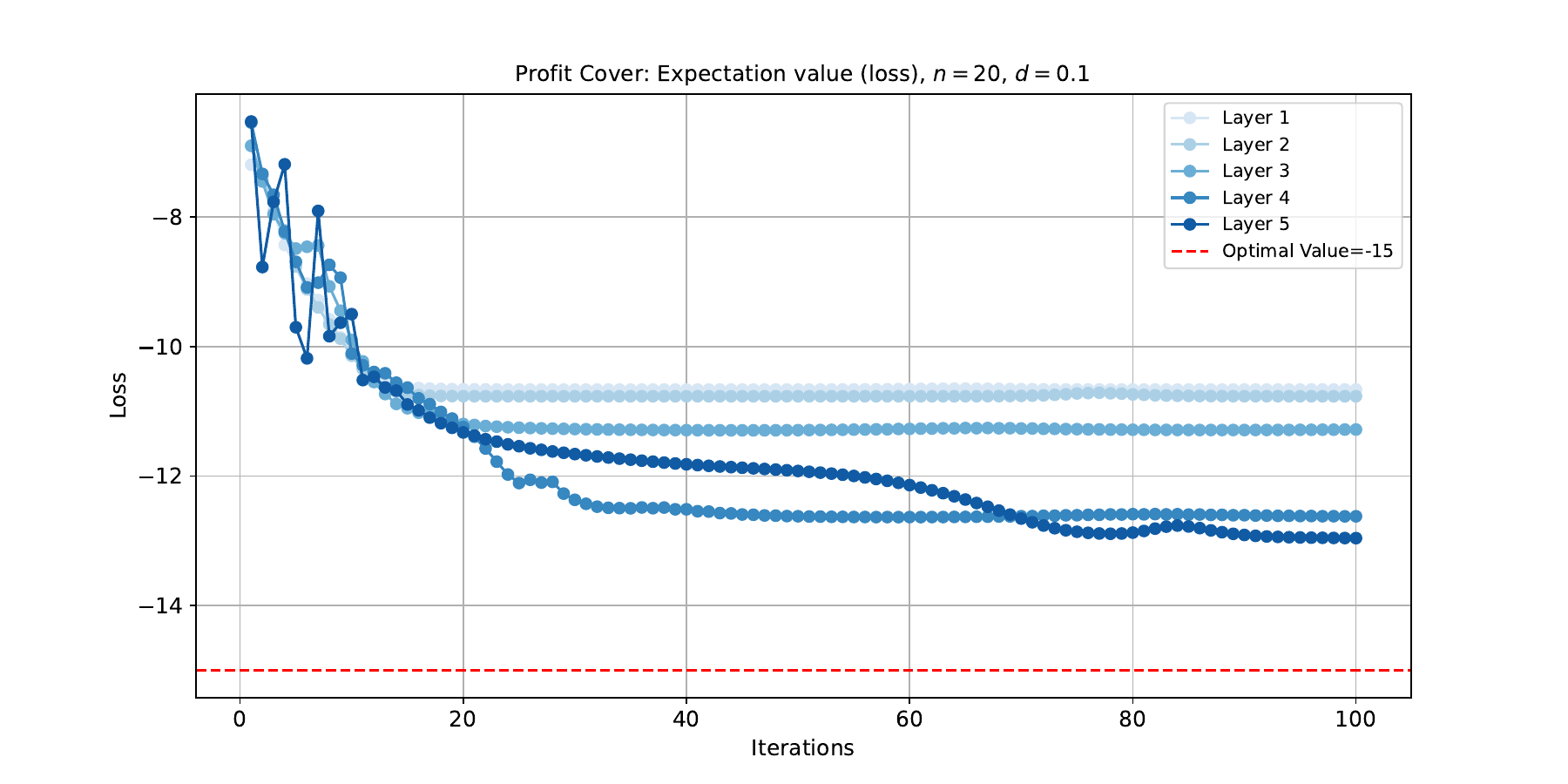} &
    \includegraphics[width=0.5\textwidth]{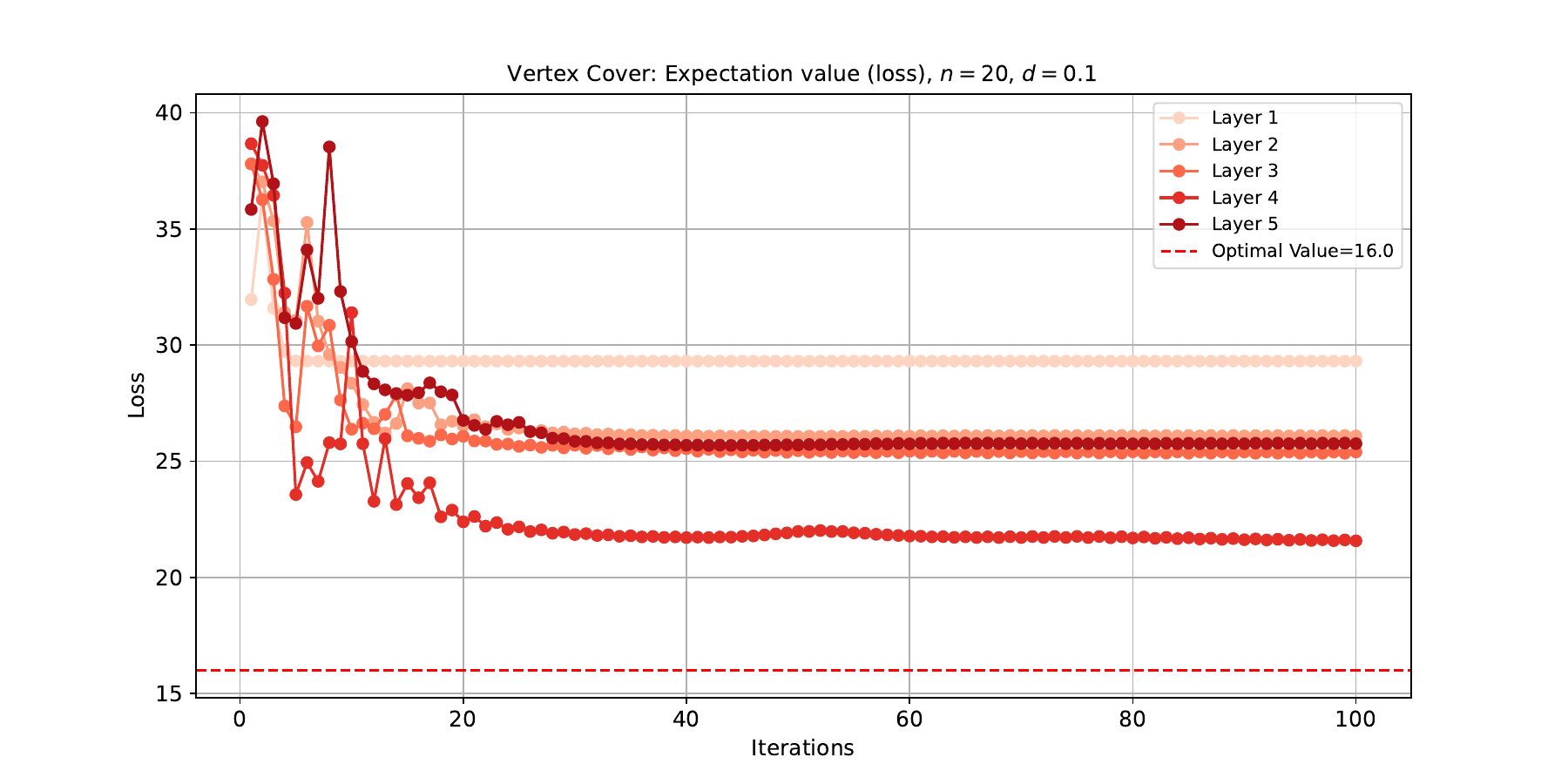} \\
    \includegraphics[width=0.5\textwidth]{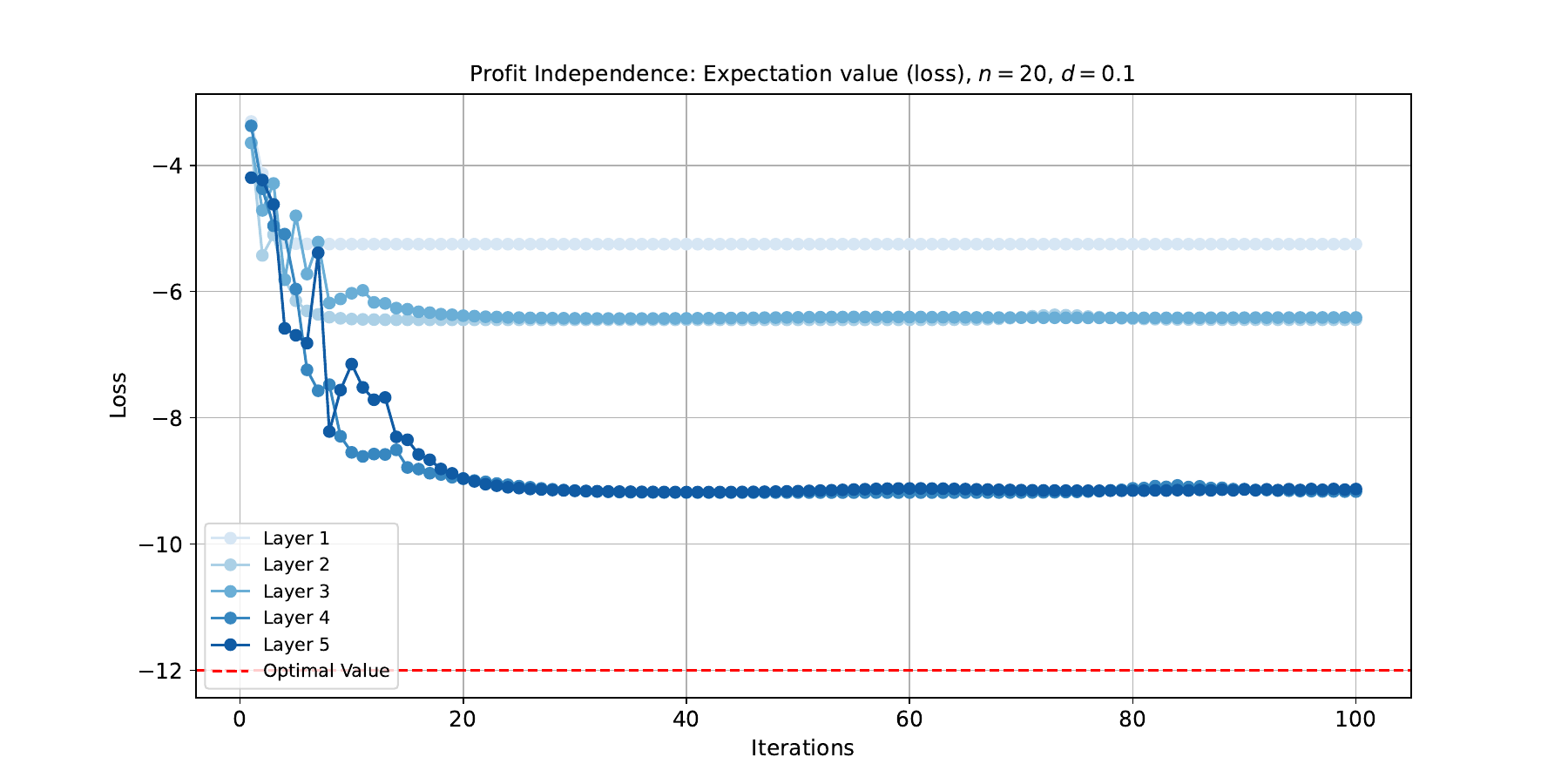} &
    \includegraphics[width=0.5\textwidth]{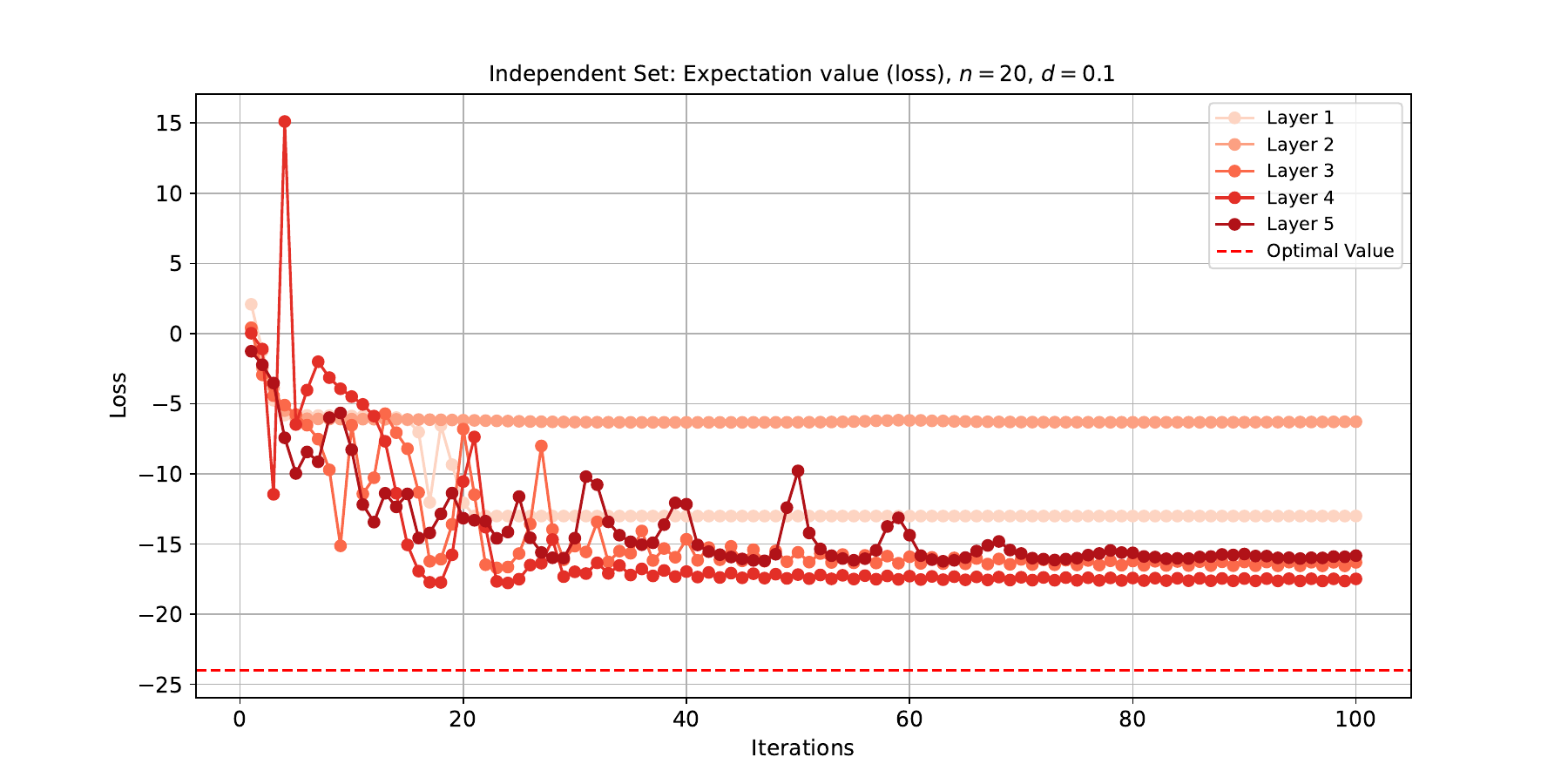} \\

  \end{tabular}
    \caption{Cost optimization for $n=20$ for a sparse (edge probability of 0.1) graph, on a graph over five layers. The figures or graphs in the top row present the expectation value minimized over 100 iterations for \textit{\textsc{MaxPC}} and \textit{\textsc{MinVC}}, while the figures or graphs in the bottom row show the results for \textit{\textsc{MaxPI}} and \textit{\textsc{MaxIS}}. Note that although \textit{\textsc{MaxPC}}, \textit{\textsc{MaxPI}}, and \textit{\textsc{MaxIS}} are maximization problems, the results presented correspond to the minimization versions of these problems.}
    \label{fig:vc-is-20-sparse-5layers}
\end{figure*}
\section{Discussion}\label{sec:discussion}
Our findings provide valuable insights into formulating and solving constrained optimization problems with vanilla QAOA. 
%
 Solving the cost Hamiltonian for constrained problems with QAOA requires fine-tuning of penalty parameters in addition to optimizing the variational parameters, to be able to maneuver the cost function landscape successfully. For profit formulations, since every state vector represents a feasible solution, one only needs to optimize the circuit's variational parameters without the need to set appropriate penalties.


Fig.~\ref{fig:vc-both} shows one  instance of probability distribution with $p\in \{1, 2, 3\}$ layers for the five node graph shown in the inset of the graph, run on PennyLane's standard simulator (\texttt{default.qubit}). Since there are three maximum profit covers, we see a higher probability for the states $01000$, $01100$, and $11000$ than all other states. While $01100$ and $11000$ are minimum vertex covers, the state $01000$ (i.e., $\textit{VC} = \{1\}$) is not a vertex cover but can be converted into a minimum vertex cover (i.e., $\textit{VC} = \{1, 0\}$ or $\textit{VC}=\{1, 2\}$) of size $2$ (Alg.~\ref{alg:add_vertices}). At $p=3$, the vertex cover formulation demonstrates improved outcomes.
Note that the summed probabilities for the profit covers ($43.1\%$) of the largest size are higher than the summed probabilities for the minimum vertex covers ($35.5\%$). 


For the same problem instance, the data points in burgundy in Fig.~\ref{fig:vc-both} shows the results of post-processing profit cover using Alg.~\ref{alg:add_vertices}. Recall that every bit string for profit cover is feasible, whereas only a subset of the bit strings are feasible for vertex cover. Therefore, we run Alg.~\ref{alg:add_vertices} on every bit string to convert each of those into feasible vertex covers and add up the probabilities. 
Fig.~\ref{fig:vc-is-cl-summed-probs} shows the summed probabilities over eight layers for constrained versions and profit formulations for varying edge probabilities averaged over ten graphs. 
We observe a trend of higher summed probabilities of the optimal solution with for the profit formulations than the constrained formulations. From Fig.~\ref{fig:vc-is-cl-summed-probs}, one can see that for optimal constrained problem solutions to catch up to the same summed probability value as profit cover, on average, it takes additional QAOA layers. Fig.~\ref{fig:vc-opt-summed-probs} shows the summed probabilities for near-optimal solutions. The near-optimal analysis shows a significant increase in the probability values for the profit formulations as compared to the constrained formulation. Due to the unconstrained nature of the profit formulations, many infeasible solutions are also candidate near-optimal solutions as compared to the constrained formulation. Note that while solutions may be infeasible, for profit versions, they can be easily converted to a feasible solution with no change in profit using classical post-processing. While the same classical post-processing routine can be used for the constrained formulations, it does not hold the same value as the constrained formulations promote feasible solutions over infeasible solutions. 

We observe a dip in the summed probabilities for both profit and constrained versions of vertex cover and independent set for dense graphs. A few dips in the summed probabilities over layers can be attributed to the classical optimizer becoming trapped at local minima. Further investigation into the cost function landscape of profit cover problems is warranted here. 

\begin{figure}[!th]
    \centering
    \begin{subfigure}[b]{0.50\textwidth} 
        \centering
        \includegraphics[width=\textwidth]{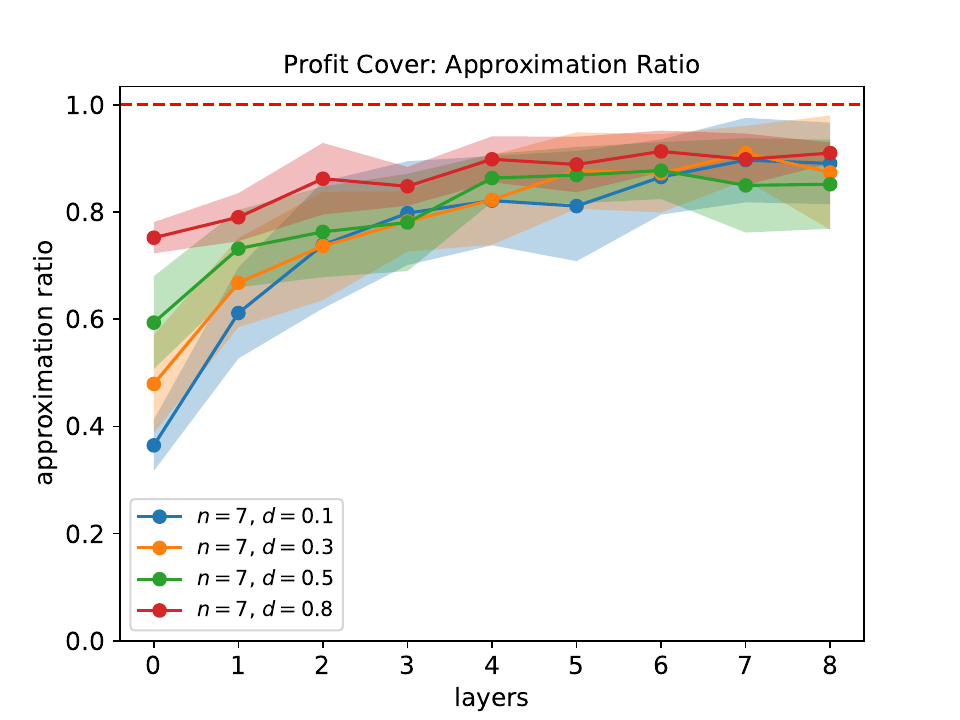}
        \caption{Approximation ratios for graphs with $n=7$}
        \label{fig:subfiga}
    \end{subfigure}
    \hfill
    \begin{subfigure}[b]{0.50\textwidth} 
        \centering
        \includegraphics[width=\textwidth]{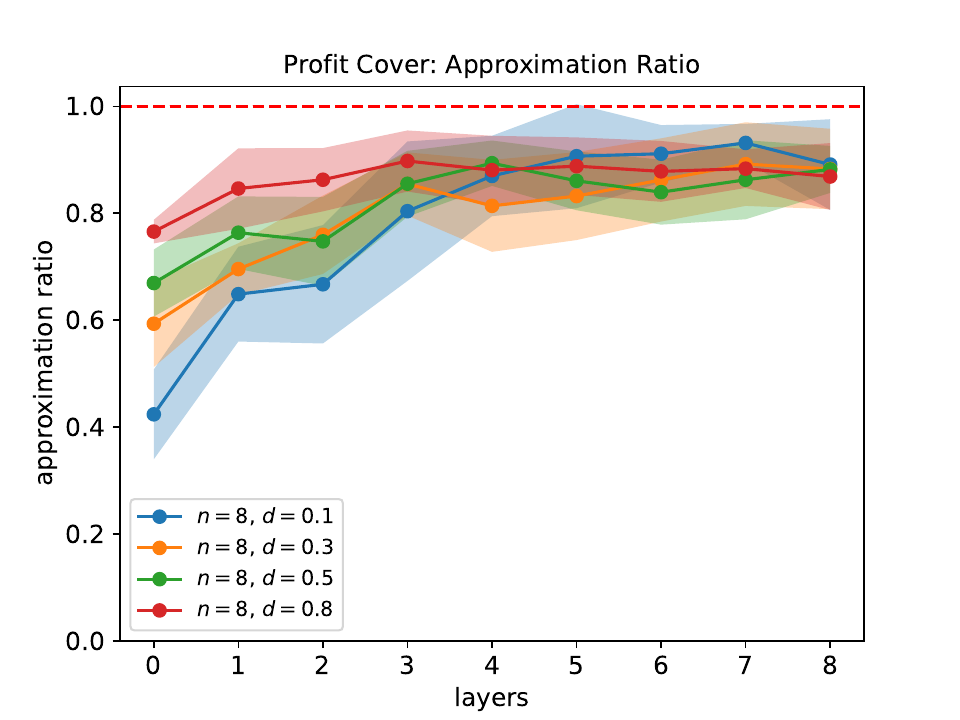}
        \caption{Approximation ratios for graphs with $n=8$}
        \label{fig:subfigb}
    \end{subfigure}
    \caption{Approximation ratios computed for \textit{\textsc{MaxPC}} on PennyLane. The rate of change of approximation ratio is higher for sparse graphs compared to denser graphs. Denser graphs have more near-optimal solutions compared to sparse graphs, therefore, start off with a high approximation ratios but have a smaller rate of change of the approximation ratio.}
    \label{fig:approx-ratios}
\end{figure}
\begin{figure}[!htb]

        \centering
        \includegraphics[width=\columnwidth]{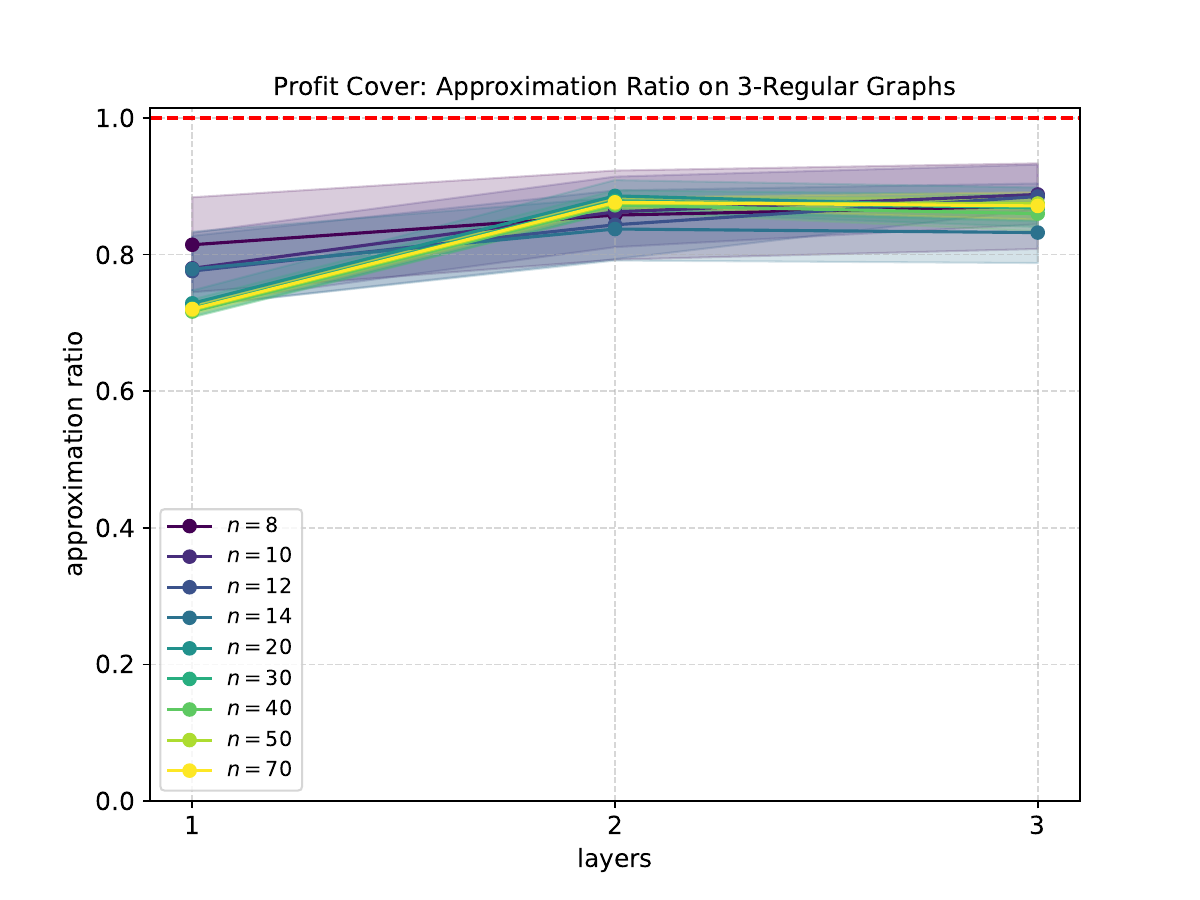}
        
    \caption{Approximation ratios computed for \textit{\textsc{MaxPC}} on PennyLane for $n=\{8, 10, 12, 14\}$ and QTensor for $n=\{20, 30, 40, 50, 70\}$ showing scalability of our approach.}
    \label{fig:approx-ratios-2}
\end{figure}

We can see the same trend between  results obtained in the summed probabilities, as depicted in Fig.\ref{fig:vc-is-cl-summed-probs}, and the approximation ratios obtained (i.e., the increase up to layer 2 and the dip in solution quality between layers 3 and 5). 

Fig.~\ref{fig:penalty-plot} and Fig.~\ref{fig:vc-both} show the effect of using different penalties for the constrained formulation of the \textit{\textsc{MinVC}} problem. In both figures, it is clear that not one penalty value is better than another for finding optimal solutions using QAOA. The performance of penalties is also not consistent over layers; for example, in the top left corner sub-figure of Fig.~\ref{fig:penalty-plot}, the summed optimal probability of QAOA with penalty parameters $A=6, B=2$ does not perform well on average in the initial layers, but improves after layer 5. Additionally, the sub-figure on the bottom left corner shows that QAOA with penalty parameters of $A=6, B=2$ does not perform well for graphs with edge density of $d=0.5$.

Fig.~\ref{fig:cost-qtensor} shows the cost function evaluations of \textit{\textsc{MaxPC}} and \textit{\textsc{MinVC}} for a single random  3-regular graph with $n \in \{30, 40, 50, 70\}$. Fig~\ref{fig:vc-is-20-sparse-5layers} shows \textit{\textsc{MaxPC}}, \textit{\textsc{MinVC}}, \textit{\textsc{MaxPI}} and \textit{\textsc{MaxIS}} results for a single sparse graph (edge density, $d=0.1$) We observe fluctuations in the early parts of the optimization for all four problems due to the adaptive learning rate of RMSProp. The red dotted line represents the exact solution to the corresponding problems. While QAOA tends to get stuck in local minima, it approaches the exact solution more closely in \textit{\textsc{MaxPC}} compared to \textit{\textsc{MinVC}}, due to the presence of more optimal and near-optimal solutions in \textit{\textsc{MaxPC}} than \textit{\textsc{MinVC}}. We further remark that in the case of constrained optimization (\textit{\textsc{MinVC}}) while the expectation value indicates how close the current solution is to the optimal one in terms of the objective, it does not give information about whether the constraints are satisfied. However, this is not the case for the unconstrained \textit{\textsc{MaxPC}}. For example, the top left graph of Fig.~\ref{fig:cost-qtensor} shows that the expectation value achieved with \textit{\textsc{MaxPC}} is in between $-24$ and $-26$, indicating that the optimal profit is at least $24$ for this particular graph. 
Given a total of $|E|$ edges for the graph, we can conclude that a vertex cover of size at most $|VC| \leq |E| - 24$ exists for this graph.

We demonstrate the scalability of our approach by comparing the approximation ratio achieved on \textit{\textsc{MaxPC}} for $p = 1, 2, 3$ layers, which shows that our approach can achieve an approximation ratio greater than $0.8$ for graphs of varying sizes, ranging from $n = 8$ to $n = 70$.
\section{Conclusion and Future Work}\label{sec:future}
Solution landscapes of constrained optimization problems include infeasible solutions, which can make finding optimum or good feasible solutions challenging. To formulate QUBOs of constrained optimization problems, one introduces penalty-parameters that---by themselves---can be challenging to optimize and add an extra hyper-parameter to QAOA. 

We demonstrate how the relaxation of constraints can eliminate the infeasible solutions from the cost function landscape. In Alg.~\ref{alg:add_vertices}, we describe how solutions to the relaxation of  \textit{\textsc{MinVC}}, i.e.~to {\sc MaxPC}, can be translated into feasible vertex cover solutions \textit{of the same quality}.  

 \textit{\textsc{MinVC}} comes with applications including conflict resolution and scheduling~\cite{Stege2000}; \textit{\textsc{MaxPC}} adds in applicability~\cite{van2008tractable} making the investigation of scalable solutions of this particular NP-complete problem important. 
With \textit{\textsc{MaxPC}}, we add an interesting problem to the pool of unconstrained optimization problems. For the particular case of solving \textit{\textsc{MinVC}}, \textit{\textsc{MaxPC}} opens the door to a penalty-free QUBO or Ising model of the problem for using the vanilla QAOA, reducing the complexity of parameter optimization. Note that for the experiments reported here, we did not employ a problem-specific mixer or a special state preparation (such as those used in the Quantum Alternating Operator Ansatz~\cite{hadfield2019quantum}). These approaches make use of additional gates and qubit interactions that come at a high cost for near-term quantum computing hardware.  

In this article, we evaluated \textit{\textsc{MaxPC}} on PennyLane and Argonne QTensor for small graphs up to eight layers as well as large sparse graphs with less than four QAOA layers. We used probabilities, summed optimal and near-optimal probabilities, expectation values, and approximation ratios as metrics for evaluation. While the summed probability metric has been used sparingly in the literature~\cite{Saleem2020}, we believe it provides valuable insights into the quality of the optimal and near-optimal solutions generated by our approach (cf. Section~\ref{subsec:near-opt}). We think it deserves more attention and should be considered a key measure of performance.
We plan to apply our constraint relaxation methodology to a broader family of constrained combinatorial optimization problems.
We will also investigate the cost function landscape of \textit{\textsc{MaxPC}} and contrast different classical optimizers, warm-starting, additional layers, mixer variants, as well as other QAOA variants~\cite{chalupnik2022augmenting, egger2021warm, chandarana2022digitized, herrman2022multi} to achieve better solution quality on different kinds of graphs. For Argonne QTensor, our objective is to incorporate a collection of constrained optimization problems suitable for lightcone simulation. By integrating these problems into the QTensor ecosystem, researchers can test and validate their methods for constrained optimization problems at scale. We would also like to run profit cover experiments on actual quantum hardware to evaluate the viability of the profit formulation through the use of OpenQAOA~\cite{sharma2022openqaoa}, a multi-backend SDK for quantum optimization that allows for seamless circuit execution for variational quantum algorithms (specifically for QAOA and its variants). Lastly, we would like to investigate more ways of ``unconstraining" constrained problems. The relationship between \textit{\textsc{MinVC}} and its relaxation \textit{\textsc{MaxPC}} in particular, and the relaxation of constraints of constrained combinatorial optimization problems in general, constitutes significant progress within the quantum community as it shows new ways of thinking about constrained problems through a different lens. We posit  that our approach offers  more effective ways for solving many constrained optimization problems and as such opens an avenue toward quantum utility.

\bibliographystyle{IEEEtran}
\bibliography{IEEEabrv, qaoafriendly}

\section{Appendix}
\label{sec:appendix}
\subsection{Classical results}
\label{sec:classical-results}
The challenges of coping with intractability have been extensively studied in theoretical computer science~\cite{karp1972reducibility, downey1995fixed, vazirani2001approximation, hromkovivc2013algorithmics}.

While there is a $2$-approximation algorithm for \textit{\textsc{MinVC}} that is based on the idea of  a maximal matching, when assuming that the Unique Games Conjecture holds, \textit{\textsc{MinVC}} 
is hard to approximate within a factor of smaller than $2$~\cite{garey1979computers}. Generally, \textit{\textsc{MinVC}} is known to be APX-complete~\cite{papadimitriou1988optimization}.  
The best known approximation factor for \textit{\textsc{MinVC}} is $2-\Theta\left(1/\sqrt{\log{|V|}}\right)$~\cite{Karakostas2009}. 
%
 Moreover, the decision version\footnote{The decision version of \textit{\textsc{Minimum Vertex Cover}} of \textit{\textsc{Minimum Vertex Cover}} asks, when given $G=(V,E)$ and a positive integer $k$, whether or not there exists a subset $\textit{VC}\subseteq V$ of size $|VC|\leq k$.} is fixed-parameter tractable when parameterized by the size of the vertex cover to be determined~\cite{DFS99}; its fastest known fixed-parameter algorithm has a time complexity of $O^*\!\left(1.25284^k\right)$ \cite{harris2024faster} where $k$ is the size of the desired solution. 

\textit{\textsc{MaxIS}} is APX-hard and Poly-APX-complete for general graphs~\cite{bazgan2005completeness}.
Approximation algorithms with constant approximation factor exist for bounded-degree graphs~\cite{berman1994approximating, de2023stable}. 
In contrast to the fixed-parameter tractability result for the decision version of \textit{\textsc{Minimum Vertex Cover}}, the decision version of \textit{\textsc{Maximum Independent Set}} for general graphs is $W[1]$-complete when parameterized by the size of the independent set to be determined, but fixed-parameter tractable for planar graphs~\cite{downey2013parameterized}. Inapproximability results state that \textit{\textsc{MaxIS}} cannot be approximated within a factor of $n^{1-\epsilon}$ for any $\epsilon > 0$, unless P = NP~\cite{haastad1999clique, zuckerman2006linear}. Exact algorithms for \textit{\textsc{MaxIS}} include Robson’s $O(2^{n/4})$ algorithm~\cite{robson2001finding} and more recently a $O(1.1664^n)$-algorithm  by Xiao and Nagamochi~\cite{xiao2017exact}.  

Due to their close relationship, the above findings for \textit{\textsc{Maximum Independent Set}}  transfer to  \textit{\textsc{Maximum Clique}}.

\subsection{Related Work on QAOA}
\label{sec:related-work}
%
Several QAOA components are tunable, including initialization, ansatz construction, parameter choice, and the classical optimization method. These components significantly influence the algorithm's efficacy and can be tailored to specific problem instances for improved performance.  Hadfield et al.~\cite{hadfield2019quantum} extend QAOA by introducing the Quantum Alternating Operator Ansatz (QAO-Ansatz), which alternates between more general families of unitary operators. This can potentially narrow the algorithm's focus to a more useful set of states. The QAO-Ansatz can be used to guarantee that the state of the circuit never leaves the set of feasible states. However, the circuit is composed of complicated circuitry. For example, multi-controlled Toffoli gates are often used in the ansatz, which are challenging to execute depending on the connectivity of qubits on a quantum computer~\cite{he2017decompositions}. 
Golden et al.~\cite{golden2023numerical} compare the performance of different variations of QAOA on three problems, Max Bisection, Max $k$-Vertex Cover, and $k$-Densest Subgraph, using different kinds of mixers and show a possibility of achieving a super-polynomial advantage over Grover's unstructured search. The problem Max $k$-Vertex Cover has also been studied by~\cite{Cook_2020, Bartschi_2020}, and while it is related to \textit{\textsc{MinVC}}, is not as complex, due to a deterministic quantum algorithm to prepare Dicke states~\cite{bartschi2019deterministic} that allows for exploration only in the feasible solution subspace of Max $k$-Vertex Cover. These problems are either unconstrained or constrained by the Hamming weight and, therefore, do not need penalties in the cost function.  Pelofske et al.~\cite{pelofske2019solving, pelofske2023solving} propose a recursive classical decomposition of large problems (as a pre-processing and pruning step) such that they can be solved on quantum annealers. Saleem et al.~\cite{SaleemTTS23, tomesh2023divide} consider penalty-term approaches, QAO-Ansatz, and introduce a new ansatz variant that adapts to the quantum resources available for the \textit{\textsc{MaxIS}}. 

Other notable variants of QAOA includes QAOA+~\cite{chalupnik2022augmenting}, WS-QAOA~\cite{egger2021warm}, Digitized Counterdiabatic QAOA~\cite{chandarana2022digitized}, and multi-angle QAOA~\cite{herrman2022multi}. For QAOA+, the authors propose adding a problem-independent layer of parameterized $Z\!Z$-gates and the parameterized $X$-gates (as a mixer) to improve the approximation ratio of \textit{\textsc{MaxCut}}, which is an unconstrained optimization problem. In WS-QAOA, warm starting methods are used to prepare an initial state that corresponds to the solution of a relaxation of the portfolio optimization problem. In multi-angle QAOA, a parameter is assigned to every element of the ansatz to improve the approximation ratio achieved for \textit{\textsc{MaxCut}}. Digitized Counterdiabatic QAOA appends a layer (and therefore, an additional variational parameter) to perform counterdiabatic driving to converge to the optimal solution faster (thereby requiring a shorter circuit depth). 

In literature, the performance of QAOA has been typically studied for Erd\H{o}s-R\'enyi Random Graphs with different probabilities of edge connections~\cite{golden2023numerical, SaleemTTS23} as well as bounded-degree graphs (specifically random 3-regular graphs~\cite{shaydulin2023qaoawith, lykov2021performance}). Herrman et al.~\cite{herrman2021impact} perform a detailed analysis  across different graph structures (up to eight nodes)  to understand QAOA \textit{\textsc{MaxCut}} performance of up to three layers. 

\subsection{Near-optimal analysis on Independent Sets}
\begin{figure*}[!bth]
\centering
  \begin{tabular}{@{}cccc@{}}
    \includegraphics[width=0.33\textwidth]{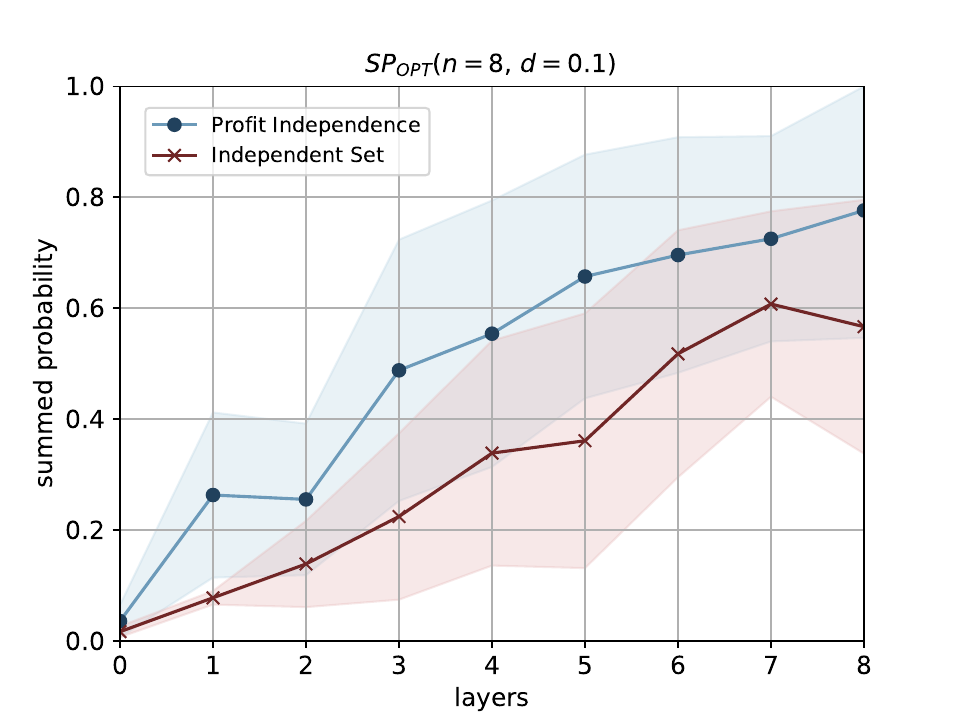} &
    \includegraphics[width=.33\textwidth]{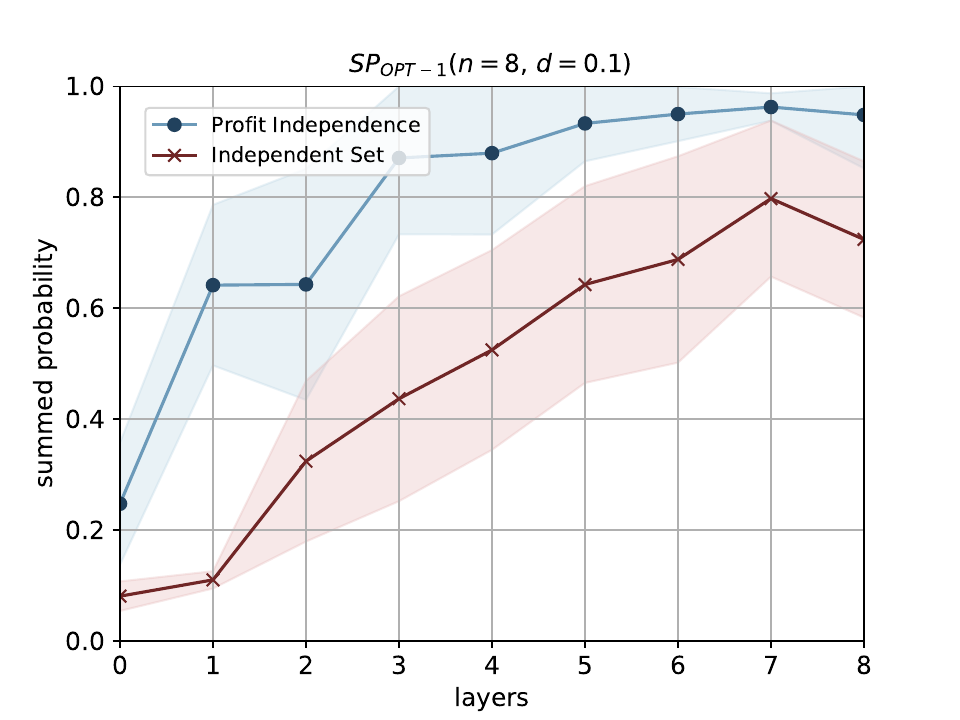} &
    \includegraphics[width=.33\textwidth]{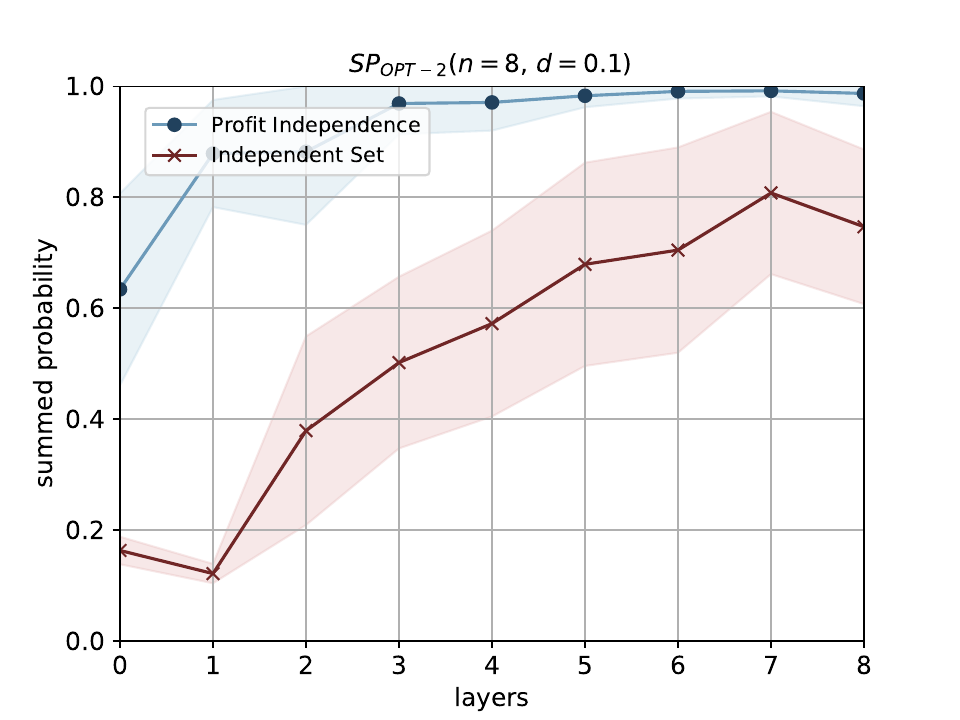} &
    \\
    \includegraphics[width=0.33\textwidth]{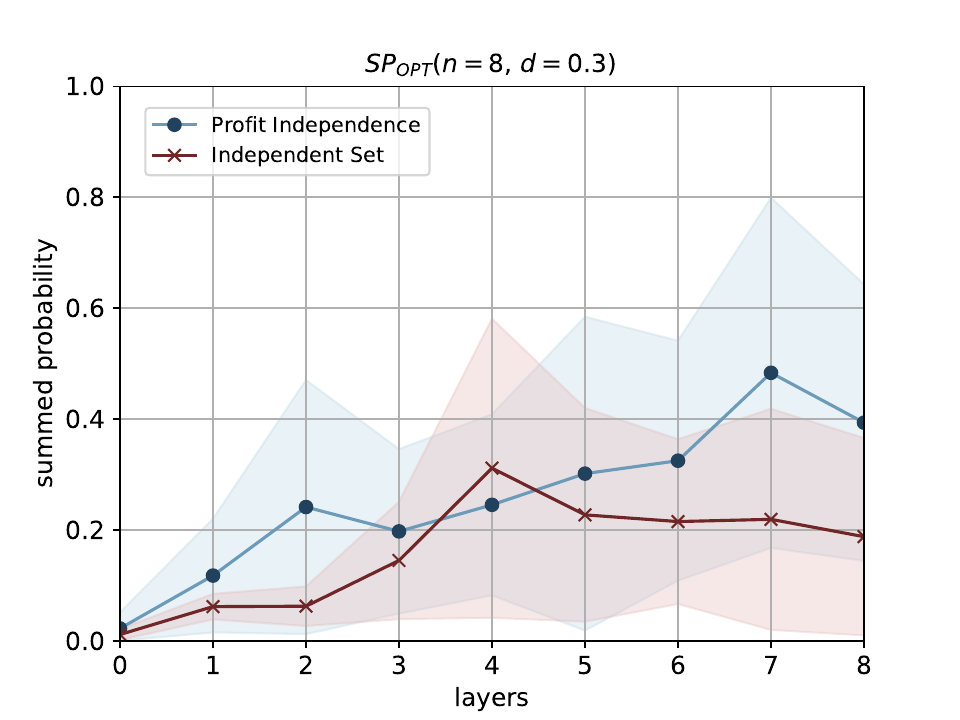} &
    \includegraphics[width=.33\textwidth]{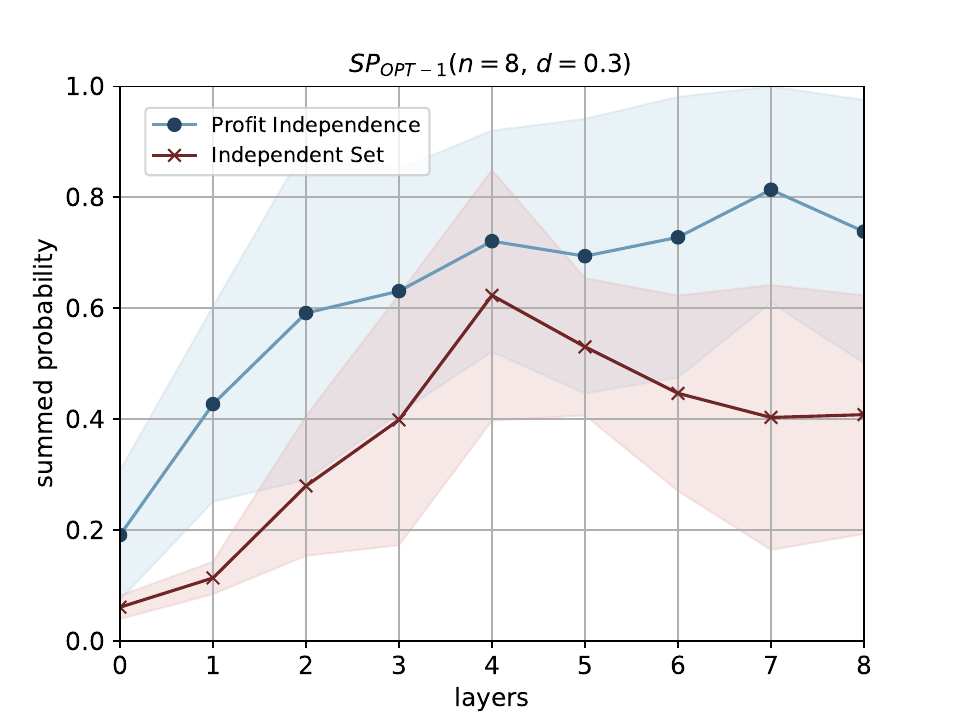} &
    \includegraphics[width=.33\textwidth]{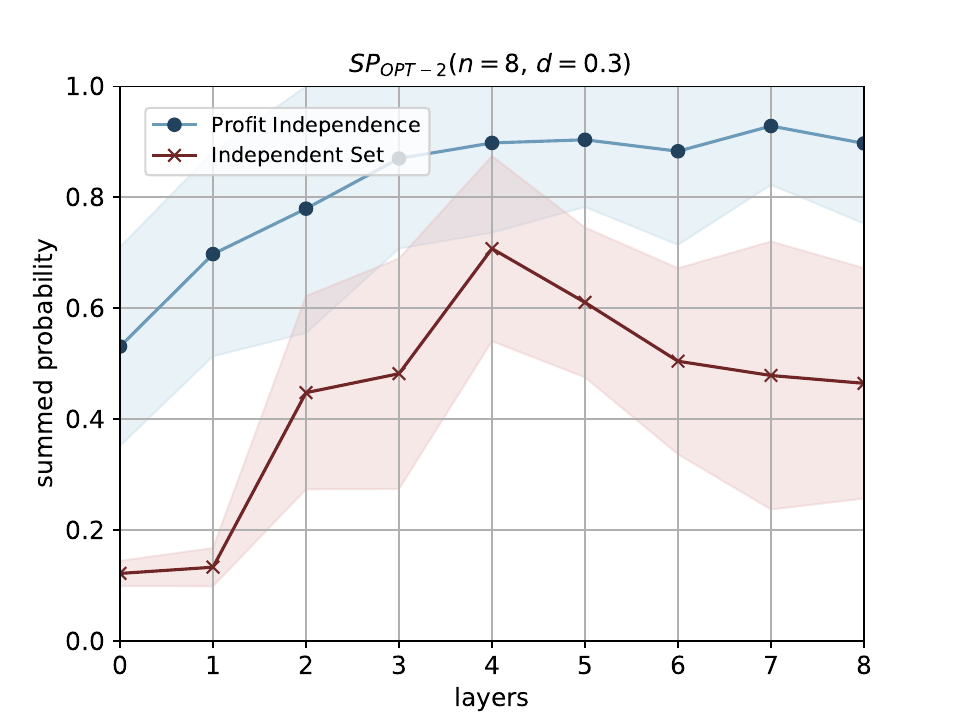} &
    \\
    \includegraphics[width=0.33\textwidth]{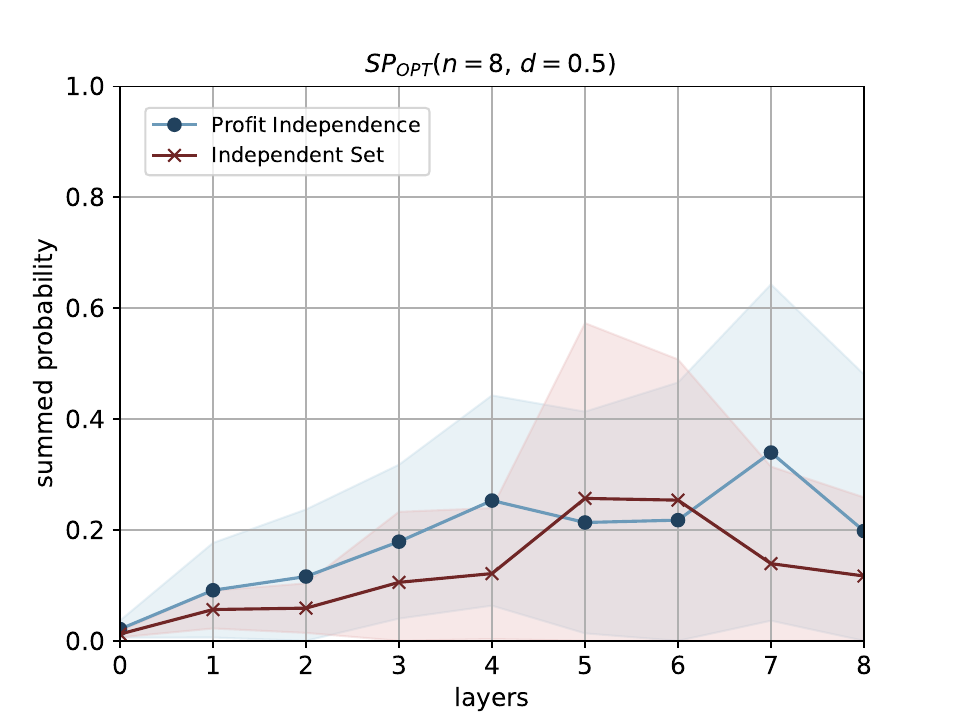} &
    \includegraphics[width=.33\textwidth]{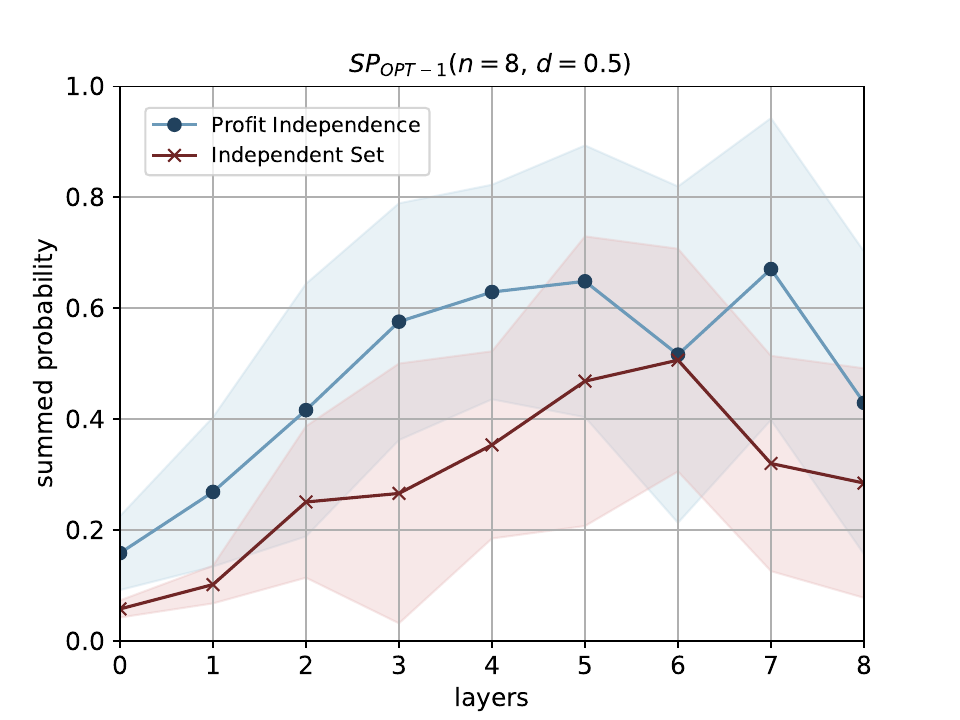} &
    \includegraphics[width=.33\textwidth]{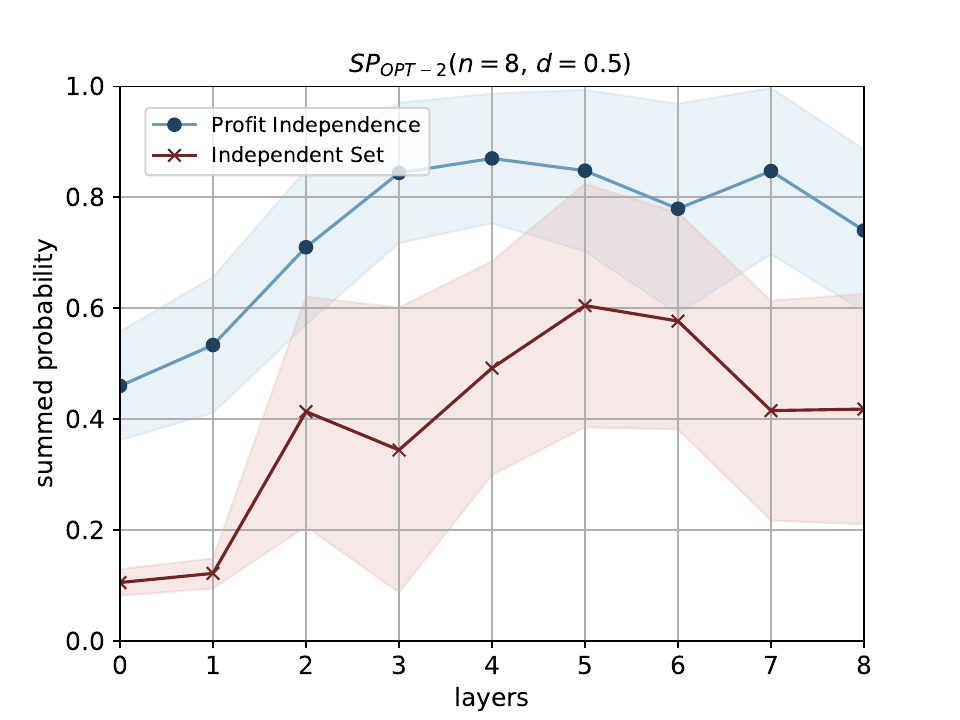} &
    \\
    \includegraphics[width=0.33\textwidth]{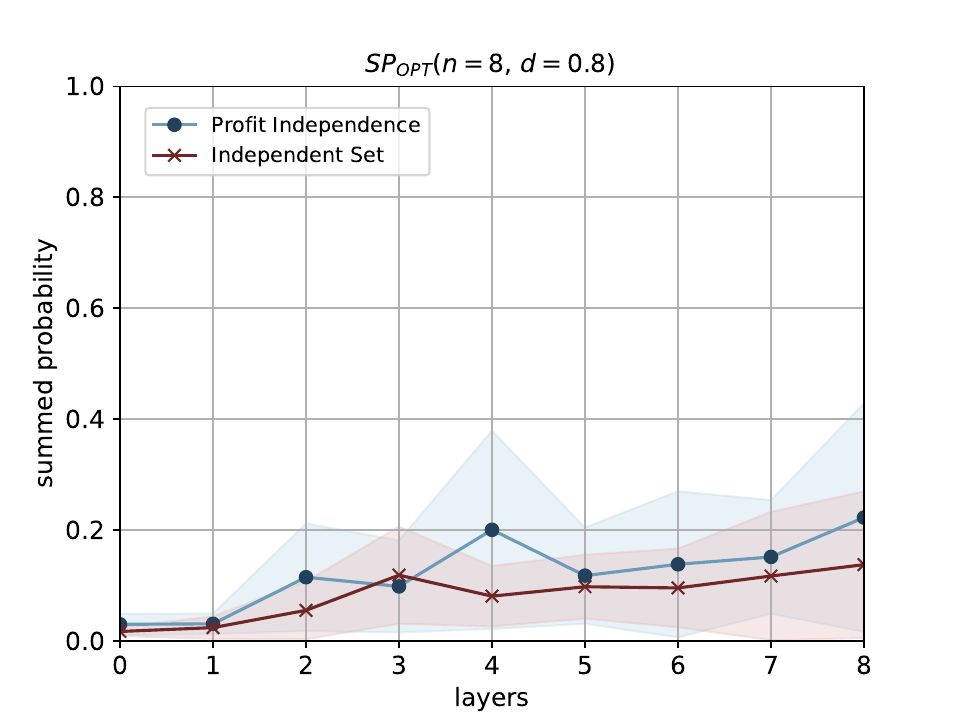} &
    \includegraphics[width=.33\textwidth]{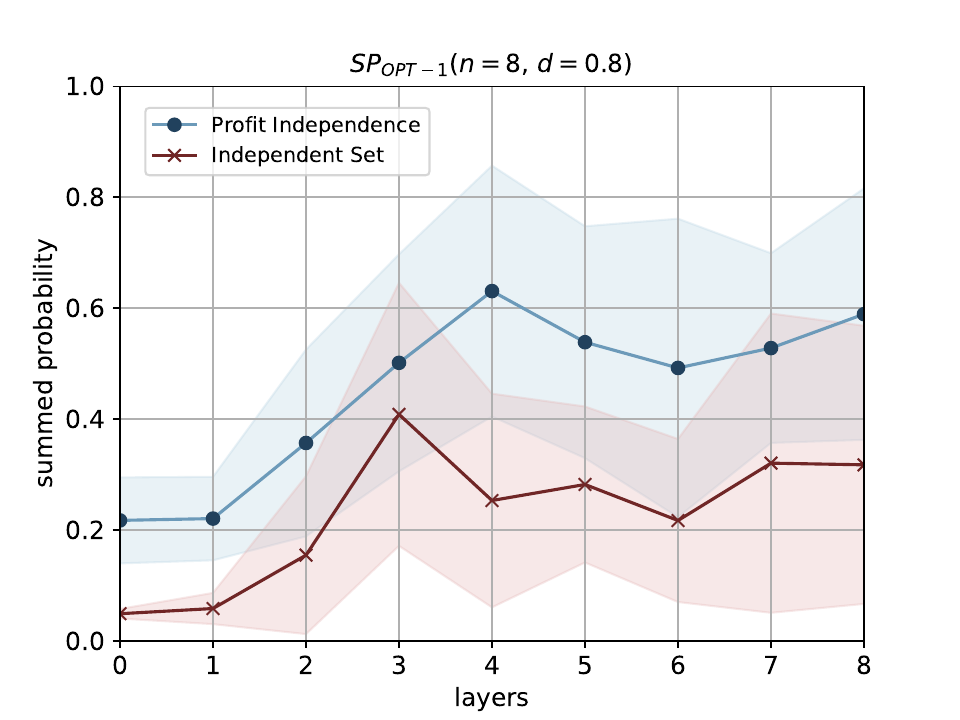} &
    \includegraphics[width=.33\textwidth]{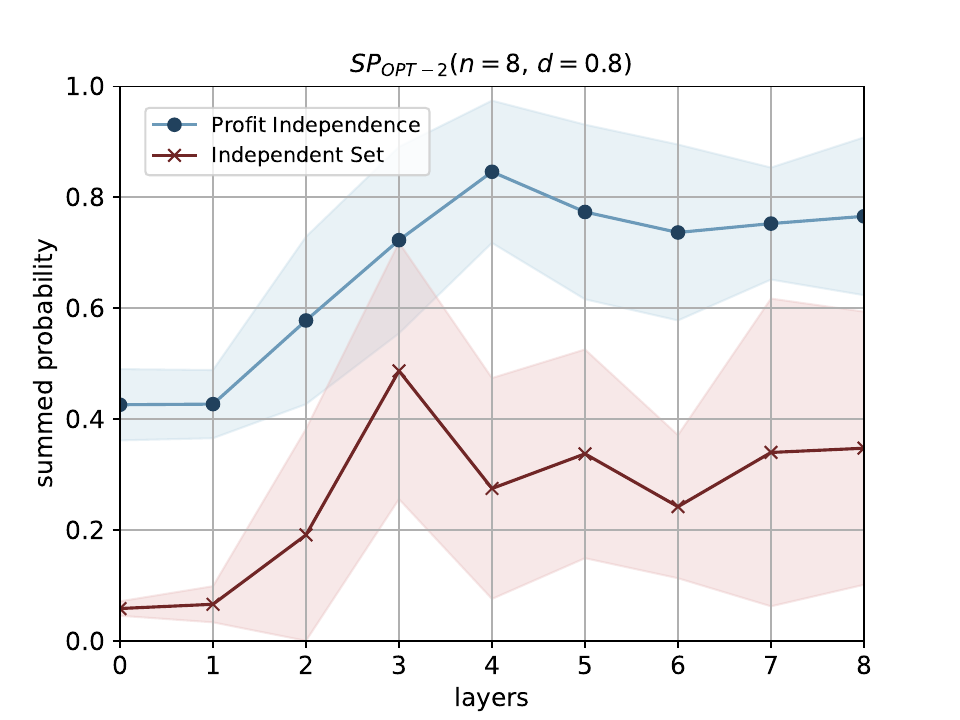} &
    \\
  \end{tabular}
  \caption{Probabilities of optimal and near-optimal solutions obtained over eight layers for \textit{\textsc{MaxIS} and \textit{\textsc{MaxPI}}}. The first column shows the summed probability of all the optimal solutions averaged over ten graphs with $n = 8$. The second column depicts the summed probability of obtaining the optimal solution and the second best solution. The third column indicates the summed probability of the optimal, second best and the third best solutions. Each row indicates a different edge density }
  \label{fig:sp-pennylane-ispi}
\end{figure*}
Fig. \ref{fig:sp-pennylane-ispi} is analogous to Fig. \ref{fig:vc-opt-summed-probs} (which shows summed optimal and near-optimal probability results for vertex cover), but applied to the \textit{\textsc{MaxIS}} and \textit{\textsc{MaxPI}} problems.

\subsection{Comparison to Penalty-term formulation}

\begin{figure*}[!htbp]
\centering
  \begin{tabular}{@{}cccc@{}}
    \includegraphics[width=0.5\textwidth]{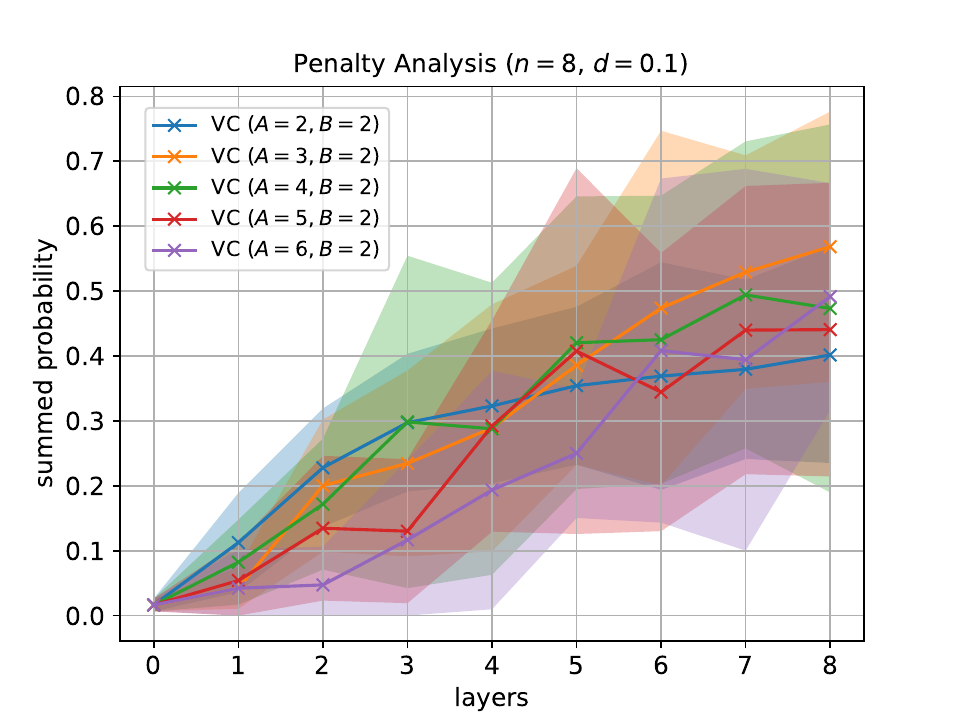} &
    \includegraphics[width=.5\textwidth]{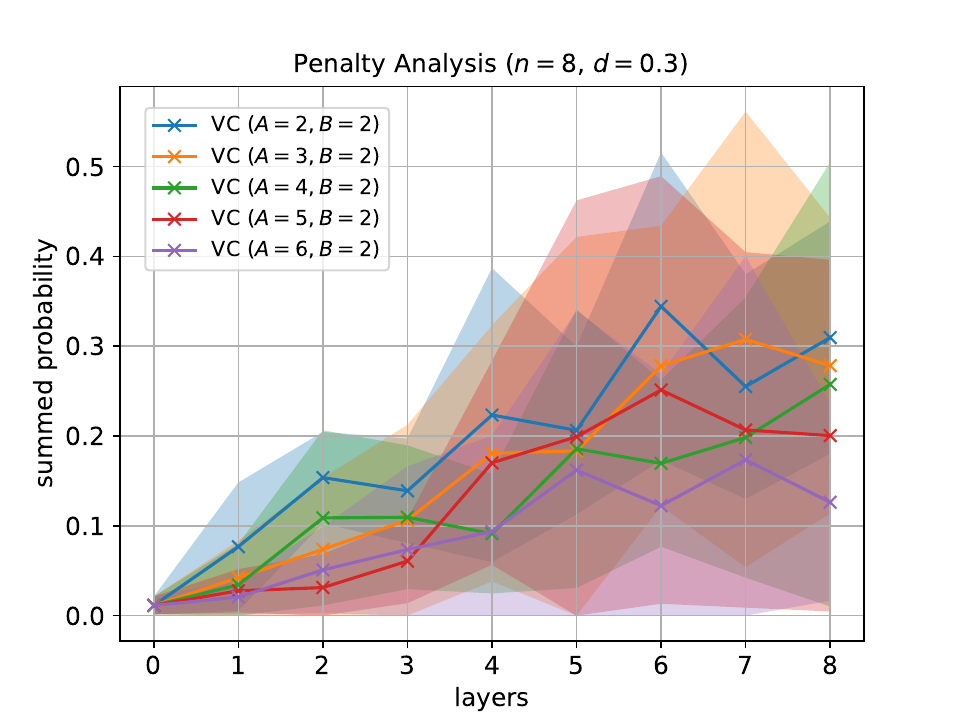} \\
    \includegraphics[width=0.5\textwidth]{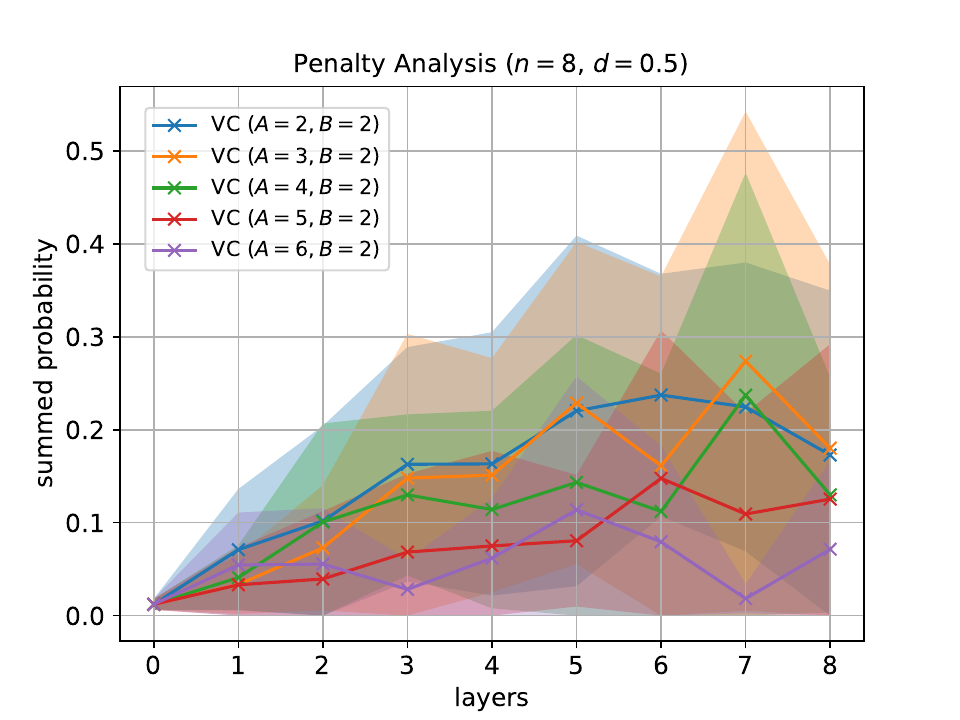} &
    \includegraphics[width=.5\textwidth]{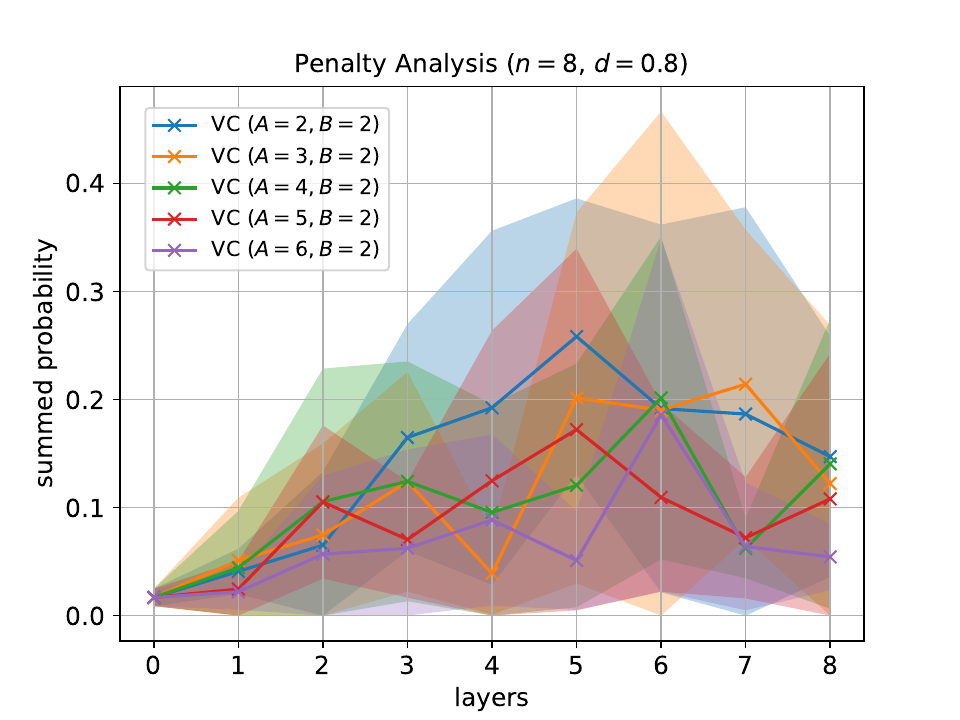} \\
  \end{tabular}
  \caption{The plots above show summed probabilities for constrained optimization problems using penalty term formulations for $d\in \{0.1, 0.3, 0.5, 0.8\}$ for graphs with eight nodes averaged over ten graphs. It is evident that no single penalty value consistently outperforms the others across all graph types. The optimal penalty varies depending on the specific characteristics of each graph. Even within a single class of graphs, certain penalties may perform better for some graphs while being less effective for others.}
  \label{fig:penalty-plot}
\end{figure*}
Fig.~\ref{fig:penalty-plot} shows four summed probability plots for eight nodes with varying edge probabilities ($d\in \{0.1, 0.3, 0.5, 0.8\}$) and varying penalty parameters. Penalties are varied with different values for the term $A$, keeping $B = 2$ constant.

\section*{Acknowledgment}

We thank Faisal Abu-Khzam for the support in implementing classical vertex cover algorithms. 

\end{document}